\documentclass[11pt]{article}
\usepackage[utf8]{inputenc}
\usepackage{amsmath, bm}
\usepackage{bbm}
\usepackage{amssymb}
\usepackage{amsthm}
\usepackage{algorithm}
\usepackage{algpseudocode}
\usepackage{comment}
\usepackage{fullpage}
\usepackage{xcolor}
\usepackage[hidelinks]{hyperref}
\usepackage{cleveref}
\usepackage{IEEEtrantools}
\usepackage{graphicx}
\usepackage{subcaption}
\usepackage[affil-it]{authblk}
\usepackage{cancel}

\usepackage{natbib}
\bibliographystyle{abbrvnat}

\DeclareMathOperator*{\argmin}{arg\,min}

\DeclareMathOperator*{\as}{\overset{\text{\normalfont{a.s.}}}{\longrightarrow}}

\def\FD{\operatorname{FD}}
\def\DFD{\operatorname{DFD}}

\def\S{\mathcal{S}}
\def\P{\mathbb{P}}

\def\E{\mathbb{E}}

\def\H{\mathcal{H}}

\def\R{\mathbb{R}}
\def\X{\mathcal{X}}

\def\N{\mathbb{N}}
\def\D{\nabla}

\newtheorem{assumption}{Assumption}
\newtheorem{sassumption}{Standing Assumption}
\newtheorem{theorem}{Theorem}

\newtheorem{lemma}{Lemma}

\newtheorem{proposition}{Proposition}

\newtheorem{definition}{Definition}

\newtheorem{remark}{Remark}
\newtheorem{example}{Example}

\newenvironment{talign*}
 {\csname align*\endcsname}
 {\endalign}


\title{Generalised Bayesian Inference for Discrete Intractable Likelihood}

\author[1]{Takuo Matsubara}
\author[2]{Jeremias Knoblauch}
\author[2,4]{Fran\c{c}ois-Xavier Briol}
\author[3,4]{Chris. J. Oates}

\affil[1]{The University of Edinburgh, UK}
\affil[2]{University College London, UK}
\affil[3]{Newcastle University, UK}
\affil[4]{The Alan Turing Institute, UK}

\date{}

\begin{document}

\maketitle

\begin{abstract}
    Discrete state spaces represent a major computational challenge to statistical inference, since the computation of normalisation constants requires summation over large or possibly infinite sets, which can be impractical.
    This paper addresses this computational challenge through the development of a novel generalised Bayesian inference procedure suitable for discrete intractable likelihood.
    Inspired by recent methodological advances for continuous data, the main idea is to update beliefs about model parameters using a discrete Fisher divergence, in lieu of the problematic intractable likelihood.
    The result is a generalised posterior that can be sampled from using standard computational tools, such as Markov chain Monte Carlo, circumventing the intractable normalising constant.
    The statistical properties of the generalised posterior are analysed, with sufficient conditions for posterior consistency and asymptotic normality established.
    In addition, a novel and general approach to calibration of generalised posteriors is proposed.
    Applications are presented on lattice models for discrete spatial data and on multivariate models for count data, where in each case the methodology facilitates generalised Bayesian inference at low computational cost.
\end{abstract}


\section{Introduction}

This paper focuses on statistical models for data defined on a discrete set $\X$, whose probability mass function $p_{\theta}$ involves a parameter $\theta$ to be inferred.
In this setting, there is an urgent need for computational methodology applicable to models that are \emph{intractable}, in the specific sense that 
\begin{align}
p_{\theta}(\bm{x}) = \frac{\tilde{p}_{\theta}(\bm{x})}{ Z_{\theta}}, \qquad 
Z_{\theta} := \sum_{\bm{x} \in \X} \tilde{p}_{\theta}(\bm{x}),  \label{eq: def intractable}
\end{align}
where the positive function $\tilde{p}_\theta$ is straightforward to evaluate but direct computation of the normalising constant $Z_\theta \in (0,\infty)$ is impractical.
This situation is ubiquitous in the discrete data context, since it is often impractical to compute a sum over a large or infinite discrete set. To limit scope, this paper considers generalised Bayesian inference where, to date, several computational approaches have been proposed.
These approaches, which are recalled in \Cref{sec:background}, are mainly applicable in settings where it is possible to simulate data $\bm{x}$, conditional on the parameter $\theta$.
However, in several of the most scientifically important instances of \eqref{eq: def intractable}, exact (or even approximate) simulation from the model is not practical.

Important examples of statistical models exhibiting these computational challenges include lattice models of spatial data 
\citep{moores2020scalable}, statistical models for graph-valued data \citep{lusher2013exponential}, and statistical models for multivariate count data \citep{inouye2017review}.
In each case, the normalising constant involves summation over a set whose cardinality is exponential in the dimension of the lattice,  in the size of the nodal set of the graph, or even infinite, rendering direct computation and simulation of data intractable in general.

To circumvent both computation of the normalising constant and simulation from the statistical model, \citet{Matsubara2021} proposed a generalised Bayesian posterior, called \emph{KSD-Bayes}, which is based on a Stein discrepancy.
The resulting generalised posterior is consistent and asymptotically normal, and thus shares many of the properties of the standard Bayesian posterior whilst admitting a form which does not require the computation of an intractable normalisation constant.  
However, a major limitation of KSD-Bayes is the dependence of the generalised posterior on a user-specified symmetric positive definite function, called a kernel, which determines precisely how beliefs are updated.
In continuous domains, such as $\R^d$, there are several natural choices of kernel available, and their associated Stein discrepancies have been well-studied \citep{Anastasiou2021}.
However, in discrete domains there are often no natural choices of kernel, or when natural choices exists \citep[such as a heat kernel;][]{chung1997spectral} they can be computationally impractical.

This paper presents \emph{DFD-Bayes}, the first generalised Bayesian inference method tailored to inference with discrete intractable likelihood. The approach is based on a novel discrete version of the Fisher divergence which, in contrast to KSD-Bayes, does not require a kernel to be specified.
Further, the DFD-Bayes posterior has computational complexity $O(nd)$, where $n$ is the number of data and $d$ is the data dimension, which compares favourably to the KSD-Bayes computational complexity of $O(n^2 d)$.
The DFD-Bayes methodology is supported by asymptotic guarantees, presented in \Cref{sec: methods}, and empirical results, in \Cref{sec: experiment}, demonstrate state-of-the-art performance in the applications considered. 
Before setting out the proposed methodology, we first we review related work in \Cref{sec:background}.

\section{Background} \label{sec:background}

The aim of this section is to briefly review existing Bayesian and generalised Bayesian methodology for intractable statistical models, extending the discussion to include both continuous and discrete data.
Frequentist estimation for intractable models is not discussed \citep[we refer the reader to e.g.][]{Hyvarinen2005}.

\paragraph{Approximate Likelihood}

Faced with an intractable model, a pragmatic approach is simply to employ standard Bayesian inference with a tractable approximation to the likelihood \citep[e.g.][]{Bhattacharyya2019}.
A classical example of approximate likelihood is the pseudolikelihood of \cite{Besag1974}, which replaces the joint probability mass function of the data with a product of conditional probability mass functions, each of which is sufficiently low-dimensional (or otherwise tractable enough) to permit normalising constants to be computed.
Generalisations of this approach are sometimes referred to as composite likelihood \citep{varin2011overview}.
These approximations are usually model-specific, and analysis of the approximation error may be difficult in general \citep{lindsay2011issues}.

\paragraph{Simulation-Based Methods}
One class of intractable statistical models that has been explored in detail are models for which it is possible to simulate data $\bm{x}$ conditional on the parameter $\theta$.
A well-known approach to inference in this class of models is the exchange algorithm of \citet{Moller2006} and \citet{Murray2006}, which constructs a Markov chain on an extended state space for which the standard Bayesian posterior occurs as a marginal.
Simulation of the Markov chain requires both exact simulation from the statistical model and evaluation of $\tilde{p}_\theta(\bm{x})$.
Further methodological development has been focused on removing the requirement to evaluate $\tilde{p}_\theta(\bm{x})$, with approximate Bayesian computation \citep{Marin2012}, Bayesian synthetic likelihood \citep{Price2018}, MMD-Bayes \citep{Cherief-Abdellatif2019,pacchiardi2021generalized} and the posterior boostrap \citep{Dellaporta2022} emerging as likelihood-free methods, which require only that data can be simulated. 
Unfortunately, for many statistical models of discrete data, exact simulation \citep[the state-of-the-art being e.g.][]{propp1998coupling} from the model is impractical.

\paragraph{Markov Chain-Based Methods}
Another pragmatic approach is to substitute exact simulations with approximate simulations, such as obtained from a Markov chain.
This idea works in specific instances; see the review of \citet{park2018bayesian}.
The main drawback of these approaches, as far as this paper is concerned, is that they require the design of a rapidly mixing Markov chain on a possibly large (or infinite) discrete set.
As such, these methods require bespoke implementations for each class of statistical model considered, and for many models of interest appropriate Markov chains have yet to be developed.
Thus Markov chain-based methods do not represent a general solution to discrete intractable likelihood.

\paragraph{Russian Roulette}
The pseudo-marginal approach justifies replacing the intractable likelihood $p_\theta(\bm{x})$ with a positive unbiased estimator $\hat{p}_\theta(\bm{x})$ of the likelihood in the context of a Metropolis--Hastings algorithm \citep{andrieu2009pseudo}.
The practical difficulty of this approach is to construct a positive unbiased estimator.
\citet{Lyne2015} proposed the Russian roulette estimator for intractable statistical models, a simulation technique from the physics literature 
which involves random truncation of the sum (or of an integral in the continuous context) defining the normalising constant.
The Russian roulette estimator is unbiased but is not guaranteed to be positive, meaning that post hoc re-weighting of the Markov chain sample path is required.
The ergodicity of Russian roulette has not, to the best of our knowledge, been theoretically studied.
Further, the mixing time of the Markov chain is known to be sensitive to the variance of $\hat{p}_\theta(\bm{x})$, which can be large for estimators based on random truncation (especially when there is no clear a priori ordering for the summands, which can occur in the discrete context).
As such, the pseudo-marginal approach does not at present represent a general computational solution to intractable likelihood.

\paragraph{Generalised Bayesian Inference} 
Motivated by the absence of general computational methodology for intractable likelihood, \cite{Matsubara2021} proposed a solution called KSD-Bayes.
The setting for this approach is the nascent field of generalised Bayesian inference.
Given a prior $\pi(\theta)$, a dataset $\{\bm{x}_i\}_{i=1}^n \subset \X$, and a constant $\beta > 0$, generalised Bayesian inference updates beliefs using a loss function $D_n(\theta)$, producing a generalised posterior
\begin{align}
	\pi_n^D(\theta) \propto \pi(\theta) \exp( - \beta D_n(\theta) ) . \label{eq: beta in gen bayes}
\end{align}
The standard posterior is recovered by the negative log-likelihood $D_n(\theta) = - \sum_{i=1}^{n} \log p_\theta(\bm{x}_i)$, while several alternative loss functions have been developed to confer robustness in settings where the statistical model is misspecified (see the survey in \citet{Bissiri2016} for the case of additive loss functions, and \citet{Knoblauch2019} for further generalisation). 
KSD-Bayes \citep{Matsubara2021}  is distinguished among existing generalised Bayesian inference methods by its applicability to statistical models involving an intractable normalising constant \citep[see also Section~4.2 of][]{Giummole2019}.
This was achieved by selecting $D_n(\theta)$ to be a Stein discrepancy between the statistical model $p_\theta$ and the empirical distribution of the dataset, which can be computed without the normalising constant.
Strikingly, a fully conjugate treatment of the continuous exponential family model, and a straight-forward treatment of the discrete exponential family model using Markov chain Monte Carlo, is possible using KSD-Bayes; this in principle provides a solution to many of the aforementioned instances of intractable likelihood.
However, the dependence of KSD-Bayes on a user-specified kernel renders the approach unattractive for discrete domains, where there are often no natural choices of kernel, or where natural choices\footnote{A natural choice is the heat kernel, whose origins lie in spectral graph theory \citep{chung1997spectral}.  However, computation of the heat kernel requires a $O(D^3)$ cost where $D = \text{card}(\X)$, which is often impractical.  
For example, the Ising model on a lattice $\X = \{0,1\}^d$ has $D = 2^d$, while the Conway--Maxwell--Poisson model of \Cref{subsec: CMP} has $D = \infty$, meaning approximation of the heat kernel would be required.} are computationally impractical.
Furthermore, the $O(n^2d)$ computational cost of KSD-Bayes is super-linear in the size of the dataset.

\vspace{5pt}

This paper presents general methodology for inferring the parameters of a intractable discrete statistical model.
The main idea is to employ a discrete Fisher divergence as a loss function in a generalised Bayesian inference context.
The resulting \emph{DFD-Bayes} method does not require a choice of kernel, enjoys theoretical guarantees, and can be computed at cost $O(nd)$ linear in the size of the dataset.
Full details are provided next.

\section{Methodology}
\label{sec: methods}

This section presents and analyses DFD-Bayes.
First, we present a novel discrete formulation of the Fisher divergence in \Cref{sec:discfd}.
DFD-Bayes is introduced in \Cref{sec:fdpos}, where posterior consistency and asymptotic normality are established.
\Cref{sec:calibration} presents a novel approach to calibration of generalised posteriors, which may be of independent interest.
Limitations of DFD-Bayes are discussed in \Cref{sec:limitation}.

\paragraph{Notation}
Denote by $\X$ a countable set in which data are contained, and by $\Theta$ the set of permitted values for the parameter $\theta$, where $\Theta$ is a Borel subset of $\R^v$ for some $v \in \N$.
Probability distributions on $\X$ are identified with their probability mass functions, with respect to the counting measure on $\X$.
The $i$-th coordinate of a function $f: \X \to \R^d$ is denoted by $f_i : \X \to \R$.
For a probability distribution $q$ on $\X$ and $d,p \in \N$, denote by $L^p(q , \R^d)$ the Lebesgue space of measurable functions $f: \X \to \R^d$ such that $ \sum_{i=1}^{d} \E_{X \sim q}[ | f_i(X) |^p ]  < \infty$, in which two elements $f, g \in L^p(q, \R^d)$ are identified if they are $p$-almost everywhere equal. 
The notation $\| \cdot \|$ indicates the Euclidean norm of $\R^m$, and will be applied also to matrices and tensors interpreted, respectively, as elements of $\R^{v \times v}$ and $\R^{v \times v \times v}$.
A Dirac measure at $x \in \X$ is denoted by $\delta_x$.

\subsection{A Discrete Fisher Divergence} \label{sec:discfd}

The Fisher divergence underpins several frequentist estimators for intractable statistical models, most notably score matching \citep{Hyvarinen2005}, and has been used in the context of Bayesian model selection \citep[e.g.][]{Dawid2015}.
It is classically defined for continuous domains; for
(sufficiently regular) densities $p$ and $q$ on $\R^d$, the Fisher divergence is $\operatorname{FD}(p \| q) = \E_{X \sim q}[ \| \nabla \log p(X) - \nabla \log q(X) \|^2 ]$
where $\nabla$ denotes the gradient operator in $\R^d$.
Its main advantage is that it can be computed without knowledge of the normalising constant\footnote{The Fisher divergence depends only on $\nabla \log p$, equal to the ratio $(\nabla p) / p$, meaning it is sufficient to know $p$ up to a normalising constant.} of $p$ and, furthermore, expectations with respect to $p$ are not required.
The Fisher divergence was extended to discrete domains in \citet{Lyu2009,xu2022generalized}.
However, existing work focuses on domains $\X$ of finite cardinality or one-dimensional models, and a technical contribution of this paper, which may be of independent interest, is to present an extension of Fisher divergence to certain discrete domains which may be a countably infinite set in multiple dimensions.
The extended divergence satisfies the requirements of a proper local scoring rule and thus complements existing scoring rule methodology developed in the finite domain context in \citet{dawid2012proper}. 

\begin{sassumption}
    Let $\X = S_1 \times \dots \times S_d$, where for each $i = 1,\dots,d$ there is an order isomorphism $S_i \cong I_i \subseteq \mathbb{Z}$, and $d \in \mathbb{N}$.
\end{sassumption}

\noindent 
This setting is general enough to include diverse data types, such as multivariate count data, or network data with a fixed vertex set.
For any set $S \cong I \subseteq \mathbb{Z}$, precisely one of the following must hold: (i) no smallest or largest elements of $S$ exist; (ii) both a smallest element, $s_{\min}$, and a largest element, $s_{\max}$, exist; (iii) only $s_{\min}$ exists; (iv) only $s_{\max}$ exists.
Without loss of generality, we will identify the case (iv) with (iii) by reversing the ordering of $S$.
In addition, it will be useful to extend the domains $S_i$ to include an additional state (not part of the ordering), denoted $\star$, and to this end we let $S_i^\star = S_i \cup \{\star\}$ and $\X^\star = S_1^\star \times \dots \times S_d^\star$.
A function $h: \X \to \R$ extends to a function $h : \X^\star \to \R$ by setting $h(\bm{x}) = 0$ whenever any of the coordinates of $\bm{x}$ are equal to $\star$.	

\begin{definition} \label{asmp:oss}
	Let $S \cong I \subseteq \mathbb{Z}$.
	For consecutive elements $r < s < t$ in $S$ we let $s^- := r$ and $s^+ := t$.
	If both $s_{\min}$ and $s_{\max}$ exist, we let $s_{\min}^- := s_{\max}$ and $s_{\max}^+ := s_{\min}$ or, if only $s_{\min}$ exists, we let $s_{\min}^{-} := \star$ and $\star^+ = s_{\min}$.
	For $\bm{x} = (x_1, \dots, x_d) \in \X$, define $\bm{x}^{i +} := (x_1, \dots, x_i^{+}, \dots, x_d)$ and $\bm{x}^{i -} := (x_1, \dots, x_i^{-}, \dots, x_d)$.
\end{definition}

\noindent Simply put, this ensures that each element $s$ has both a preceding and proceeding element, so that increments and decrements are well-defined.
The above structure can be exploited to define an operator for $\X$ that is analogous to the gradient operators for $\R^d$:

\begin{definition} \label{def: diffop}
	For $h: \X \to \R$, define the backward difference operator by
	\begin{align*}
		\D^- h(\bm{x}) := \big[ h(\bm{x}) - h(\bm{x}^{1-}), ~\cdots~, ~h(\bm{x}) - h(\bm{x}^{d-}) \big]^\top \in \R^d .
	\end{align*}
\end{definition}

\noindent 
Based on \Cref{asmp:oss,def: diffop}, we can construct a divergence applicable to discrete domains $\X$, which we term a \emph{discrete Fisher divergence}.
Recall that values of $f \in L^p(q, \R^d)$ in a measure zero domain of $q$ i.e.~$\{ \bm{x} \in \X \mid q(\bm{x}) = 0 \}$ are arbitrary and not involved in the integral with respect to $q$ \citep[Remark 1.37, p.29]{Rudin1987}. In what follows, it is sufficient for functions $( \nabla^- p ) / p, (\nabla^- q) / q \in L^2( q , \R^d)$ to be well-defined in the support of $q$.

\begin{definition} \label{def: dif_sd}
	Let $p$ and $q$ be probability distributions on $\X$, such that $( \nabla^- p ) / p, (\nabla^- q) / q \in L^2( q , \R^d)$.
	The \emph{discrete Fisher divergence} is defined as 
	\begin{align}
		\DFD(p \| q) := \E_{X \sim q}\left[ \left\| \frac{\nabla^-p(X)}{p(X)} - \frac{\nabla^-q(X)}{q(X)} \right\|^2 \right] .  \label{eq: DFD def}
	\end{align}
\end{definition}

\noindent
The choice of a Euclidean norm in \eqref{eq: DFD def} is not critical and other norms could be employed, but for expository purposes the standard Euclidean norm will be used throughout.
\Cref{prop: dif_sd_1} justifies the name `divergence' and offers an alternative, computable formula for \eqref{eq: DFD def}.

\begin{proposition} \label{prop: dif_sd_1}
	The discrete Fisher divergence satisfies $\DFD(p \| q) \geq 0$ for any $p, q$, with equality if and only if $p = q$.
	Furthermore, if $p(\bm{x}^{j+}) > 0$ for all $\bm{x}$ and $j = 1, \dots, d$ in the support of $q$, it admits the following alternative formula
        \begin{align}
		\DFD(p \| q) & = \E_{X \sim q}\left[ \sum_{j=1}^{d} \left( \frac{p(X^{j-})}{p(X)} \right)^2 - 2 \left( \frac{p(X)}{p(X^{j+})} \right) \right] + C(q) \label{eq:ssd} , 
	\end{align}
	where the term $C(q) := \E_{X \sim q}[ \sum_{j=1}^{d} 1 + ( 1 - q(X^{j-}) / q(X) )^2 ]$ is $p$-independent.
\end{proposition}

\noindent The proof is provided in \Cref{apx: proof_dif_sd_1}.
Note that $\DFD(p \| q)$ can be computed without the normalising constant of $p$, analogously to $\FD(p \| q)$ in $\R^d$.
All models $p_\theta$ used in this paper are positive on $\X$, for which the assumption $p(\bm{x}^+) > 0$ in Proposition 1 is automatically satisfied.
From \Cref{prop: dif_sd_1}, the discrete Fisher divergence between a model $p_\theta$ and an empirical distribution $p_n = \frac{1}{n} \sum_{i=1}^n \delta_{\bm{x}_i}$ corresponding to data $\{ \bm{x}_i \}_{i=1}^{n}$, is computed as 
\begin{align}
	\DFD(p_\theta \| p_n) \; 
 	\stackrel{\theta}{=} \; & \frac{1}{n} \sum_{i=1}^{n} \sum_{j=1}^{d} \left( \frac{p_\theta(\bm{x}_i^{j-})}{ p_\theta(\bm{x}_i)} \right)^2 - 2 \left(\frac{p_\theta(\bm{x}_i)}{ p_\theta(\bm{x}_i^{j+})}\right)  \label{eq:dfd_model}
\end{align}
where $\stackrel{\theta}{=}$ indicates equality up to an additive, $\theta$-independent constant.
In contrast to the continuous Fisher divergence, the $\theta$-independent constant $C(p_n) = \frac{1}{n} \sum_{i=1}^{n} \sum_{j=1}^{d} 1 + ( 1 - p_n(\bm{x}_i^{j-}) / p_n(\bm{x}_i) )^2$ is well-defined for an empirical density $p_n$ in the discrete Fisher divergence.

\begin{remark}
\label{rem: cost}
    The computational cost associated with evaluation of \eqref{eq:dfd_model} is $O(nd)$, which improves on the $O(n^2 d)$ cost of kernel Stein discrepancy.
	Furthermore, if $\X$ is a finite set and count data are provided, indicating the number of times each of the elements of $\X$ occurred, then the complexity of \eqref{eq:dfd_model} reduces to $O(d)$, independent of the size of the dataset.
\end{remark}

\begin{remark}
    The discrete Fisher divergence can also be interpreted as a Stein discrepancy constructed based on an $L^2$-ball Stein set \citep{Barp2019}.
    This implies that discrete Fisher divergence is stronger than popular kernel Stein discrepancies; see \Cref{app: ksd connection}.
\end{remark}

\subsection{A Generalised Posterior} \label{sec:fdpos}

We are now in a position to present DFD-Bayes.

\begin{definition}[DFD-Bayes] \label{def: dksd-bayes}
    Given a prior distribution $\pi$ on $\Theta$, a statistical model $p_\theta: \X \to (0,\infty)$ parametrised by $\theta \in \Theta$, and a dataset $\{\bm{x}_i\}_{i=1}^n$, the \emph{DFD-Bayes posterior} is 
	\begin{align}
		\pi_n^D(\theta) & \propto \pi(\theta) \exp \left(- \beta n \DFD(p_{\theta} \| p_n) \right)  ,
		\label{eq: DFD posterior}
	\end{align}
	where $\beta \in (0,\infty)$ is a constant to be specified.
\end{definition}

\noindent This is clearly a special case of the generalised posterior in \eqref{eq: beta in gen bayes} with $D_n(\theta) = n \DFD(p_\theta\|p_n)$. 
The $\theta$-independent constant $C(p_n)$ of $\DFD(p_\theta \| p_n)$ will be cancelled out by normalisation of the DFD-Bayes posterior. It is thus sufficient to use \eqref{eq:dfd_model} in place of $\DFD(p_\theta \| p_n)$ for computation.
The role of $n$ in \eqref{eq: DFD posterior} is to ensure correct scaling of the generalised posterior as $n \rightarrow \infty$ limit, while the appropriate choice of $\beta$ is crucial in calibrating the coverage of the generalised posterior at finite $n$, and will be discussed in \Cref{sec:calibration}.
\Cref{app: walk-through_example} contains a detailed worked example of the DFD-Bayes posterior and a comparison with other posteriors using simple tractable models.
For the moment, two important properties are highlighted:

\begin{remark}
In contrast to standard posteriors for intractable likelihoods, the DFD-Bayes posterior is directly amenable to standard Markov chain Monte Carlo because \eqref{eq:dfd_model} is independent of the intractable constant, with the cost of evaluating \eqref{eq: DFD posterior} as low as $O(d)$ (c.f. \Cref{rem: cost}).
\end{remark}

\begin{remark}
In contrast to KSD-Bayes, DFD-Bayes is invariant to order-preserving transformations of the data. 
Note that the discrete Fisher divergence upper bounds the kernel Stein discrepancies; see \Cref{subsec: KSD vs DFD}.
\end{remark}
The asymptotic behaviour of the standard Bayesian posterior is well-understood, with sufficient conditions for posterior consistency and asymptotic normality providing frequentist justification for Bayesian inference in the large data limit.
Our attention now turns to establishing analogous conditions for DFD-Bayes.
\begin{sassumption} \label{asmp:standing}
	The data $\{\bm{x}_i\}_{i=1}^n$ consist of independent samples from a probability distribution $p$ on $\X$.
	The distribution $p$ and the statistical model $p_\theta$ for these data satisfy $(\nabla^- p) / p, (\nabla^- p_\theta) / p_\theta \in L^2( p, \R^d)$, for all $\theta \in \Theta$.
\end{sassumption}

\noindent The setting of independent data is broad enough to contain important examples of discrete intractable likelihood, including the models studied in \Cref{sec: experiment}.
The other assumption simply ensures that $\DFD(p_{\theta} \| p_n)$ is well-defined, due to \Cref{prop: dif_sd_1}.
In this setting, a natural first requirement is that the statistical model is identifiable in the large data limit:

\begin{assumption} \label{asmp:minimiser}
    There exists a unique minimiser $\theta_*$ of $\theta \mapsto \DFD(p_{\theta} \| p)$ and there exists a sequence $\{ \theta_n \}_{n=1}^{\infty}$ such that $\theta_n$ minimises $\theta \mapsto \DFD(p_\theta \| p_n)$ almost surely for all $n$ sufficiently large.
    Further, there exists a bounded convex open set $U \subseteq \Theta$ such that $\theta_* \in U$ and $\theta_n \in U$ almost surely for all $n$ sufficiently large.
\end{assumption}

\noindent 
The existence of $U$ in \Cref{asmp:minimiser} essentially implies that for large enough $n$, we can restrict our theoretical analysis to a bounded subset $U \subseteq \Theta$.
This is not restrictive: it can be enforced by re-parameterising the model $p_{\theta}$ so that its new parameter space is bounded and convex.\footnote{For example, we can re-parameterise any unbounded parameter $\kappa$ through the logistic function and define the invertible transformation $\theta = (1 + e^{-\kappa})^{-1} \in [0,1]$.}
The existence of $\{ \theta_n \}_{n=1}^{\infty}$ and $\theta_*$ is more difficult to assess in practice, since the true data generating distribution $p$ is unknown.
That being said, assuming their existence is common in the asymptotic analysis of Bayesian procedures \citep[see e.g.][Section~10]{Vaart1998}.
It is worth highlighting that \Cref{asmp:minimiser} does not require the model family $\{ p_\theta \mid \theta \in \Theta \}$ to contain  $p$, which is in contrast to the assumptions needed for the classical asymptotic normality result \citep[Theorem~10.1]{Vaart1998}.
On the other hand, if the model family $\{ p_\theta \mid \theta \in \Theta \}$ contains $p$ uniquely, existence of $\theta_*$ is immediate since the discrete Fisher divergence is a divergence and hence $\DFD(p_{\theta} \| p) = 0$ if and only if $p_{\theta} = p$. 

Our second main requirement is a technical condition on the derivatives and moments of the model, to ensure that the asymptotic limit is well-defined.
It is helpful to introduce the shorthand $r_{j-}(\bm{x}, \theta) := p_{\theta}(\bm{x}^{j-}) / p_{\theta}(\bm{x})$.
For a function $g: \Theta \to \R$, let $\nabla_\theta^2 g(\theta) \in \R^{v \times v}$ with entries $ \partial_i \partial_j  g(\theta)$, and let $\nabla_\theta^3 g(\theta) \in \R^{v \times v \times v}$ with entries $ \partial_i  \partial_j  \partial_k  g(\theta)$.
\begin{assumption} \label{asmp:derivative}
	Assume that $\theta \mapsto p_{\theta}(\bm{x})$ is three times continuously differentiable in $U$ for any $\bm{x} \in \X$, and
	\begin{align*}
	    \E_{X \sim p} \left[ \sup_{\theta \in U} \| \nabla_{\theta}^{s} r_{j-}(X^{j+}, \theta) \| \right] < \infty \qquad \text{ and } \qquad \E_{X \sim p} \left[ \sup_{\theta \in U} \| \nabla_{\theta}^{s} ( r_{j-}(X, \theta)^2 ) \| \right] < \infty
	\end{align*}
	for all $j = 1, \dots, d$ and $s = 1, 2, 3$.
\end{assumption}

\noindent In contrast to \Cref{asmp:minimiser}, it is easier to verify \Cref{asmp:derivative}, as illustrated in \Cref{ex:exp_asmp}. 
It considers the exponential family, a large class of models which encompasses the models in our experiments in \Cref{sec: experiment}. 
For example, any model on a space $\X$ of finite cardinality is an exponential family model \cite[Ch.~2.2.2]{Amari2016}.
\begin{example}[Exponential Family] \label{ex:exp_asmp}
	Consider an exponential family model $p_\theta(\bm{x}) \propto \exp( \eta(\theta) \cdot T(\bm{x}) + b(\bm{x}) )$, where $\eta: \Theta \to \R^k$, $T: \X \to \R^k$ and $b: \X \to \R$ for some $k \in \N$.
	For this model, we have $r_{j-}(\bm{x}, \theta) = \exp( \eta(\theta) \cdot ( T(\bm{x}^{j-}) - T(\bm{x}) ) + b(\bm{x}^{j-}) - b(\bm{x}) )$.
	\Cref{asmp:derivative} is satisfied if, for $j = 1, \dots, d$, (i) $\| \eta(\theta) \|$ and $\| \nabla_{\theta}^{s} \eta(\theta) \|$ for $s = 1, 2, 3$ are bounded over $\theta \in U$, (ii) $\| T(\bm{x}^{j-}) - T(\bm{x}) \|$ is bounded over $\bm{x} \in \X$, and (iii) $\E_{X \sim p}[ \exp( b(X^{j-}) - b(X) )^2 ] < \infty$.
	The requirements (ii) and (iii) are immediate if $\X$ is a finite set.
\end{example}

\noindent The calculations that accompany \Cref{ex:exp_asmp} are provided in \Cref{apx:example}.
The following theorem establishes that both consistency and asymptotic normality hold.
The former implies that our generalised posterior concentrates around the population minimiser $\theta_*$ with probability 1 when $n \to \infty$.
The latter establishes that our generalised posterior is normal around $\theta_*$ in the same asymptotic limit.

\begin{theorem} \label{thm:bvm}
	Suppose Assumptions \ref{asmp:minimiser} and \ref{asmp:derivative} hold.
	Assume that the prior $\pi$ is positive and continuous at $\theta_*$.
	Let $B_\epsilon(\theta_*) := \{ \theta \in \Theta \mid \| \theta - \theta_* \|_2 < \epsilon \}$.
	Then for any $\epsilon > 0$,
	\begin{align}
		\int_{B_\epsilon(\theta_*) } \pi_n^{D}(\theta) \mathrm{d} \theta \as 1 \qquad \text{ as } n \rightarrow \infty.
	\end{align}
	Denote by $\widetilde{\pi}_n^{D}$ a density on $\R^d$ of a random variable $\sqrt{n} (\theta - \theta_n)$ for $\theta \sim \pi_n^{D}$.
	If $H_* := \beta \nabla_{\theta}^2 \DFD(p_\theta \| p) |_{\theta = \theta_*}$ is nonsingular, then
	\begin{align}
		\int_{\R^p} \left| \widetilde{\pi}_n^{D}(\theta) - \frac{1}{ \sqrt{ \text{\normalfont det}(2 \pi H_*^{-1})} } \exp\left( - \frac{1}{2} \theta \cdot H_* \theta \right) \right| \mathrm{d} \theta \to 0  \qquad \text{ as }  n \rightarrow \infty .
	\end{align}
\end{theorem}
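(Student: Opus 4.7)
The plan is to adapt the standard generalised-posterior Bernstein--von Mises scheme (cf.\ Matsubara et al.\ 2021 for KSD-Bayes) by combining a uniform strong law of large numbers on the closure $\overline{U}$ with a local Taylor expansion of the empirical DFD around $\theta_n$. Via \Cref{prop: dif_sd_1}, $\DFD(p_\theta \| p_n)$ and its $\theta$-derivatives are sample averages of functions built from $r_{j-}(\cdot, \theta)$ and $r_{j-}(\cdot,\theta)^2$; the envelopes in \Cref{asmp:derivative} for $s = 1,2,3$ supply exactly the integrable dominants needed for a uniform SLLN, yielding $\sup_{\theta \in \overline{U}} \| \nabla_{\theta}^{s} \DFD(p_{\theta} \| p_n) - \nabla_{\theta}^{s} \DFD(p_{\theta} \| p)\| \to 0$ almost surely for $s = 0,1,2,3$, and in particular $\theta_n \to \theta_*$ a.s.\ by uniqueness of the population minimiser.

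For the consistency statement, uniqueness of $\theta_*$ combined with this uniform convergence produces, for any $\epsilon > 0$, a constant $c_\epsilon > 0$ such that eventually $\DFD(p_\theta \| p_n) - \DFD(p_{\theta_n}\|p_n) \geq c_\epsilon$ uniformly on $U \setminus B_\epsilon(\theta_*)$. Since $U$ is bounded (\Cref{asmp:minimiser} permits such a reparameterisation) and $\pi$ is positive and continuous at $\theta_*$, the mass on $U \setminus B_\epsilon(\theta_*)$ is upper bounded by a constant multiple of $e^{-\beta n c_\epsilon}$, while the mass on a fixed small neighbourhood of $\theta_*$ is lower bounded by a positive constant. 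Dividing the two yields $\int_{B_\epsilon(\theta_*)} \pi_n^D(\theta) \mathrm{d}\theta \to 1$ almost surely.

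For the BvM statement, I would Taylor-expand $n \DFD(p_\theta \| p_n)$ around $\theta_n$, at which the empirical gradient vanishes by construction, and change variables $u = \sqrt{n}(\theta - \theta_n)$. Up to a $u$-independent normalising constant, the log-density of $\widetilde{\pi}_n^D$ then reads $\log \pi(\theta_n + u/\sqrt{n}) - \tfrac{1}{2} u^\top H_n u - \beta n^{-1/2} R_n(u)$, where $H_n := \beta \nabla_\theta^2 \DFD(p_\theta \| p_n)|_{\theta_n} \to H_*$ almost surely by the $s=2$ uniform convergence together with $\theta_n \to \theta_*$, and $R_n(u)$ is the third-order Taylor remainder whose coefficients are a.s.\ bounded on $\overline{U}$ by the $s=3$ envelope. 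On any fixed ball $\{\|u\| \leq M\}$ the integrand therefore converges pointwise to $Z_*^{-1} \exp(-\tfrac{1}{2} u^\top H_* u)$, and an application of Scheffé's lemma promotes this to the required $L^1$ convergence once tightness is in place.

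The main obstacle is exactly this tightness in the rescaled coordinate: the consistency step only controls the regime $\|u\| \geq \epsilon \sqrt{n}$, so one must also rule out escape into the intermediate annulus $M < \|u\| < \epsilon \sqrt{n}$. Here I would exploit nonsingularity of $H_*$ and uniform convergence of the Hessian to obtain a local quadratic lower bound $n \DFD(p_\theta \| p_n) - n \DFD(p_{\theta_n} \| p_n) \geq \tfrac{\eta n}{2} \|\theta - \theta_n\|^2$ on a fixed neighbourhood of $\theta_n$, which after the change of variables majorises $\widetilde{\pi}_n^D(u)$ by $(\sup \pi)\exp(-\beta \eta \|u\|^2/2)$ throughout the annulus. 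Gluing this Gaussian majorant on $\|u\| \leq \epsilon\sqrt{n}$ with the exponential suppression from the consistency argument beyond $\|u\| = \epsilon \sqrt{n}$ produces a uniformly integrable dominant, thereby converting the pointwise convergence into the stated $L^1$ convergence and completing the proof.
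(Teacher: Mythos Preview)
Your proposal is correct and follows essentially the same approach as the paper: the paper packages the argument as a general result (its \Cref{asmp:general_bvm}, which verifies the preconditions of Miller (2019), Theorem~4) and then checks conditions C1--C5 for the DFD loss, whereas you carry out the same uniform-SLLN $+$ Taylor-expansion $+$ separation/quadratic-domination program directly. The only packaging difference is that the paper obtains uniform convergence at level $s=0$ via pointwise SLLN combined with stochastic equicontinuity from the $s=1$ envelope (its \Cref{lem:uc_conv}), rather than from a level-$0$ envelope; otherwise the ingredients and their use coincide.
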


\noindent The proof of \Cref{thm:bvm} is provided in \Cref{apx: proof_bvm}.
The result was established using similar arguments from early work by \cite{hooker2014bayesian, Ghosh2016} and extended techniques of \cite{Miller2019, Matsubara2021}.

\subsection{A New Approach to Calibration of Generalised Posteriors} \label{sec:calibration}

The weight $\beta$ in \eqref{eq: beta in gen bayes} controls the scale of the generalised posterior, and the selection of an appropriate value for $\beta$ is critical to ensure the generalised posterior is calibrated.
The literature on this topic is under-developed, but two existing approaches stand out. 
The first approach was proposed in the recent review paper of \citet{Syring2019}. It consists of a new stochastic sequential update algorithm for choosing $\beta$, such that a $95\%$ highest posterior density region coincides with a $95\%$ confidence interval. Unfortunately, this approach leads to a large computational cost and is therefore often  impractical.
The second approach is due to \citet{Lyddon2018} and consists in setting $\beta$ such that the scale of the posterior's asymptotic covariance matrix coincides with that of a frequentist counterpart with correct coverage.
\citet{Matsubara2021} numerically showed that this approach is unstable when $n$ is not large enough or when $\theta$ is high dimensional.
In addition, the second approach does not take the prior $\pi$ into account, because it depends only the generalised posterior's asymptotic covariance matrix.

In order to remedy some of these issues, the present paper proposes a novel selection criterion for $\beta$ that can be viewed as a more analytically tractable alternative to \citet{Syring2019}.
This criterion is applicable to generalised posteriors beyond DFD-Bayes and may therefore be of independent interest. 
Our approach consists of two steps: (i) computing minimisers of $B$ ``bootstrapped" losses and (ii) estimating an appropriate value of $\beta$ using the closed-form expression in \Cref{thm:beta_choice}.
In contrast to \citet{Syring2019}, step (ii) is non-iterative and exact.
Additionally, computation of each minimiser in step (i) is embarrassingly parallel.
Relative to the approach of \citet{Lyddon2018}, the advantage of our method is that it does not rely on asymptotic quantities, takes the prior into account, and maintains numerical stability even if the parameter $\theta$ is high-dimensional.

To describe the method we first define the minimiser $\theta_n \in \argmin_{\theta \in \Theta} D_n(\theta)$, where $D_n$ is a loss function based on a dataset $\{ \bm{x}_i \}_{i=1}^{n}$. 
To make the dependence on $\beta$ explicit, we denote the posterior $\pi_n^D$ by $\pi_{n, \beta}^D$.
In step (i), bootstrap datasets $ \{\bm{x}_i^{(b)}\}_{i=1}^n$, $b = 1,\dots,B$, are generated by sampling each $\bm{x}_i^{(b)}$ uniformly with replacement from the original dataset.
Then, for each bootstrap dataset, we compute a minimiser $\theta^{(b)}_n = \argmin_{\theta \in \Theta} D_n^{(b)}(\theta)$, where the superscript indicates that $D_n^{(b)}$ is based on the $b^{\text{th}}$ bootstrap dataset. 
This leads to an empirical measure $\delta_\theta^B = \frac{1}{B} \sum_{b=1}^B \delta (\theta_n^{(b)})$ which approximates the sampling distribution of the estimator $\theta_n$. 
In step (ii), we choose $\beta$ to minimise a statistical divergence between $\pi_{n, \beta}^{D}$ and $\delta_\theta^B$.
However, this is not straight-forward, since the majority of statistical divergences (e.g. Kullback--Liebler divergence) require the normalising constant of $\pi_{n,\beta}^D$ for every $\beta$.
Interestingly, this is the same computational challenge posed by intractable likelihood.
Our proposal is therefore to employ a divergence that circumvents computational of the normalisation constant; here we minimise the score matching loss in the continuous domain $\Theta$ \citep{Hyvarinen2005}:
\begin{align}
	\beta_* \in \argmin_{\beta > 0} \frac{1}{n} \sum_{b=1}^{B} \big\| \nabla \log \pi_{n, \beta}^D\big( \theta_n^{(b)} \big) \big\|^2 + 2 \operatorname{Tr}\big( \nabla^2 \log \pi_{n, \beta}^D\big( \theta_n^{(b)} \big) \big) . 
	\label{eq: beta star main text}
\end{align}
This leads to an explicit score-matching estimator for $\beta$, circumventing intractability of \eqref{eq: beta in gen bayes}: 
\begin{theorem} \label{thm:beta_choice}
    Consider a generalised posterior $\pi_{n, \beta}^D$  with twice differentiable loss function $D_n: \Theta \to \R$.
    Suppose that there exists at least one $\theta_n^{(b)}$ s.t.~$\nabla_\theta D_n(\theta_n^{(b)}) \ne 0$ and that $\sum_{b=1}^{B} \nabla_\theta D_n(\theta_n^{(b)}) \cdot \nabla_\theta \log \pi(\theta_n^{(b)}) + \operatorname{Tr}( \nabla_\theta^2 D_n(\theta_n^{(b)}) ) > 0$.
    Then $\beta_*$ in \eqref{eq: beta star main text} is unique, with
    \begin{align}
        \beta_* = \frac{ \sum_{b=1}^{B} \nabla_\theta D_n(\theta_n^{(b)}) \cdot \nabla_\theta \log \pi(\theta_n^{(b)}) + \operatorname{Tr}( \nabla_\theta^2 D_n(\theta_n^{(b)}) ) }{ \sum_{b=1}^{B} \| \nabla_\theta D_n(\theta_n^{(b)}) \|^2 } > 0 . \label{eq:beta_optimal}
    \end{align}
\end{theorem}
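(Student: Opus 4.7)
The plan is to exploit the translation-invariance of the score-matching objective with respect to the normalising constant, turning the problem of minimising $J(\beta)$ (the right-hand side of \eqref{eq: beta star main text}) into an explicit one-dimensional convex quadratic optimisation in $\beta$.

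First I would write $\log \pi_{n,\beta}^D(\theta) = \log \pi(\theta) - \beta D_n(\theta) - \log Z_{n,\beta}$, where $Z_{n,\beta}$ denotes the (typically intractable) normalising constant of the generalised posterior. The key observation is that $Z_{n,\beta}$ is independent of $\theta$, so differentiating with respect to $\theta$ yields
\begin{align*}
\nabla_\theta \log \pi_{n,\beta}^D(\theta) &= \nabla_\theta \log \pi(\theta) - \beta \, \nabla_\theta D_n(\theta),\\
\operatorname{Tr}(\nabla_\theta^2 \log \pi_{n,\beta}^D(\theta)) &= \operatorname{Tr}(\nabla_\theta^2 \log \pi(\theta)) - \beta \operatorname{Tr}(\nabla_\theta^2 D_n(\theta)),
\end{align*}
i.e.\ both quantities are computable without knowing $Z_{n,\beta}$. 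This is precisely the property that motivates using a Fisher-divergence-based criterion to calibrate $\beta$.

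Next I would substitute these expressions into the objective $J(\beta)$. Evaluating at each bootstrap point $\theta_n^{(b)}$ and expanding $\|\nabla_\theta \log \pi(\theta_n^{(b)}) - \beta \nabla_\theta D_n(\theta_n^{(b)})\|^2$, I obtain (after dropping the factor $1/n$, which does not affect the argmin)
\begin{equation*}
J(\beta) \;\stackrel{\beta}{=}\; \beta^2 \sum_{b=1}^B \|\nabla_\theta D_n(\theta_n^{(b)})\|^2 \; - \; 2\beta \sum_{b=1}^B \Bigl[\nabla_\theta D_n(\theta_n^{(b)}) \cdot \nabla_\theta \log \pi(\theta_n^{(b)}) + \operatorname{Tr}(\nabla_\theta^2 D_n(\theta_n^{(b)}))\Bigr] + C,
\end{equation*}
where $C$ collects all $\beta$-independent contributions (including the prior Laplacian terms and $\|\nabla_\theta \log \pi(\theta_n^{(b)})\|^2$). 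This is an explicit quadratic in $\beta$.

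Finally I would conclude by differentiating in $\beta$, setting the derivative to zero, and solving to obtain \eqref{eq:beta_optimal}. Uniqueness follows because the quadratic coefficient $\sum_b \|\nabla_\theta D_n(\theta_n^{(b)})\|^2$ is strictly positive by the hypothesis that at least one $\nabla_\theta D_n(\theta_n^{(b)}) \neq 0$, making $J$ strictly convex on $\R$; positivity $\beta_*>0$ is immediate from the second hypothesis, which says exactly that the numerator of \eqref{eq:beta_optimal} is positive, and this further guarantees that the unconstrained minimiser coincides with the constrained minimiser on $(0,\infty)$. There is no real obstacle here beyond correctly tracking that the intractable $\log Z_{n,\beta}$ drops out upon differentiating in $\theta$; everything else is a routine one-variable least-squares calculation.
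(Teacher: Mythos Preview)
Your proposal is correct and follows essentially the same approach as the paper's own proof: both compute the score and Laplacian of $\log\pi_{n,\beta}^D$ (noting the normalising constant drops out), expand the objective into a quadratic $a\beta^2-2b\beta+C$, and read off the minimiser $\beta_*=b/a$ with positivity following from the two hypotheses. The only cosmetic difference is that the paper completes the square whereas you differentiate and invoke strict convexity; your explicit remark that the unconstrained minimiser therefore coincides with the constrained one on $(0,\infty)$ is a small clarification the paper leaves implicit.
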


\noindent The proof is provided in \Cref{apx: proof_betaclib}.
The condition in \Cref{thm:beta_choice} directly implies existence and positivity of \eqref{eq:beta_optimal}.
However, in practice, computing \eqref{eq:beta_optimal} and verifying the existence and positivity directly
is strikingly easier than validating the local convexity of $D_n$ and $\log \pi$.
Note that \eqref{eq:beta_optimal} is straight-forward to compute whenever the loss $D_n$ is amenable to automatic differentiation.
For completeness, we also provide an explicit expression in \Cref{apx:example_2} for the case of the DFD-Bayes posterior with an exponential family model.

\begin{remark}
    Step (i) of our algorithm is embarrassingly parallelisable over bootstrap samples. Each component inside the sum in \eqref{eq:beta_optimal} can also be computed in parallel during step (ii). Overall, the total cost can be reduced linearly in the number of available cores $K$, and the cost of step (ii) is $O( p^2 \times C \times B / K)$, where $C$ is the cost of evaluating $D_n(\theta)$ and $\pi(\theta)$ at $\theta$.
\end{remark}

\subsection{Limitations} \label{sec:limitation}

There are at least two important limitations of the DFD-Bayes methodology, which will now be discussed.
First, DFD-Bayes was not derived as an approximation to standard Bayesian inference, and thus the semantics associated with the generalised posterior should not be confused with the semantics of standard Bayesian inference; see \cite{Bissiri2016,Knoblauch2019} for a detailed discussion of this point.
In particular, we need to calibrate DFD-Bayes through the selection of $\beta$, which is not a feature of standard Bayesian inference under well-specified models.
Although we expect our bootstrap approach to outperform existing alternative approaches for small sample size $n$, it is possible that in those cases the bootstrap criterion for selecting $\beta$ in \Cref{sec:calibration} will fail, and in these circumstances the generalised posterior will fail to be calibrated.
Second, the generalised posterior may suffer from similar drawbacks to score-based methods for continuous data, including insensitivity to mixing proportions \citep{wenliang2020blindness}.
Indeed, for a two-component mixture model $p_\theta(\bm{x}) = (1 - \theta) p_1(\bm{x}) + \theta p_2(\bm{x})$, we can compute the ratios
\begin{align*}
\rho_j := \frac{p_\theta(\bm{x}^{j-})}{p_\theta(\bm{x})} = \frac{  (1 - \theta) p_1(\bm{x}^{j-}) + \theta p_2(\bm{x}^{j-}) }{ (1 - \theta) p_1(\bm{x}) + \theta p_2(\bm{x}) } 
\end{align*}
on which the discrete Fisher divergence is based.
Suppose, informally, that the high probability regions $R_1$ of $p_1$ and $R_2$ of $p_2$ are separated, meaning $p_2 \approx 0$ on $R_1$ and $p_1 \approx 0$ on $R_2$.
Then these ratios are approximately independent of $\theta$ on $R_1 \cup R_2$, since $\rho_j \approx p_1(\bm{x}^{j-}) / p_1(\bm{x})$ for $\bm{x} \in R_1$ and $\rho_j \approx p_2(\bm{x}^{j-}) / p_2(\bm{x})$ for $\bm{x} \in R_2$.
It follows that $\operatorname{DFD}(p_\theta \| p_n)$ is approximately independent of $\theta$ whenever the data $\{\bm{x}\}_{i=1}^n \subseteq R_1 \cup R_2$.
See \Cref{subsec:limitation_poi_mix} for an empirical demonstration using a mixture model of two Poisson distributions.
Thus, although DFD-Bayes may be applied to mixture models, supported by the theoretical guarantees of \Cref{thm:bvm}, the inferences for mixing proportions so-obtained can be data-inefficient.

\section{Experimental Assessment}
\label{sec: experiment}

To complement the theoretical assessment we now provide a detailed empirical assessment, focusing on three important instances of discrete intractable likelihood.
First, in \Cref{subsec: CMP} we consider a relatively simple model for over- and under-dispersed count data, called the Conway--Maxwell--Poisson model.
\Cref{subsec: Ising assessment} concerns an application to Ising-type models for discrete spatial data.
Finally, we apply DFD-Bayes to perform inference for the parameters of flexible multivariate models for count data in \Cref{subsec: multivar count}.
Source code to reproduce these experiments can be downloaded from \url{https://github.com/takuomatsubara/Discrete-Fisher-Bayes}.

\subsection{Conway--Maxwell--Poisson Model}
\label{subsec: CMP}

The first model we consider is a generalisation of the Poisson model for over- and under-dispersed count data, due to \citet{Conway1962}.
This model is on $\X = \mathbb{N}\cup\{0\}$ (hence $d=1$ and $\text{card}(\X) = \infty$) and generalises the Poisson distribution through the inclusion of an additional parameter controlling how the data are dispersed. 
Since the work of \citet{Shmueli2005}, this model has been used in a wide range of fields including transport, finance and retail. 
The model has two parameters $\theta \in \Theta = (0, \infty)^2 \cup ([0,1] \times \{0\}) $ (and hence $p=2$) and its probability mass function is given by $p_\theta(x) = \tilde{p}_{\theta}(x) Z_\theta^{-1}$ where $\tilde{p}_\theta(x) = (\theta_1)^x (x!)^{-\theta_2}$. The normalising constant is given by $Z_\theta = \sum_{y=0}^{\infty} \tilde{p}_\theta(y)$, which has no analytical form  except for certain special cases of $\theta \in \Theta$, including the case $\theta_2 = 1$ for which the standard Poisson model is recovered.

\begin{figure}[t!]
	\centering
	\hfill
	\includegraphics[height=0.3\textheight, trim={1.0cm 0.0cm 0.0cm 0.0cm}, clip]{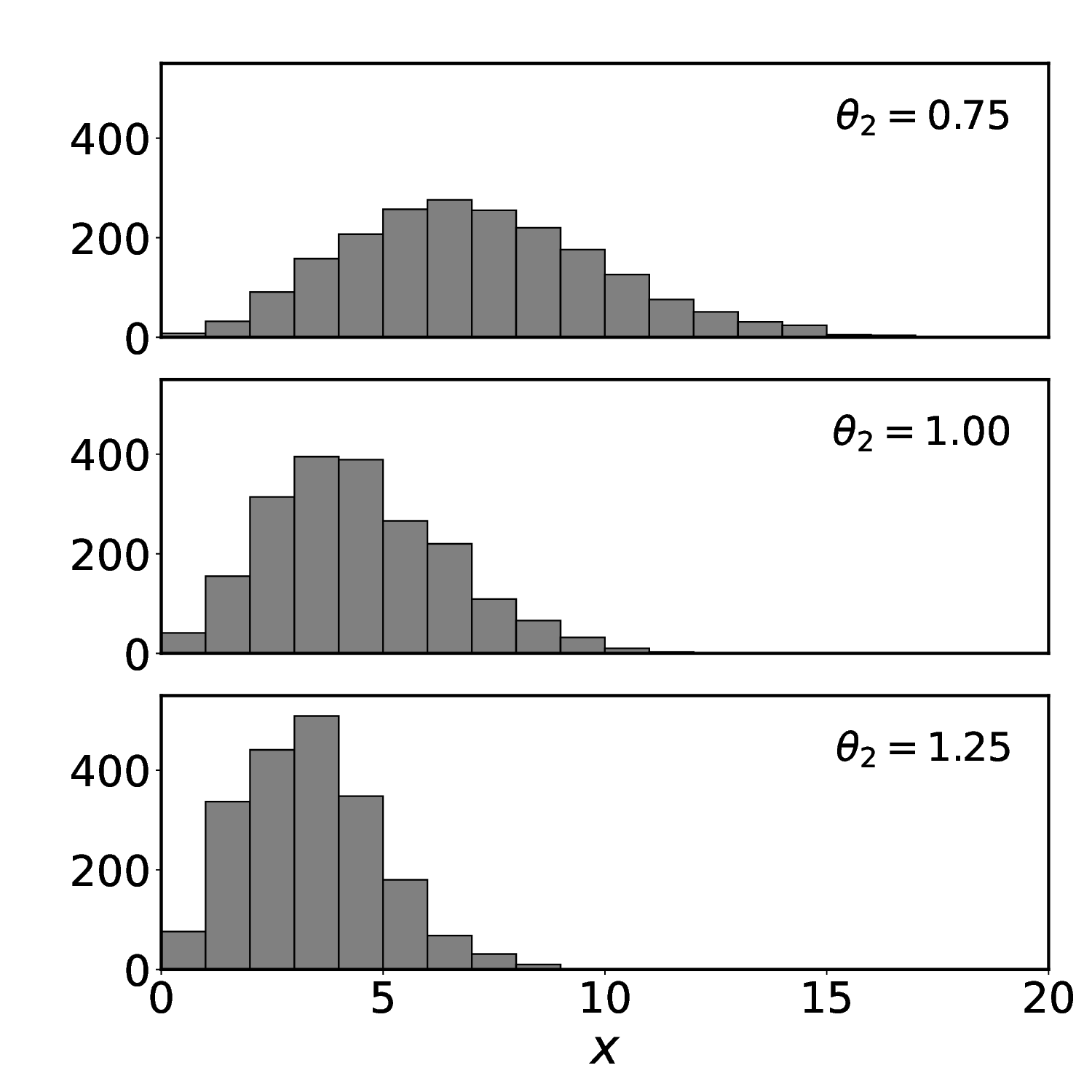}
	\hfill
	\includegraphics[height=0.3\textheight]{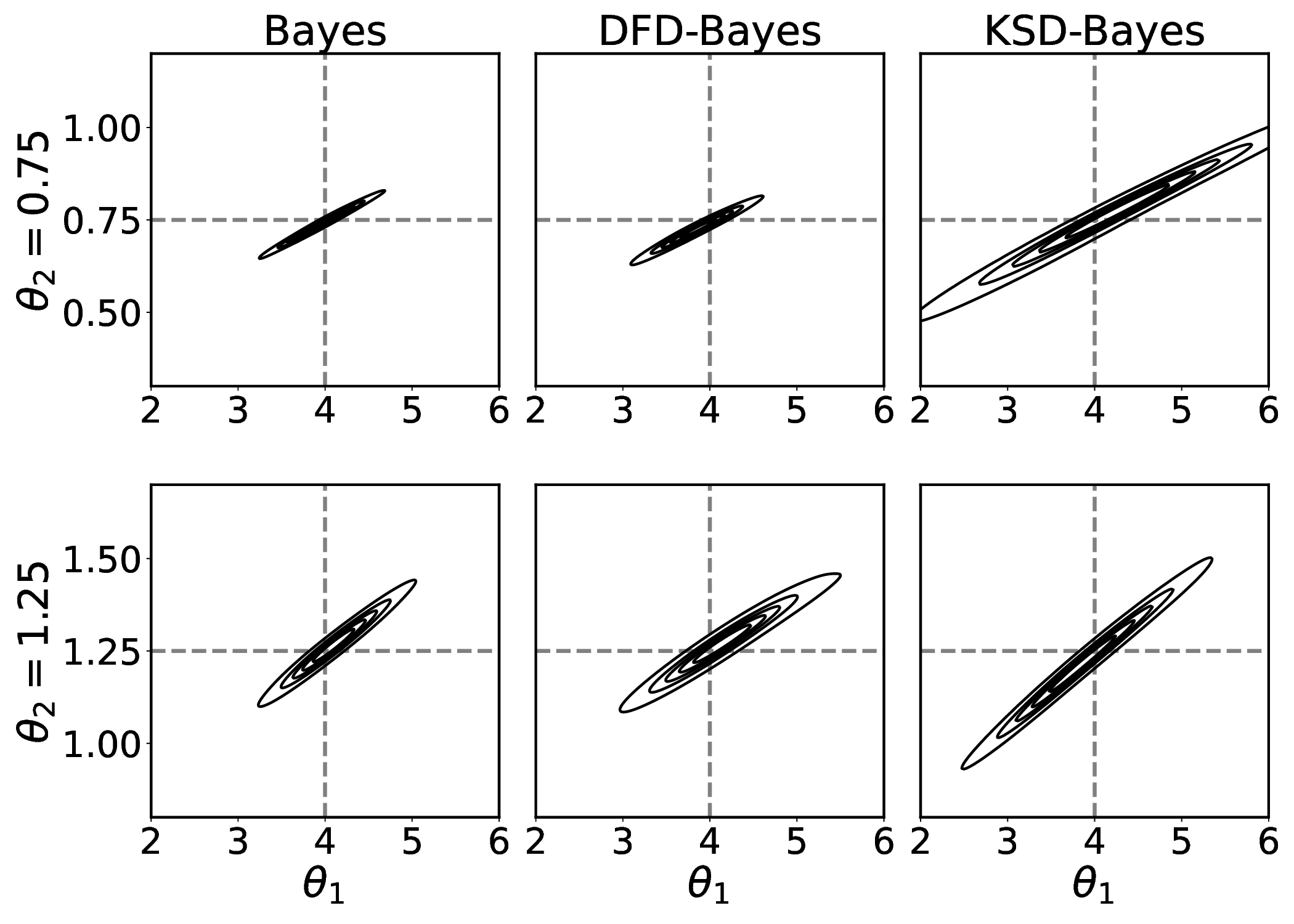}
	\hfill
	\hfill
	\caption{Comparison of standard Bayesian inference with the generalised posteriors from DFD-Bayes and KSD-Bayes on the Conway--Maxwell--Poisson model in the over-dispersed case $\theta_2 = 0.75$ and the under-dispersed case $\theta_2 = 1.25$ for $n=2,000$.}
	\label{fig:CMP_posteriors}
\end{figure}

This model is an ideal test-bed for DFD-Bayes: although the likelihood is formally intractable, it is relatively straightforward to directly approximate the normalising constant\footnote{The standard Bayesian inferences reported in this section used the approximation $Z_\theta \approx \sum_{y=0}^{99} \tilde{p}_\theta(y)$ and the associated approximate likelihood. Alternative estimators are available; see  \citet{Benson2021}.}.
This enables a direct comparison with standard Bayesian inference in the case where the model is well-specified.
To this end, we simulated two datasets from the model: (i) an under-dispersed case where $\theta^* = (4,1.25)$, and (ii) an over-dispersed case where $\theta^* = (4,0.75)$, shown in \Cref{fig:CMP_posteriors} (left).
Three inference methods were compared: standard Bayesian inference, the KSD-Bayes method of \cite{Matsubara2021}, and the DFD-Bayes method we have proposed.
The settings of KSD-Bayes are described in \Cref{subsec: CMP extra KSD Bayes}. 
In each case, the prior $\pi$ was taken to be the chi-squared distribution with $3$ degrees of freedom for each of $\theta_1$ and $\theta_2$ independently.
A Metropolis--Hastings algorithm was used to sample from all the posteriors; and details can be found in \Cref{subsec: CMP extra MCMC}.
The weight $\beta$ in DFD-Bayes and KSD-Bayes was calibrated by our approach described in \Cref{sec:calibration}.

\Cref{fig:CMP_posteriors} (right) illustrates the posteriors, based on typical datasets of size $n=2,000$. 
The estimated value of $\beta_*$ was $1.91$ for DFD-Bayes and $5.04$ for KSD-Bayes in the over-dispersed case $\theta_2 = 0.75$, and $0.46$ for DFD-Bayes and $2.51$ for KSD-Bayes in the under-dispersed case $\theta_2 = 1.25$.
The left panel of \Cref{fig:CMP_cost_beta} displays the distribution of calibrated weight $\beta_*$ as in \Cref{sec:calibration} over multiple instances of the dataset, along with the values advocated in \citet{Lyddon2018}.
For both methods, the calibrated weight is stably estimated.

The inferences obtained using DFD-Bayes resembled those obtained using standard Bayesian inference, irrespective of whether the data were over- or under-dispersed.
Those obtained using KSD-Bayes were more conservative than standard Bayes and DFD-Bayes, although the maximum a posteriori estimator approximated the true parameter well.
Note that the credible regions of the generalised posteriors can substantially differ from those of standard Bayesian inference; in our approach a credible region of a generalised posterior is calibrated with reference to the distribution of a corresponding frequentist estimator estimated by bootstrapping, leading to approximately correct frequentist coverage as shown in \Cref{fig:CMP_cost_beta} (middle).
Calibration led to improved inference outcomes for both DFD-Bayes and KSD-Bayes.
In the KSD-Bayes case for example, the value of $\beta_* \ge 1$ intensified the concentration around the true parameter by placing more importance on the loss than the prior. 
In addition, our approach to calibration is relatively more conservative than \citet{Lyddon2018} because the prior is taken into account.

There is a stark difference in computational cost between DFD-Bayes and KSD-Bayes\footnote{The cost of standard Bayesian inference in this experiment is entirely determined by the accuracy with which the normalisation constant is approximated; since direct approximation of the normalisation constant is infeasible in general, we do not report this cost.}, as demonstrated in the right panel of \Cref{fig:CMP_cost_beta}. 
Indeed, the computational cost of DFD-Bayes is seen to increase linearly with $n$, while the cost of KSD-Bayes increases quadratically.

\begin{figure}[t!]
	\centering
	\includegraphics[height=0.18\textheight]{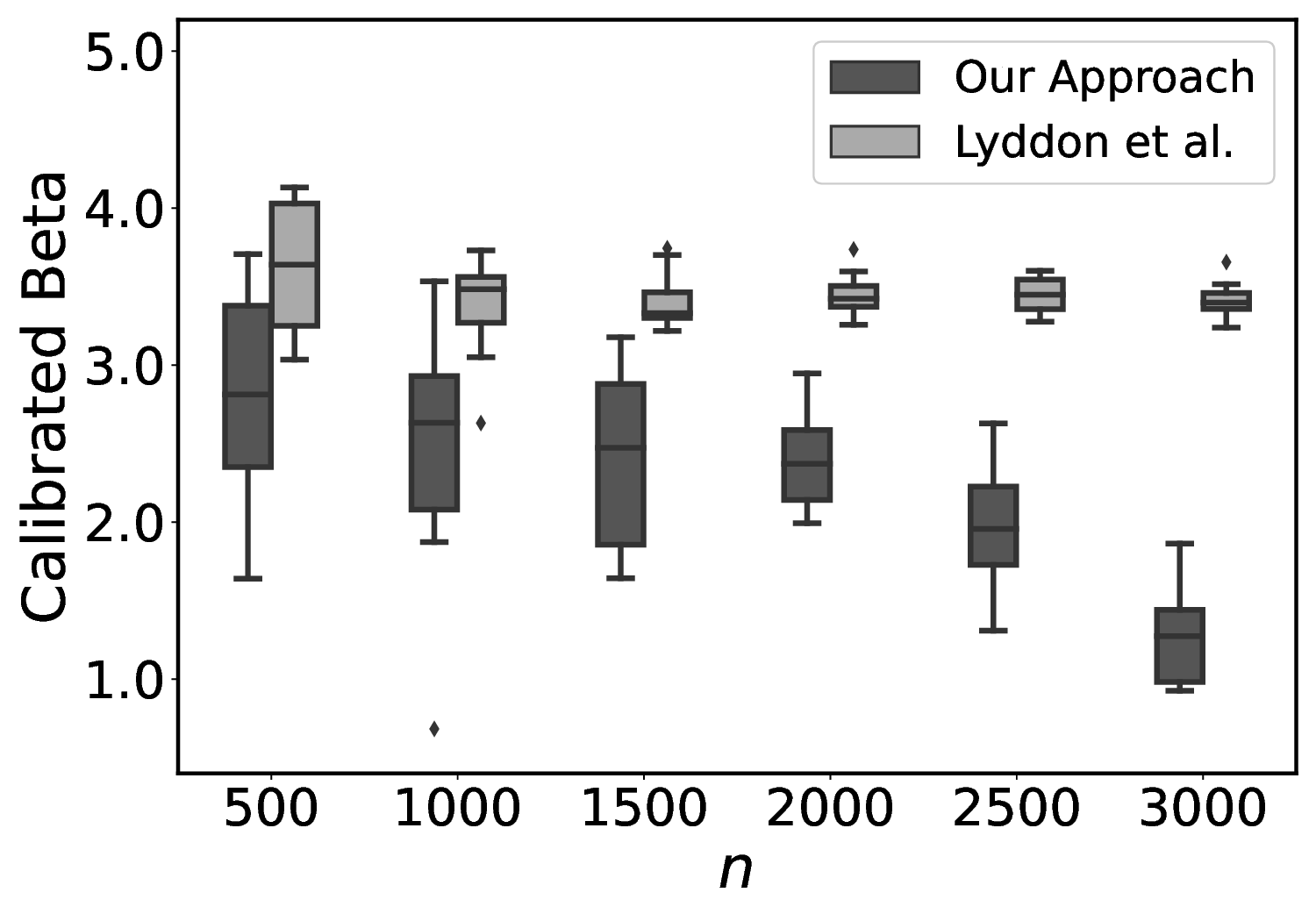}
	\includegraphics[height=0.18\textheight]{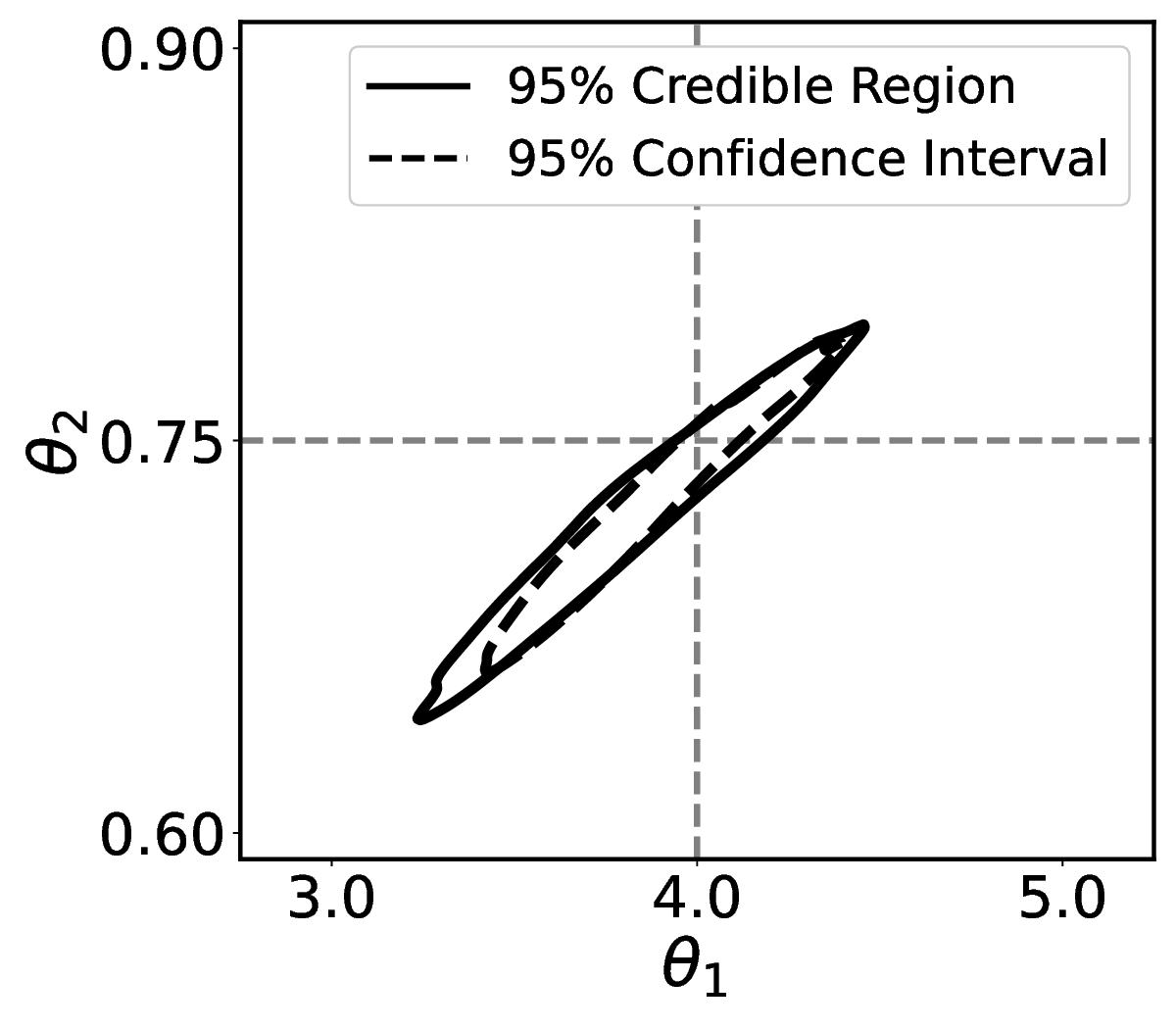}
	\includegraphics[height=0.18\textheight]{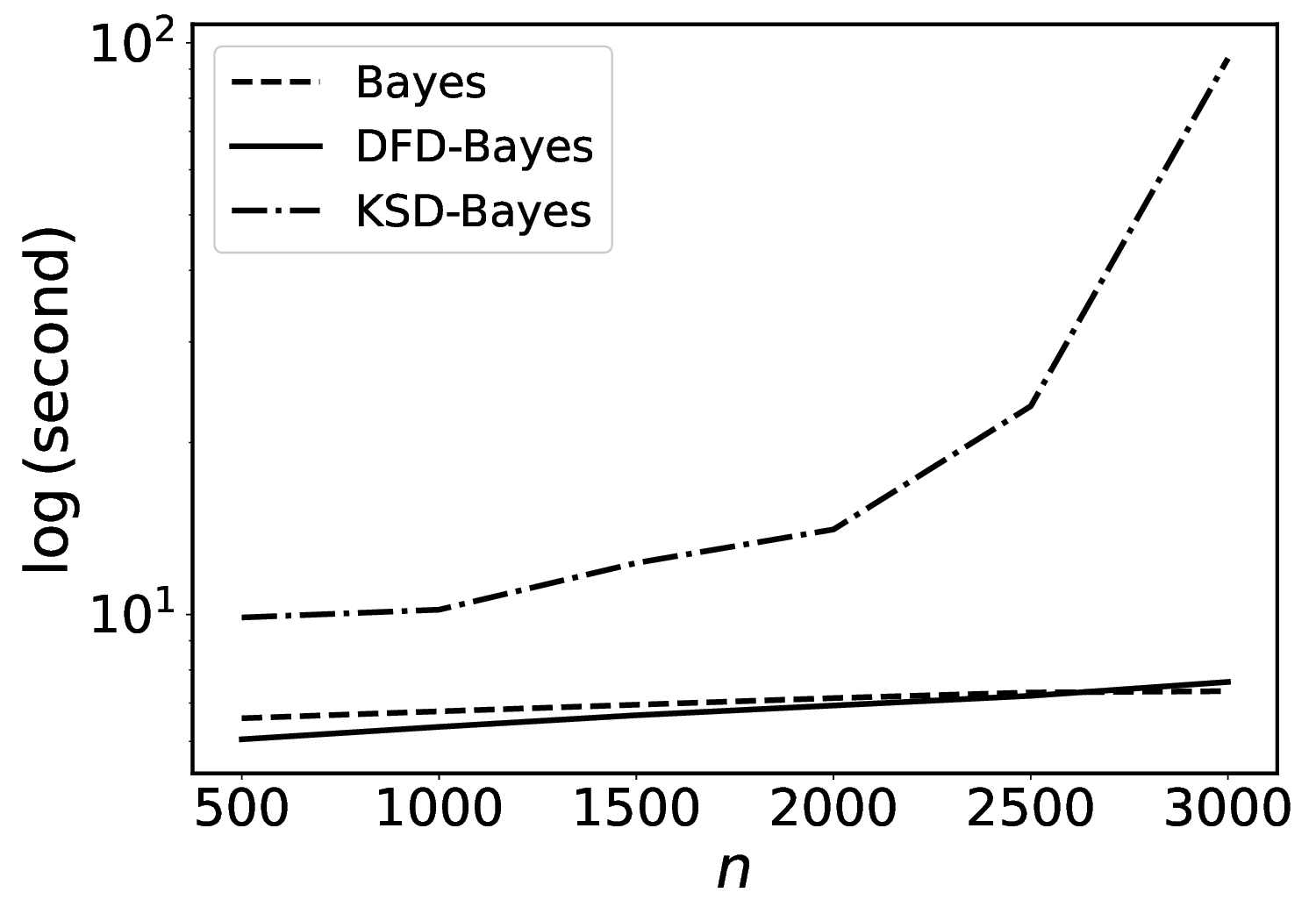}
	\hfill
	\caption{Distribution of $\beta_*$ across different realisations of the dataset at each data number $n$ for $\theta_2 = 0.75$ (left), comparison of a 95\% credible region of the DFD-Bayes posterior and a 95\% confidence interval of the frequentist counterpart for $n = 2000$ (centre), and comparison of computational times of each Metropolis--Hastings algorithm (right). 
	The confidence interval was estimated by a 95\% highest probability density region of a kernel density estimator applied to the $100$ bootstrap minimisers used in calibration of $\beta$.
	} 
	\label{fig:CMP_cost_beta}
\end{figure}

Finally, to assess performance in a real-world data setting, we apply DFD-Bayes to infer the parameters of a Conway--Maxwell--Poisson model using the sales dataset of \cite{Shmueli2005}. 
All relevant details are contained in \Cref{subsec: CMP extra dataset}.
\Cref{fig:CMP_sales} compares our fitted model to a standard Bayesian analysis using the Poisson distribution, which is the closest analysis one can perform without confronting an intractable likelihood. 
As observed in the central panel of \Cref{fig:CMP_sales}, the Poisson model is not able to capture over-dispersion of the data, whereas the Conway--Maxwell--Poisson model fitted using DFD-Bayes, shown in the right panel, provides a reasonable fit. 
The DFD-Bayes posterior (left) appears approximately normal, in line with \Cref{thm:bvm}. 

\begin{figure}[t!]
	\centering
	\includegraphics[height=0.18\textheight]{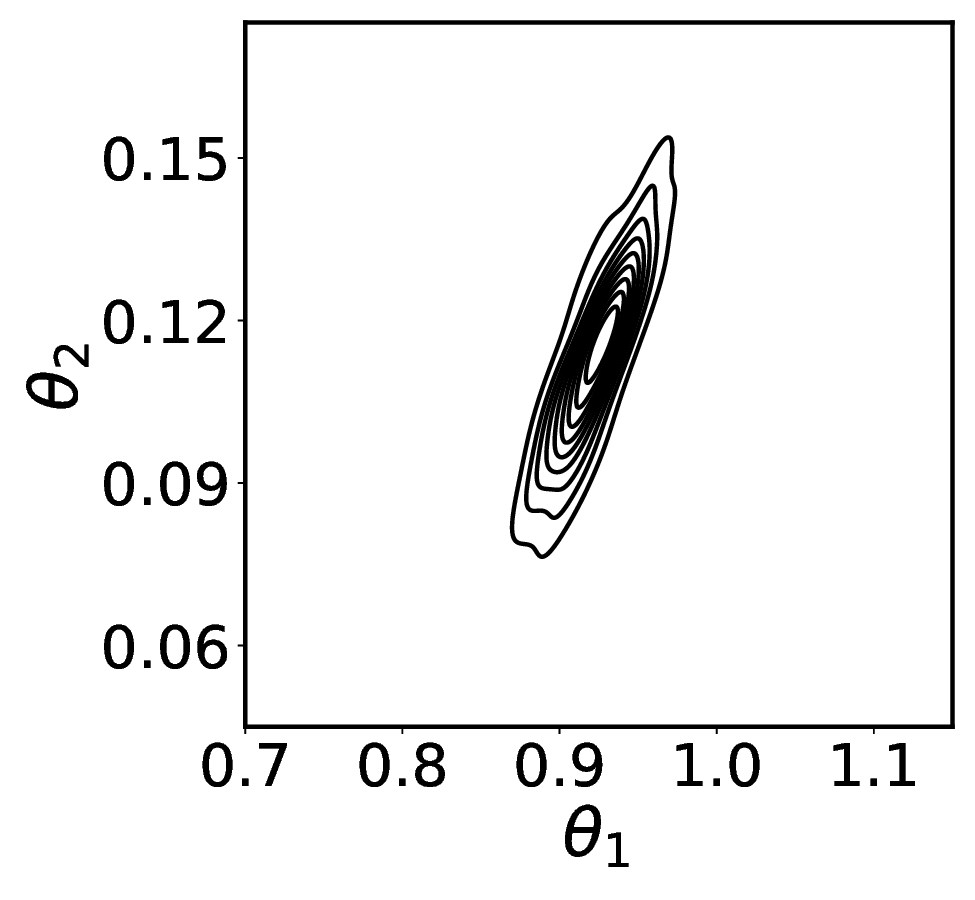}
	\includegraphics[height=0.18\textheight]{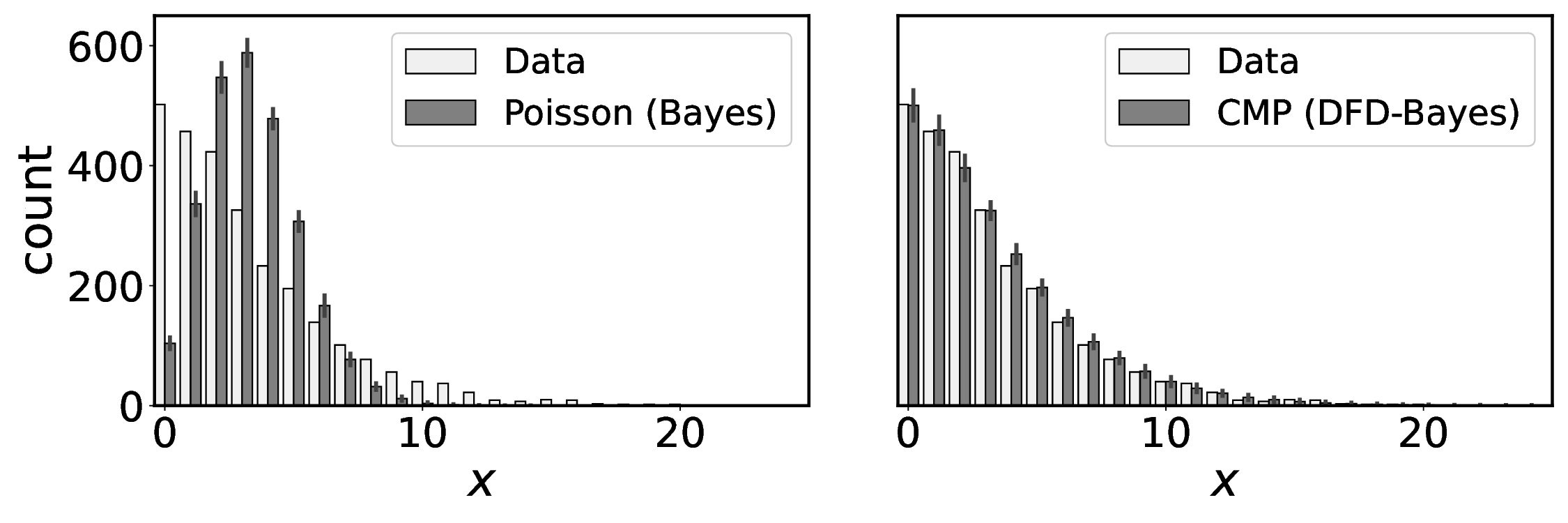}
	\caption{Comparison of DFD-Bayes for the Conway--Maxwell--Poisson model and standard Bayes for the Poisson distribution on the sales data of \cite{Shmueli2005}. 
		Left: The generalised posterior distribution produced using DFD-Bayes.
		Centre:  Posterior predictive distribution, at the level of the data, for a Poisson model with standard Bayesian inference performed.
		Right: Posterior predictive distribution, at the level of the data, for a Conway--Maxwell--Poisson model with DFD-Bayes inference performed.
		In both cases, error bars indicate one standard deviation of the posterior predictive distribution.
	}
	\label{fig:CMP_sales}
\end{figure}

\subsection{Ising Model} \label{subsec: Ising assessment}

The aim of this section is to consider a more challenging instance of discrete intractable likelihood, where the data are high-dimensional (i.e. $d$ is large) and the cardinality of each coordinate domain $S_i$ is small.
A small cardinality of $S_i$ is particularly interesting, because the intuition that our difference operators arise from discretisation of continuous differential operators fails to hold.
This setting is typified by the Ising model (which has $S_i =  \{0,1\}$), variants of which are used to model diverse phenomena, e.g., the network structure of the amino-acid sequences \citep{xue2012nonconcave}.
The computational challenge of performing Bayesian inference for Ising-type models has, to-date, principally been addressed using techniques such as pseudo-likelihood
\citep[see the recent survey in][]{Bhattacharyya2019}.
As with the case of generalised Bayesian inference, these do not necessarily lead to the same asymptotic distribution as standard Bayesian inference since the original likelihood is replaced by an approximation \citep{Gong1981}. 

Let $G$ be an undirected graph on a $d$-dimensional vertex set and let $\mathcal{N}_i$ denote the neighbours of node $i$, with self-edges excluded.
An Ising model describes a discrete process that assigns each vertex of $G$ either the value $0$ or $1$, and thus the data domain is $\X = \{ 0, 1 \}^d$. 
The probability mass function has the exponential family form
\begin{align}
	p_\theta(\bm{x}) \propto \exp\left( \frac{1}{\theta} \sum_{i=1}^{d} \sum_{j \in \mathcal{N}_i} x_i x_j \right) \label{eq:ising_model}
\end{align}
where $\theta$ is a temperature parameter, controlling the propensity for neighbouring vertices to share a common value.
Here we consider the ferromagnetic Ising model, which has $\theta \in (0, \infty)$.
To conduct a simulation study we consider the case where $G$ is a $m \times m$ grid.
Simulating from Ising models is challenging due to the high-dimensional discrete domain, so here we restrict attention to $m \in \{5,\dots,10\}$ to ensure that data were accurately simulated.
A total of $n = 1,000$ data points were generated from an Ising model with $\theta = 5$, using an extended run of a Metropolis–-Hastings algorithm, the details of which are contained in \Cref{subsubsec: Ising data}.
A chi-squared prior with degree of freedom $3$ was used.
Three inference methods were compared: the KSD-Bayes method of \citet{Matsubara2021}, the proposed DFD-Bayes method, and standard Bayesian inference based on a the pseudo-likelihood 
\begin{align*}
\tilde{p}_\theta(\bm{x}) = \prod_{i=1}^d p_\theta(x_i | \{x_j : j \in \mathcal{N}_i\}) ,
\end{align*}
where $p_\theta(x_i | \{x_j : j \in \mathcal{N}_i\})$ is a restriction of the original model \eqref{eq:ising_model} to the $i$-th coordinate $x_i$ under the condition $\{x_j : j \in \mathcal{N}_i\}$ that results in a Bernoulli distribution of $x_i$ for each $i = 1, \cdots, d$ \citep{Besag1974}. 
The latter Pseudo-Bayes approach can be viewed as a special case of generalised Bayes inference, since it replaces the original likelihood loss of the model \eqref{eq:ising_model} with the pseudo-likelihood loss, and therefore we also applied the proposed calibration procedure to this method.
The settings of KSD-Bayes are described in \Cref{subsec: Ising KSD}.
A Metropolis--Hastings algorithm was also used to sample from all generalised posteriors, the details for which are contained in \Cref{subsec: Ising extra MCMC}.

\begin{figure}[t!]
	\centering
	\hfill
	\includegraphics[height=0.17\textheight]{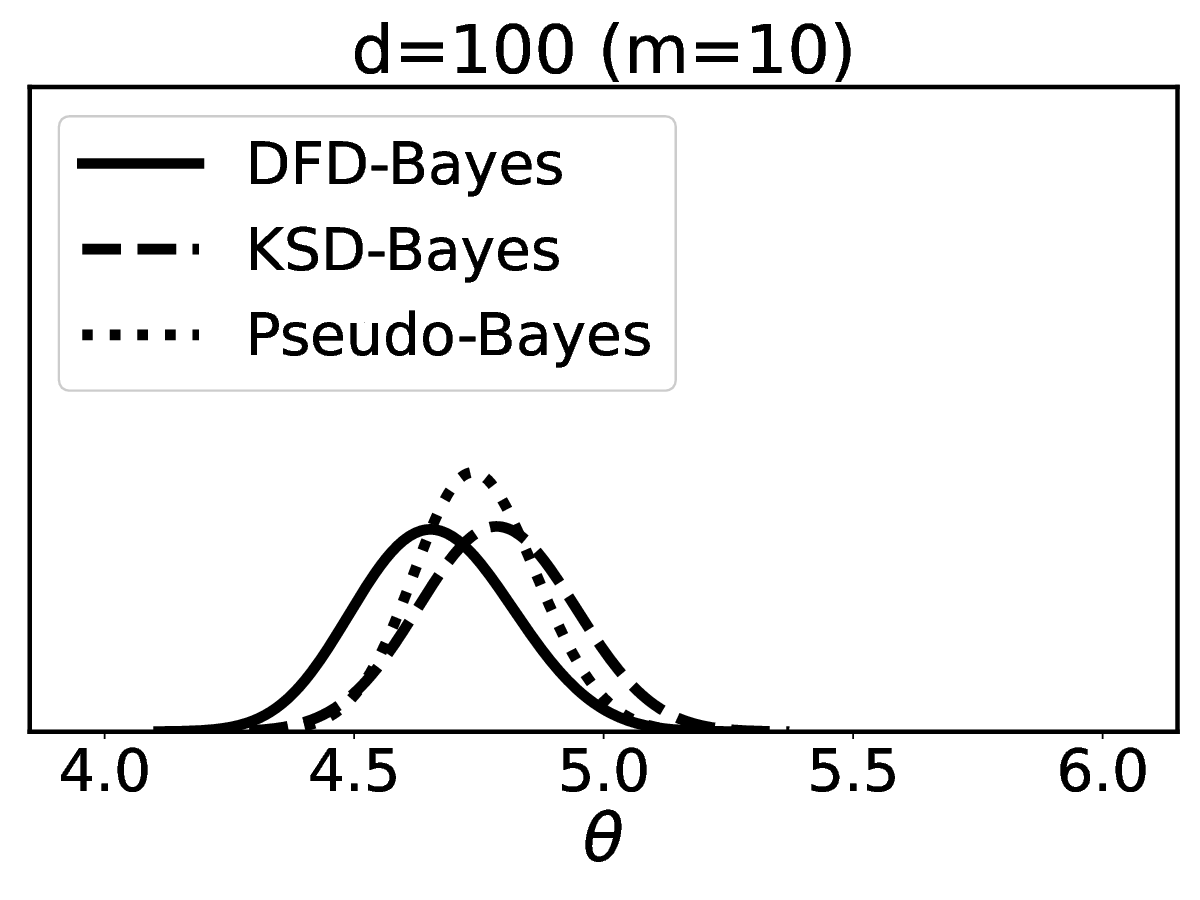}
	\hfill
	\includegraphics[height=0.17\textheight]{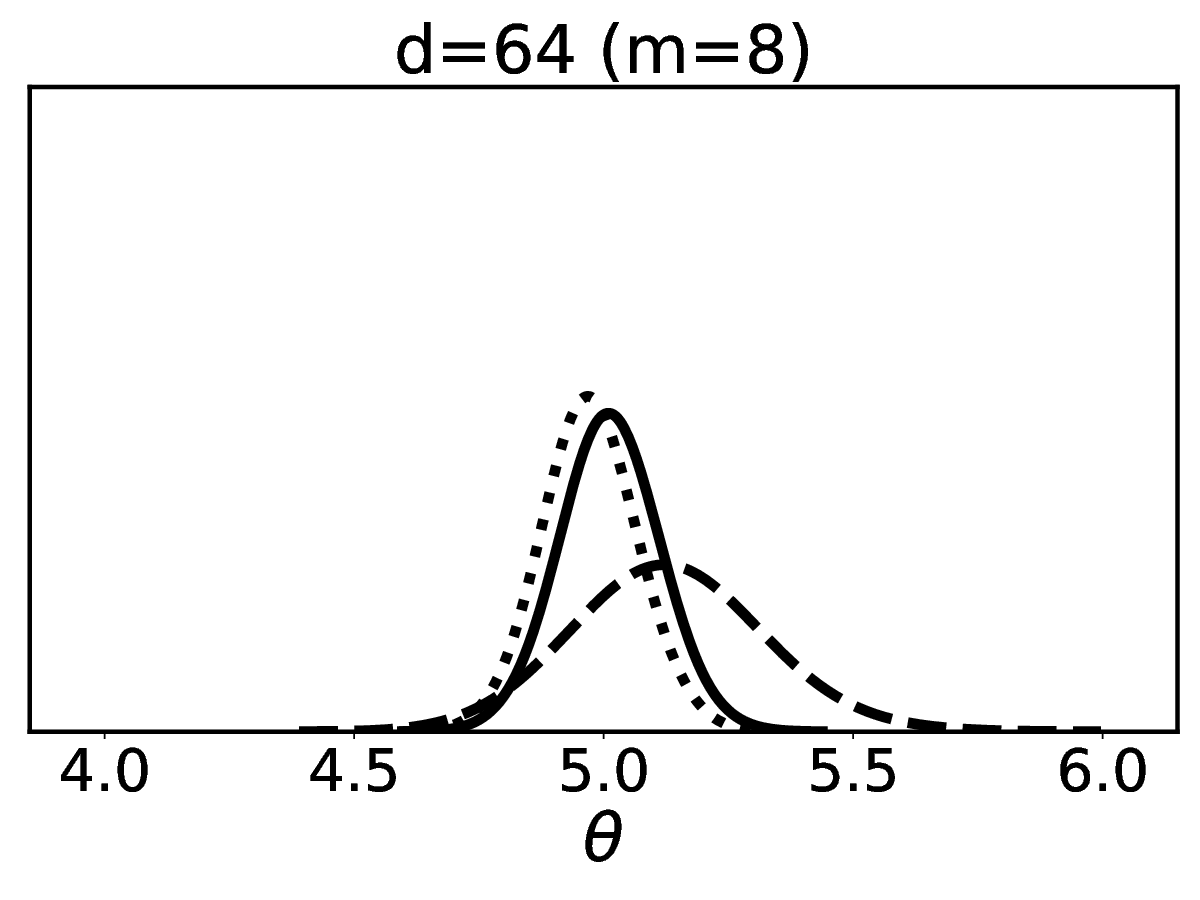}
	\hfill
	\includegraphics[height=0.17\textheight]{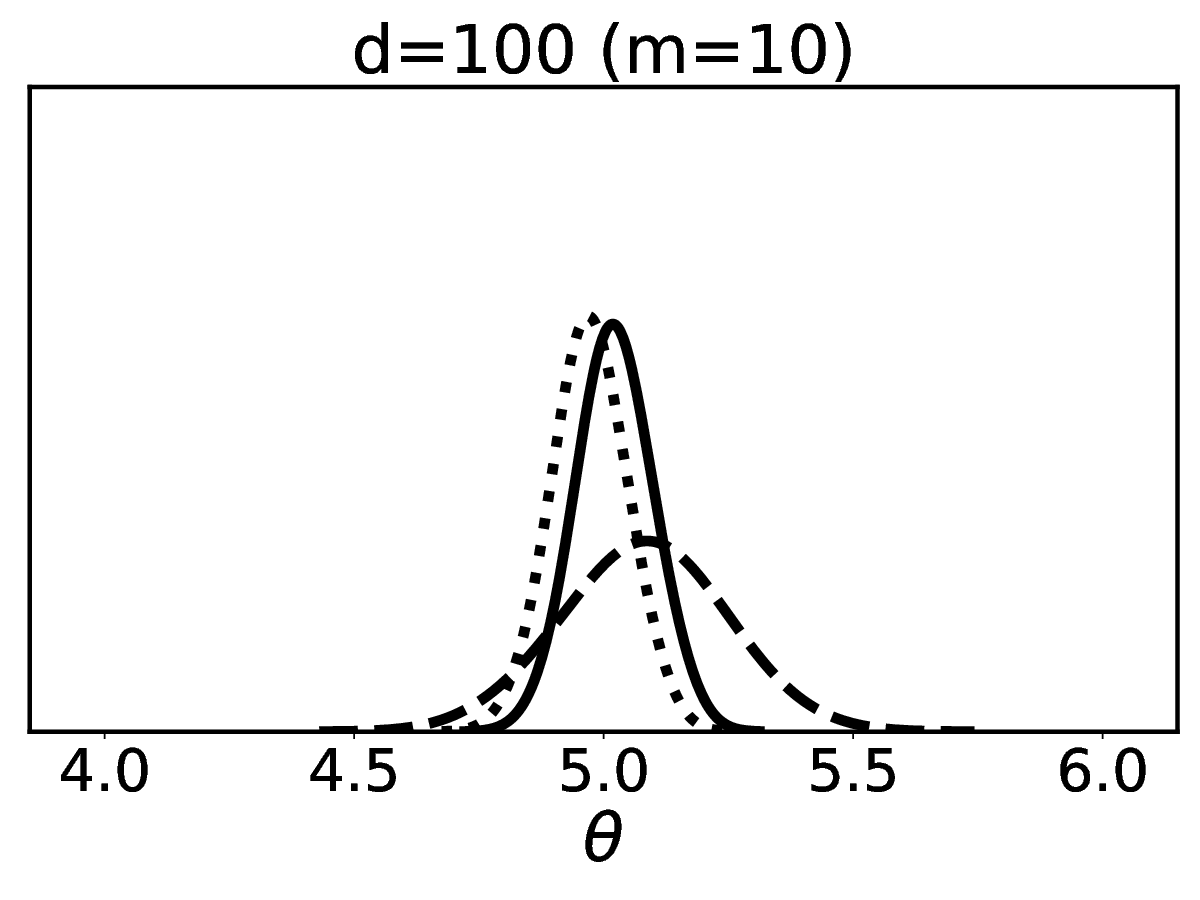}
	\hfill
	\hfill
	\caption{
		Comparison of approximate Bayesian inference based on pseudo-likelihood, DFD-Bayes and KSD-Bayes, applied to the Ising model with $\theta = 5$ for $n=1,000$ and $d = 10 \times 10$. For all methods, the value $\beta_*$ from \Cref{sec:calibration} was used.
	}
	\label{fig:Ising_posteriors}
\end{figure}

Results are presented for three different datasets of size $n = 1,000$ and dimension $d = 36~(m = 6)$, $d = 64~(m = 8)$, and $d = 100~(m = 10)$ in \Cref{fig:Ising_posteriors}.
For the lowest dimension $d = 36$, all the approaches produced similar posteriors.
For the higher-dimensional cases, it can be seen that the DFD-Bayes and Pseudo-Bayes posteriors concentrate around the true parameter $\theta = 5$. The KSD-Bayes posterior is more conservative, whilst DFD-Bayes gives a comparable result to Pseudo-Bayes.
For $d=100$, the total computational time required to perform this analysis (including calibration) was $540$ seconds for DFD-Bayes, $2,353$ seconds for KSD-Bayes, and $1,053$ seconds for Pseudo-Bayes each in average over 10 independent experiments, confirming that DFD-Bayes incurs a significantly lower computational cost than both alternatives.
The value of the weight obtained through our calibration method for $d = 100$ in \Cref{fig:Ising_posteriors} was $0.013$ for DFD-Bayes, $0.157$ for KSD-Bayes, and $0.579$ for Pseudo-Bayes.
These small values of weight indicated that the calibration worked effectively, preventing the over-concentration of each posterior.

\subsection{Multivariate Count Data} \label{subsec: multivar count}

Finally we consider a problem involving multivariate count data.
Count data occur in diverse application areas, and variables in such data are rarely independent, yet the literature on statistical modelling of such data is limited.
Poisson graphical models and their extensions have emerged as a powerful tool for modelling such data; see the recent review of \citet{inouye2017review}.
To the best of our knowledge a complete Bayesian treatment of Poisson graphical models has yet to be attempted\footnote{A pairwise Markov random field whose marginals are close to being Poisson was used in \citet{roy2020nonparametric}, and a specific generalisation of the Conway-Maxwell-Poisson was used in \citet{Piancastelli2021}.}, and we speculate that this is due to the computational challenges of the associated intractable likelihood.
Our aim here is to assess the suitability of DFD-Bayes for learning the parameters of a Poisson graphical model.

Let $G$ be an undirected graph on a vertex set $\{1,\dots,d\}$ and let $\mathcal{M}_i$ denote the neighbours of node $i$ that are contained in the set $\{i+1,\dots,d\}$.
A Poisson graphical model has probability mass function
\begin{align*}
p_\theta(\bm{x}) \propto \exp\left( \sum_{i=1}^d \theta_i x_i - \sum_{i=1}^d \sum_{j \in \mathcal{M}_i} \theta_{i,j} x_i x_j - \sum_{i=1}^d \log(x_i!) \right)
\end{align*}
where the parameters $\theta$ consist of both the linear coefficients $\theta_i \in (-\infty,\infty)$ and the interaction coefficients $\theta_{i,j} \in [0,\infty)$.
Our aim is to reproduce an analysis of a breast cancer gene expression dataset described in \citet{inouye2017review}, but in a generalised Bayesian framework.
For this problem, $n = 878$, $d = 10$, and $p = 64$ which renders the computational cost of $O(n^2 d)$ at every MCMC step and of $O(p^2 n^2 d)$ at calibration associated with KSD-Bayes inefficient. 
Full details of the dataset are contained in \Cref{subsubsec: multivar dataset}.
Independent standard normal priors were employed for each $\theta_i$, and half-normal distributions with scale $( d (d - 1) / 2 )^{-1}$ were employed for each $\theta_{i,j}$.
A No-U-Turn Sampler was used to sample from the DFD-Bayes posterior, as described in \Cref{subsec: multivar extra MCMC}.
The gradient of the discrete Fisher divergence is available whenever $p_\theta(\bm{x}) = q_\theta(\bm{x}) / C(\theta)$ with $q_\theta(\bm{x})$ differentiable with respect to $\theta$ at any $\bm{x} \in \X$; see \Cref{subsec: DFD gradient}.
The total computational time required for this analysis, including calibration, was $1,896$ seconds.
Results, in \Cref{fig: PGM results}, demonstrate that the Poisson graphical model is in fact a poor fit for these data, which exhibit under-dispersion relative to the standard Poisson model.
However, in terms of identifying the best parameter values for this model, DFD-Bayes appears to have performed well.

\begin{figure}[t!]
    \centering
    \includegraphics[height=0.19\textheight, trim={6.5cm 1.25cm 7cm 1.25cm}, clip]{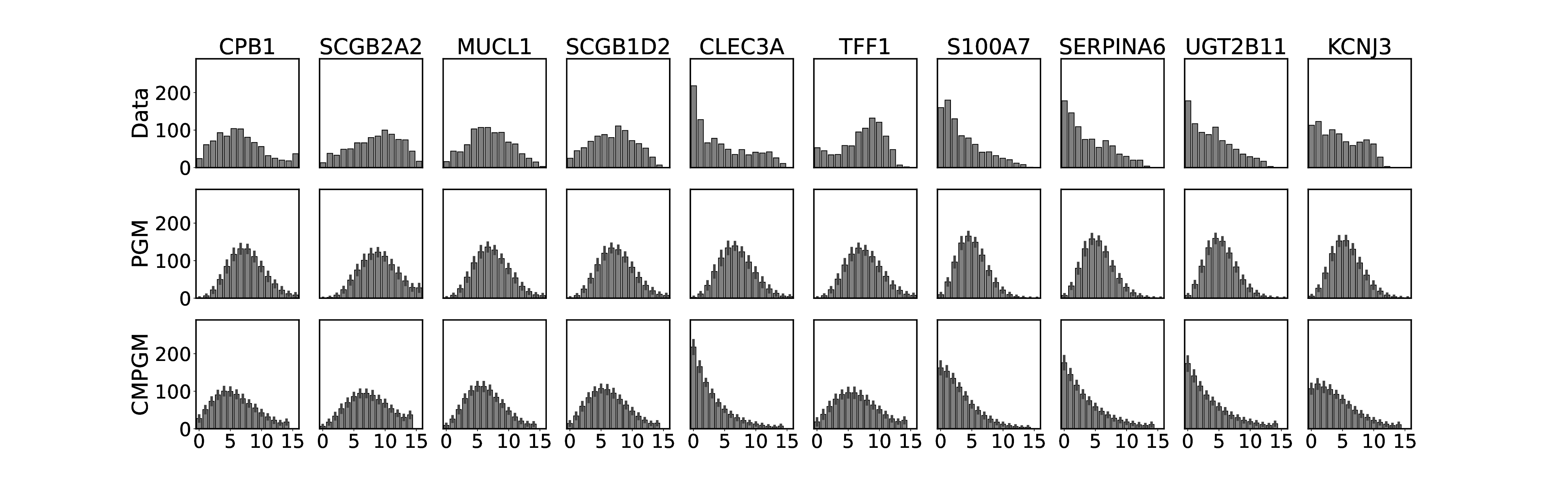}
    \includegraphics[height=0.19\textheight]{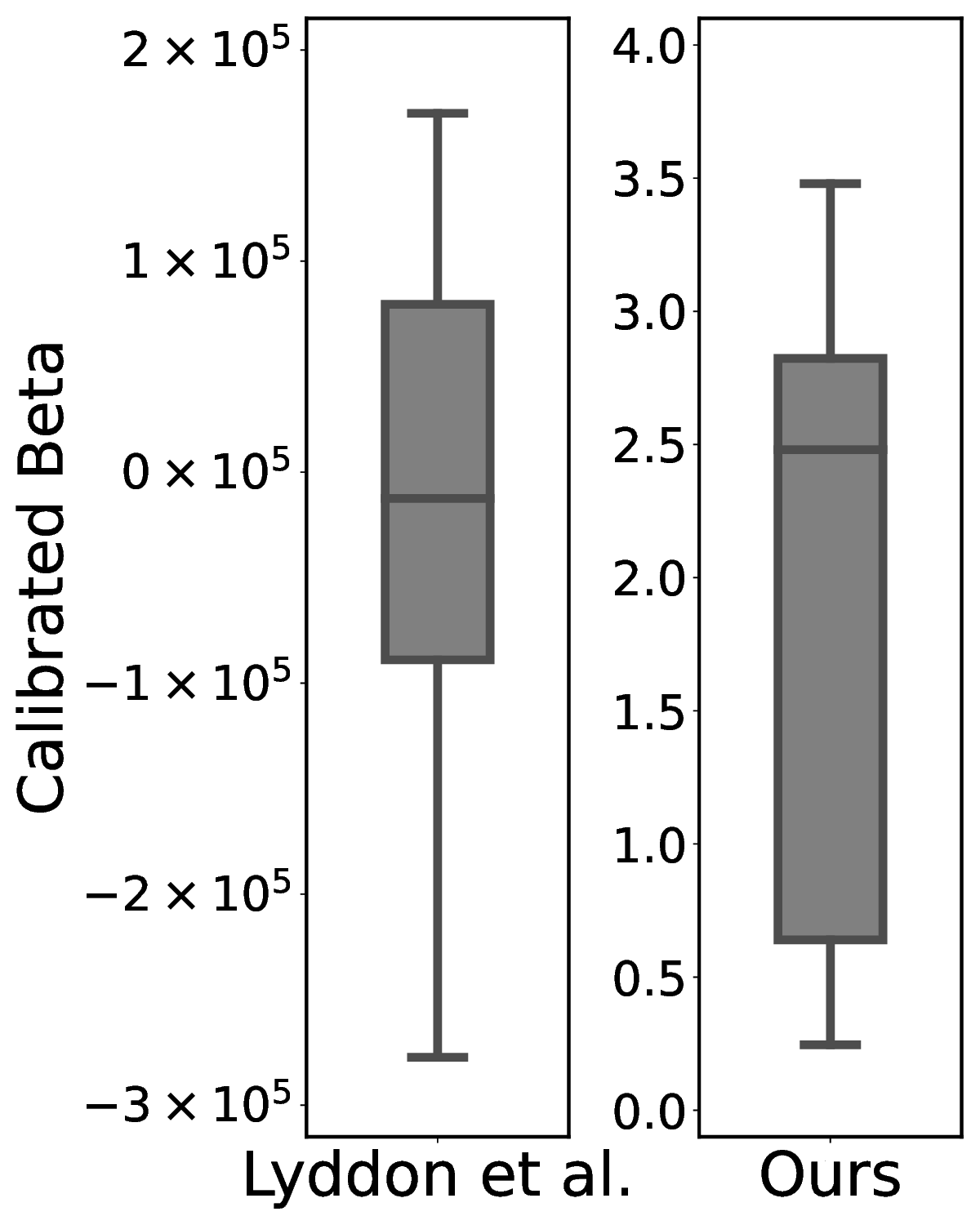}
    \caption{Left: Posterior predictive distributions from the Poisson graphical model and the Conway--Maxwell--Poisson graphical model. Right: Sampling distributions of $\beta_*$ for the Conway--Maxwell--Poisson graphical model by \cite{Lyddon2018} and by the proposed approach, computed using 10 independent realisations of the dataset.
    }
    \label{fig: PGM results}
\end{figure}

As a possible improvement, and to further stress-test the DFD-Bayes method, we considered a generalisation of the Poisson graphical model that allows for over- and under-dispersion, analogous to \citet{Conway1962}.
This model takes the form
\begin{align*}
p_\theta(\bm{x}) \propto \exp\left( \sum_{i=1}^d \theta_i x_i - \sum_{i=1}^d \sum_{j \in \mathcal{M}_i} \theta_{i,j} x_i x_j - \sum_{i=1}^d \theta_{0,i} \log(x_i!) \right)
\end{align*}
where the additional parameters $\theta_{0,i} \in [0,\infty)$ control the dispersion, with $\theta_{0,i} = 1$ recovering the standard Poisson marginal. This time, $p = 74$ as opposed to $p = 64$ for the Poisson-based model.
For this Conway--Maxwell--Poisson graphical model, the same priors as the Poisson graphical model were used for $\theta_i$ and $\theta_{i,j}$, and half-normal priors with scale $1/\sqrt{2}$ were used for each $\theta_{0,i}$.
Results in \Cref{fig: PGM results} demonstrate an improved fit to the dataset.
Indeed, the optimal $\beta$ for the Poisson graphical model was $\beta_* = 0.2150$, which is smaller than the corresponding value $\beta_* = 0.9971$ for the Conway--Maxwell--Poisson graphical model, resulting in a conservative inference outcome when the statistical model is most misspecified and supporting the effectiveness of the proposed approach to calibration.

The right panel of \Cref{fig: PGM results} shows the sampling distributions of estimators for the weight $\beta$ in the context of the Conway--Maxwell--Poisson graphical model, computed using bootstrap resampling of the gene expression dataset.
It can be seen that the asymptotic approach proposed in \citet{Lyddon2018} is severely numerically unstable and can even lead to a negative weight, while the approach proposed in \Cref{sec:calibration} remains stable within a reasonable range between $0$ and $3.5$.
The lack of stability of the approach by \citet{Lyddon2018} arises from the need to invert a covariance matrix of derivatives of the loss, which can become numerically singular if the parameter dimension is high.
In contrast, our approach involves no matrix inversion.
This real-data analysis using flexible parametric models highlights the value in being able to perform rapid and automatic (i.e. free from user-specified degrees of freedom) generalised Bayesian inference for discrete intractable likelihood.

\section{Conclusion}

This paper proposed a novel generalised Bayesian inference procedure for discrete intractable likelihood.
The approach, called DFD-Bayes, is distinguished by its lack of user-specified hyperparameters, its suitability for standard Markov chain Monte Carlo algorithms, and its linear (in $n$, the size of the dataset) computational cost per-iteration of the Markov chain.
Furthermore, the generalised posterior is consistent and asymptotically normal.
This paper also established a novel approach to calibration of generalised Bayesian posteriors which is computationally efficient (through embarrassing parallelism) and numerically stable, even when the parameter of the statistical model is high-dimensional.

This work focused on independent and identically distributed data, meaning that (for example) regression models were not considered.
Relaxing the independence and identical distribution assumptions represents a natural direction for future work, and a road map is provided by recent research in the score-matching literature \citep{xu2022generalized}.

One of our technical contributions is to present a discrete Fisher divergence applicable to distributions defined on multi-dimensional and countably infinite sets.
This divergence can be regarded as a proper local scoring rule, which complements existing methodology developed in the finite domain context in \citet{dawid2012proper}. 
The use of scoring rules as loss functions within a generalised Bayesian framework for continuous data was considered in \citet{Giummole2019,pacchiardi2021generalized}, and our work can be seen as an analogous approach for discrete data, with particular focus on intractable likelihood.

DFD-Bayes was demonstrated to outperform the comparative approach, KSD-Bayes, in our experiments both in terms of inferential performance and computational cost.
However, one of the significant advantages of KSD-Bayes is robustness in the presence of outliers contained in dataset \citep{Matsubara2021}. 
This is confirmed through an additional experiment on the Ising model in \Cref{subsec: KSD_robustness}
Thus, in settings where robust inference is required, the KSD-Bayes approach may be preferred. 
Future work could however focus on generalising our construction of the discrete Fisher divergence to allow for further robustness as per the diffusion score-matching framework of \citet{Barp2019}.

\paragraph{Acknowledgements}
TM, FXB and CJO were supported by the EPSRC grant EP/N510129/1 and the programme on Data Centric Engineering at the Alan Turing Institute, UK.
JK was funded by the EPSRC fellowship grant EP/W005859/1 and the Biometrika fellowship.
The authors are grateful to Pierre Alquier, Lester Mackey and Wenkai Xu, for comments on an earlier draft of the manuscript.

\bibliography{bibliography}

\newpage 
\appendix
\begin{center}
    {\Large\bf SUPPLEMENTARY MATERIAL}
\end{center}

This supplementary material is structured as follows:
Illustrative analysis of the discrete Fisher divergence and the DFD-Bayes using simple tractable models is presented in \Cref{app: walk-through_example}.
The proofs for all theoretical results are contained in \Cref{app: proofs}, with the proof of an auxiliary result reserved for \Cref{sec: proof_con_bvm}.
The relationship between discrete Fisher divergence and Stein discrepancies is explored in \Cref{app: ksd connection}.
Detailed calculations for worked examples are provided in \Cref{ap: calcs for examples}.
Full details on our numerical experiments are provided in \Cref{app:add_exp}

\section{Illustrative Analysis with Tractable Models}\label{app: walk-through_example}

This section provides illustrative analysis of DFD-Bayes, including comparison with standard Bayesian inference and KSD-Bayes, using simple tractable models.
We first demonstrate the calculation of the discrete Fisher divergence using the Bernoulli model.
We then compare the properties of DFD-Bayes with standard Bayesian inference and KSD-Bayes, using the same Bernoulli model.
We next discuss the influence of model misspecification on each posterior using the Poisson model.
Finally, we provide an empirical illustration of the limitations of the discrete Fisher divergence discussed in \Cref{sec:limitation}.
The Bernoulli and Poisson models are used for illustration and comparison in this section, since they are tractable and enable standard Bayesian inference to be performed.

\subsection{The Discrete Fisher Divergence for the Bernoulli Model}

For $x \in \{0, 1\}$, the Bernoulli model can be expressed by
\begin{align}
	p_\theta(x) = \theta^x (1 - \theta)^{1 - x}
\end{align}
where $\theta$ is the probability of $x = 1$.
Recall that $p_\theta(1^+) = p_\theta(0)$ and $p_\theta(0^-) = p_\theta(1)$ under our increment/decrement rule. 
Both the increment and decrement of $p_\theta(1)$ are simply equal to $p_\theta(0)$, and likewise both the increment and decrement of $p_\theta(0)$ are equal to $p_\theta(1)$.
Hence, they can be expressed by
\begin{align}
	p_\theta(x^+) =  p_\theta(x^-) = \theta^{1 - x} (1 - \theta)^{x} ,
\end{align}
that is $p_\theta(x^+) = \theta$ if $x = 0$ and $p_\theta(x^+) = 1 - \theta$ if $x = 1$. 
Plugging these into equation (5) in the manuscript with $d = 1$ gives an explicit form of the discrete Fisher divergence:
\begin{align}
	\text{DFD}(p \| q) \overset{\theta}{=} \frac{1}{n} \sum_{i=1}^{n} \left( \frac{ \theta^{1 - x_i} (1 - \theta)^{x_i} }{ \theta^{x_i} (1 - \theta)^{1 - x_i} } \right)^2 - 2 \left( \frac{ \theta^{x_i} (1 - \theta)^{1 - x_i} }{ \theta^{1 - x_i} (1 - \theta)^{x_i} } \right) \label{eq:dfd_ber}
\end{align}
\Cref{fig:figure_ber_loss} shows the discrete Fisher divergence in \eqref{eq:dfd_ber} computed in three cases where 500 random samples are generated from the Bernoulli model with $\theta = 0.1$, $\theta = 0.5$ and $\theta = 0.9$, comparing the loss surface geometry with that of the negative log-likelihood.
Both of the losses identify the parameter correctly in each case.

\begin{figure}[t]
	\hspace{-50pt}
	\includegraphics[width=1.2\textwidth]{./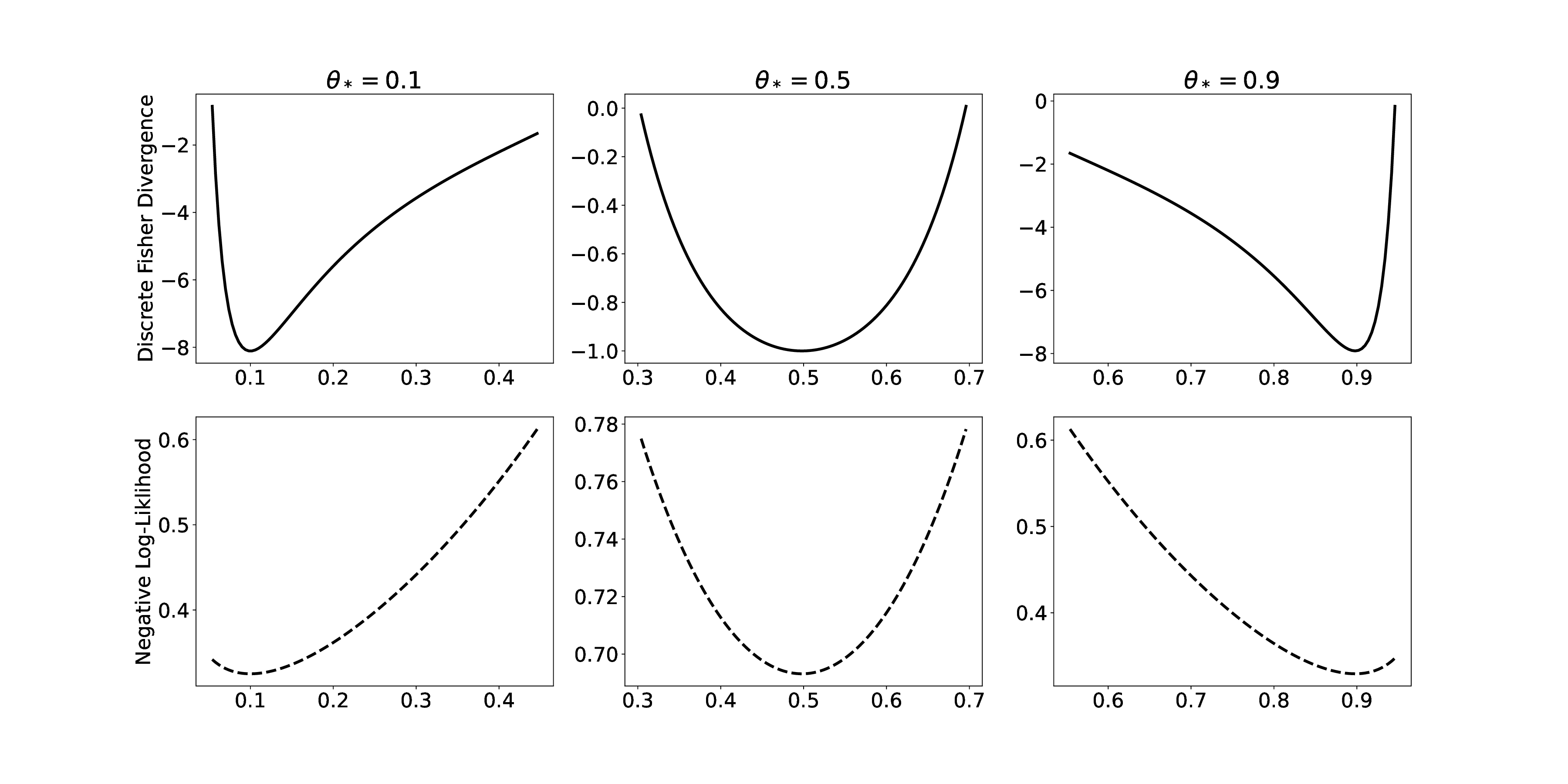}
	\caption{The discrete Fisher divergence (top, solid) and the negative log-likelihood (bottom, dash) between the Bernoulli model and data generated from the Bernoulli model of three different parameters $\theta_* = 0.1$ (left), $\theta_* = 0.5$ (centre), and $\theta_* = 0.9$ (right). They both identify the correct parameter $\theta_*$ in each case albeit the different loss surface geometries.} \label{fig:figure_ber_loss}
\end{figure}

Although the geometrical shape of \eqref{eq:dfd_ber} is different from the negative log-likelihood, we can observe in \Cref{fig:figure_ber_loss} that the discrete Fisher divergence is symmetric under the relabelling $y_i = 1 - x_i$ similarly to the negative log-likelihood in this example.
This can indeed be verified as follows.
If all data are relabelled, the above formula corresponds to
\begin{align}
	\text{DFD}(p \| q) \overset{\theta}{=} \frac{1}{n} \sum_{i=1}^{n} \left( \frac{ \theta^{y_i} (1 - \theta)^{1 - y_i} }{ \theta^{1 - y_i} (1 - \theta)^{y_i} } \right)^2 - 2 \left( \frac{ \theta^{1 - y_i} (1 - \theta)^{y_i} }{ \theta^{y_i} (1 - \theta)^{1 - y_i} } \right) .
\end{align}
With a transform of the parameter $\rho = 1 - \theta$ applied, it further corresponds to
\begin{align}
	\text{DFD}(p \| q) \overset{\theta}{=} \frac{1}{n} \sum_{i=1}^{n} \left( \frac{ \rho^{1 - y_i} (1 - \rho)^{y_i} }{ \rho^{y_i} (1 - \rho)^{1 - y_i} } \right)^2 - 2 \left( \frac{ \rho^{y_i} (1 - \rho)^{1 - y_i} }{ \rho^{1 - y_i} (1 - \rho)^{y_i} } \right) . \label{eq:dfd_ber_flip}
\end{align}
It is clear from comparison of \eqref{eq:dfd_ber} and \eqref{eq:dfd_ber_flip} here that the discrete Fisher divergence of $\theta$ based on the original data $x_i$ is equivalent to that of $\rho = 1 - \theta$ based on the relabelled data $y_i = 1 - x_i$.

\subsection{Illustrative Comparison of DFD-Bayes with standard Bayes and KSD-Bayes} \label{sec:comparison_bernoulli}

First, we derive the negative log-likelihood and the kernel Stein discrepancy for the Bernoulli model.
The negative log-likelihood is 
\begin{align}
\text{NLL}(p_\theta \| p_n) & = - \frac{1}{n} \sum_{i=1}^{n} x_i \log(\theta) + (1 - x_i) \log(1 - \theta) . \label{eq:nll_ber}
\end{align} 
The kernel Stein discrepancy in the discrete context was considered in \citet{Yang2018}.
Letting $\rho_-(\theta, x) := p_\theta(x^-) / p_\theta(x) = \theta^{1 - 2 x} (1 - \theta)^{-1 + 2 x}$, the kernel Stein discrepancy given a kernel function $k: \X \times \X \to \R$ is derived as
\begin{multline}
\text{KSD}(p_\theta \| p) \overset{\theta}{=} \frac{1}{n^2} \sum_{i=1}^{n} \sum_{j=1}^{n} \left( 1 - \rho_-(\theta, x_i) \right) k(x_i, x_j) \left( 1 - \rho_-(\theta, x_j) \right) + \\
\left( 1 - \rho_-(\theta, x_i) \right) \left( k(x_i, x_j) - k(x_i, x_j^-) \right) + \left( k(x_i, x_j) - k(x_i^-, x_j) \right) \left( 1 - \rho_-(\theta, x_j) \right) \big] . \label{eq:ksd_ber}
\end{multline}
The DFD-Bayes posterior, the standard posterior, and the KSD-Bayes posterior are recovered from generalised posterior \eqref{eq: beta in gen bayes} built upon losses \eqref{eq:dfd_ber}, \eqref{eq:nll_ber}, and \eqref{eq:ksd_ber}, where $\beta$ is set to $1$ for the standard posterior.

Next, we provide an analytical comparison of the credible regions of each posterior.
As discussed in \Cref{sec: methods}, a generalised posterior produces a credible region that differs from that of a standard posterior even in the asymptotic regime.
For illustration, we derive the asymptotic variance of each posterior for the Bernoulli model.
The asymptotic distribution of each posterior (appropriately centred) follows a Gaussian distribution $\mathcal{N}(0, \sigma^2)$ whose variance $\sigma^2$ is the inverse loss-Hessian at the minimiser $\theta_*$.
To simplify the derivation, we use the Hamming distance kernel $k(x, x') = \mathbbm{1}_{x = x'}$, that is $1$ when $x = x'$ and otherwise $0$, for the kernel Stein discrepancy.
Let $\rho_+(\theta, x) := p_\theta(x) / p_\theta(x^+) = \theta^{- 1 + 2 x} (1 - \theta)^{1 - 2 x}$.
By routine calculation, the second derivatives of each loss in the limit $n \to \infty$ are
\begin{align*}
    \frac{\partial^2}{\partial^2 \theta} \text{NLL}(p_\theta \| p) & = \E_{X \sim p}\bigg[ \frac{X}{\theta^2} + \frac{1 - X}{( 1 - \theta )^2} \bigg] , \\
    \frac{\partial^2}{\partial^2 \theta} \text{DFD}(p_\theta \| p) & = \E_{X \sim p}\bigg[ 2 \rho_-(\theta, X) \frac{\partial^2}{\partial^2 \theta} \rho_-(\theta, X) + 2 \bigg( \frac{\partial}{\partial \theta}  \rho_-(\theta, X) \bigg)^2  - 2 \frac{\partial^2}{\partial^2 \theta} \rho_+(\theta, X) \bigg] , \\
    \frac{\partial^2}{\partial^2 \theta} \text{KSD}(p_\theta \| p) & = \E_{X \sim p}\bigg[ 2 \rho_-(\theta, X) \frac{\partial^2}{\partial^2 \theta} \rho_-(\theta, X) + 2 \bigg( \frac{\partial}{\partial \theta}  \rho_-(\theta, X) \bigg)^2  - 2 \frac{\partial^2}{\partial^2 \theta} \rho_-(\theta, X) \bigg] .
\end{align*} 
For the kernel Stein discrepancy, given that $k(x_1, x_2) - k(x_1, x_2^-)$ and $k(x_1, x_2) - k(x_1^-, x_2)$ are $1$ when $x = x'$ and otherwise $-1$, we simplify the expression as
\begin{align*}
    \text{KSD}(p_\theta \| p) & \overset{\theta}{=} \E_{X_1, X_2 \sim p} \big[ \left( 1 - \rho_-(\theta, X_1) \right) k(X_1, X_2) \left( 1 - \rho_-(\theta, X_2) \right) \big] \\
    & \overset{\theta}{=} \E_{X \sim p} \big[  \left( 1 - \rho_-(\theta, X) \right)^2 \big] \overset{\theta}{=} \E_{X \sim p} \big[  \left( \rho_-(\theta, X) \right)^2 - 2  \rho_-(\theta, X) \big]  .
\end{align*}
Suppose that the population loss minimiser is $\theta_* = 0.5$, meaning that the data-generating distribution $p$ is the Bernoulli model with $\theta_* = 0.5$.
We then have $\rho_-(\theta_*, x) = 1$, $\frac{\partial}{\partial \theta} \rho_-(\theta_*, x) = 2^2 ( 1 - 2 x )$, $\frac{\partial^2}{\partial^2 \theta} \rho_-(\theta_*, x) = 2^4 ( 1 - 2 x )^2$, and $\frac{\partial^2}{\partial^2 \theta} \rho_+(\theta_*, x) = - 2^4 ( 1 - 2 x  )^2$.
These gives us that
\begin{align*}
    ( \partial^2 / \partial^2 \theta ) \text{NLL}(p_\theta \| p) |_{\theta = \theta_*} & = \E_{X \sim p}\left[ 2^2 \times (X + 1 - X) \right] = 4, \\
    ( \partial^2 / \partial^2 \theta ) \text{DFD}(p_\theta \| p)  |_{\theta = \theta_*} & = \E_{X \sim p}\left[ 3 \times 2^5 \times (1 - 2 X)^4 \right] = 96, \\
    ( \partial^2 / \partial^2 \theta ) \text{KSD}(p_\theta \| p)  |_{\theta = \theta_*} & = \E_{X \sim p} \left[ 2 \times 2^4 \times ( 1 - 2 X )^2 \right] = 32 .
\end{align*}
By taking the inverse, the asymptotic variance $\sigma^2$ for the standard Bayes, the DFD-Bayes, and the KSD-Bayes is each given by $1 / 4$, $1 / 96$, and $1 / 32$.
In this example, the above calculation suggests that the DFD-Bayes has the narrowest credible region.
The difference in these values emphasises the importance of calibrating $\beta$, which we do for all of our experiments in the manuscript.

Finally, we empirically demonstrate the difference between the posteriors and the influence of $\beta$.
We computed each posterior in cases where (i) $\beta$ is \emph{not} calibrated i.e.~$\beta = 1$ and (ii) $\beta$ \emph{is} calibrated (except for the standard posterior, which has $\beta = 1$).
A Metropolis--Hastings algorithm was adopted to sample from all the posteriors.
A Gaussian random walk proposal with covariance $\sigma^2=0.01$ was used.
In total, 100 samples were obtained from each posterior by thinning 2,000 samples, after an initial burn-in of length 2,000.
\Cref{fig:figure_ber_pos} shows each posterior computed without $\beta$ calibrated.
It confirms that, without calibration of $\beta$, the DFD-Bayes posterior has the narrowest credible region, which agrees with the analytical illustration provided above.
\Cref{fig:figure_ber_pos_beta} shows each posterior computed with $\beta$ calibrated, where the result for the standard posterior is identical to \Cref{fig:figure_ber_pos} as $\beta = 1$.
For the DFD-Bayes and the KSD-Bayes, calibration of $\beta$ was performed by our proposal in \Cref{sec:calibration}, where we used 100 bootstrap minimisers to compute the analytical solution of $\beta_*$ in \eqref{eq:beta_optimal}.
It demonstrates that calibration of $\beta$ prevents over-concentration of the DFD-Bayes and the KSD-Bayes.

\newpage

\begin{figure}[h!]
    \hspace{-50pt}
    \includegraphics[width=1.2\textwidth]{./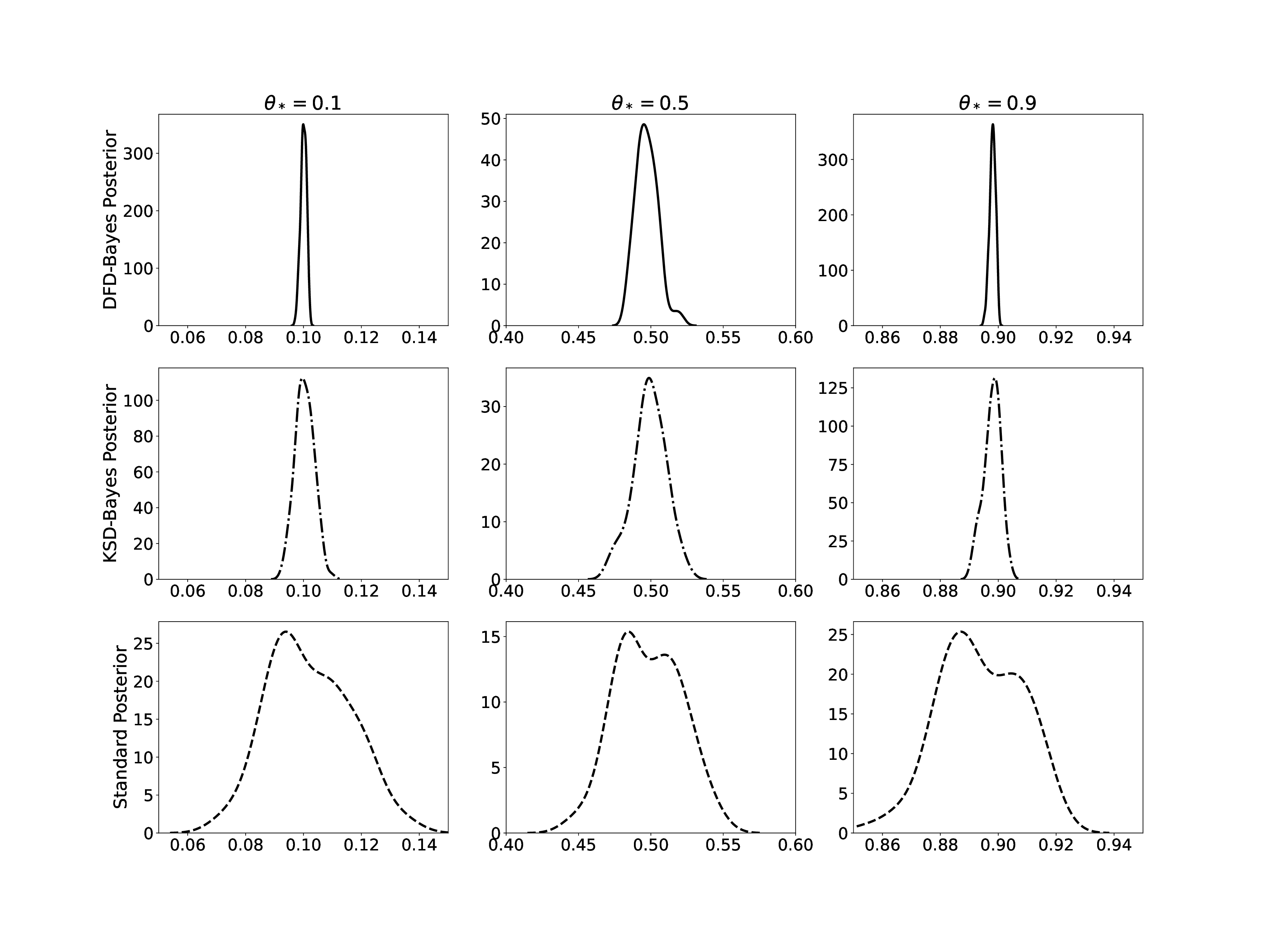}
    \caption{The DFD-Bayes posterior (top, solid), the KSD-Bayes posterior (middle, dash-dot), and the standard posterior (bottom, dash) computed without $\beta$ calibrated, for data generated from the Bernoulli model with three different parameters $\theta_* = 0.1$ (left), $\theta_* = 0.5$ (centre), and $\theta_* = 0.9$ (right). While their scales and geometries are different, all methods identify the correct parameter $\theta_*$.} \label{fig:figure_ber_pos}
\end{figure}

\newpage

\begin{figure}[h!]
    \hspace{-50pt}
    \includegraphics[width=1.2\textwidth]{./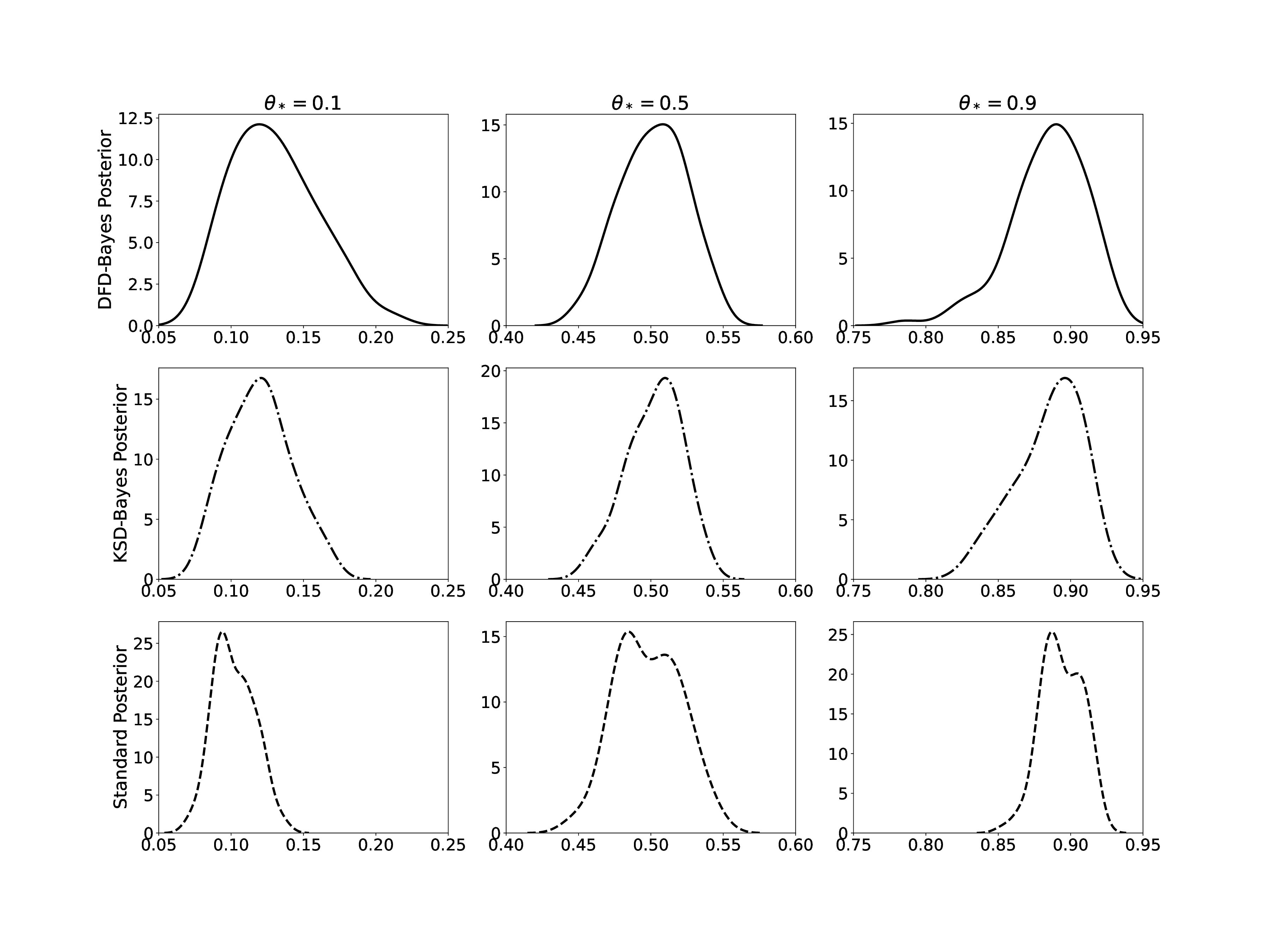}
    \caption{The DFD-Bayes posterior (top, solid), the KSD-Bayes posterior (middle, dash-dot), and the standard posterior (bottom, dash) computed with $\beta$ calibrated, for data generated from the Bernoulli model with three different parameters $\theta_* = 0.1$ (left), $\theta_* = 0.5$ (centre), and $\theta_* = 0.9$ (right). While their scales and geometries are different, all methods identify the correct parameter $\theta_*$.} \label{fig:figure_ber_pos_beta}
\end{figure}

\newpage

\subsection{Influence of Model Misspecification}

Next we turn our attention to the influence of model misspecification on each method.
It is convenient to consider the Poisson model to introduce a synthetic model misspecification.
For $x \in \N_0$, the Poisson model is 
\begin{align}
    p_\theta(x) = \frac{\theta^x \exp(\theta)}{x!} .
\end{align}
Then, the negative log-likelihood and the discrete Fisher divergence are 
\begin{align}
    \text{NLL}(p_\theta \| p_n) & \overset{\theta}{=} \frac{1}{n} \sum_{i=1}^{n} - x_i \log(\theta) + \theta , \label{eq:nll_poi} \\
    \text{DFD}(p_\theta \| p_n) & \overset{\theta}{=} \frac{1}{n} \sum_{i=1}^{n} \bigg( \frac{x_i}{\theta} \bigg)^2 - 2 \frac{x_i + 1}{\theta} . \label{eq:dfd_poi}
\end{align}
Letting $\rho_-(\theta, x) := p_\theta(x^-) / p_\theta(x) = x_i / \theta$, the kernel Stein discrepancy is
\begin{multline}
\text{KSD}(p_\theta \| p) \overset{\theta}{=} \frac{1}{n^2} \sum_{i=1}^{n} \sum_{j=1}^{n} \left( 1 - \rho_-(\theta, x_i) \right) k(x_i, x_j) \left( 1 - \rho_-(\theta, x_j) \right) + \\
\left( 1 - \rho_-(\theta, x_i) \right) \left( k(x_i, x_j) - k(x_i, x_j^-) \right) + \left( k(x_i, x_j) - k(x_i^-, x_j) \right) \left( 1 - \rho_-(\theta, x_j) \right) . \label{eq:ksd_poi}
\end{multline}
For the kernel Stein discrepancy, we use a similar choice of kernel to \cite{Matsubara2021}, that induces a robustness suitable for this example: $k(x, x') = m(x) \exp( - \mathbbm{1}_{x = x'} ) m(x')$ where $m(x) = \sigma( 15 - x )$ based on a sigmoid function $\sigma(t) = ( 1 + \exp( - t ) )^{-1}$.

For illustration, we synthetically introduce model misspecification by mixing outliers into the data.
We sampled 500 data points $\{ x_i \}_{i=1}^{n}$ from the Poisson model with the parameter $\theta_* = 5$, and replaced the $100 \times \epsilon$ percent of data with an outlier $y = 20$ that is larger than the $99.9$\% percentile of the Poisson distribution of $\theta_* = 5$.
This causes a synthetic model misspecification because the dataset is generated from a mixture of the Possion model and the Dirac distribution at $y = 20$, which cannot be adequately explained by only the Poisson model.
The sensitivity of each posterior to the outlier can be analytically investigated.
The standard Bayesian posterior is modestly impacted by the outlier $y$, given that the negative log-likelihood \eqref{eq:nll_poi} is a linear function of each datum $x_i$.
On the other hand, in this example, DFD-Bayes may be more severely impacted, given the discrete Fisher divergence \eqref{eq:dfd_poi} is a quadratic function of each datum $x_i$.
The growth rate of the kernel Stein discrepancy with respect to each datum $x_i$ is determined by the choice of kernel $k$.
We compute each posterior for two cases when $\epsilon = 0.0$ (no outlier contained) and $\epsilon = 0.1$ (10\% outliers contained), to empirically demonstrate the impact of the model misspecification.
The Metropolis--Hastings algorithm with the Gaussian random walk proposal of $\sigma^2 = 0.1$ is used to sample from each posterior with calibration applied.
In total, 100 samples were obtained from each posterior by thinning 2,000 samples, after an initial burn-in of length 2,000.

\begin{figure}[t]
    \hspace{-55pt}
    \includegraphics[width=1.2\textwidth]{./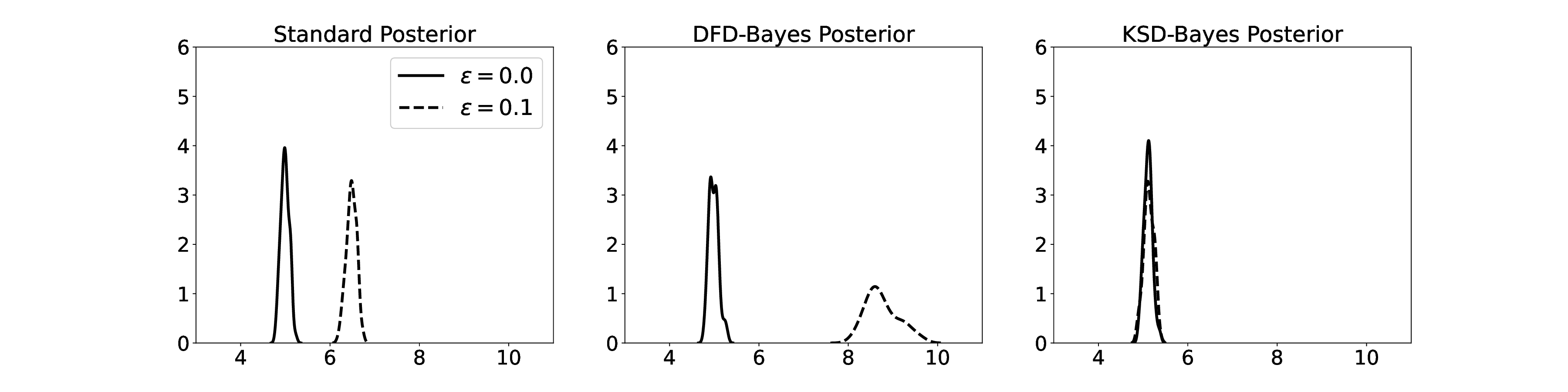}
    \caption{The standard posterior (left), The DFD-Bayes posterior (centre), and the KSD-Bayes posterior (right) computed with $\beta$ calibrated for data when $\epsilon = 0.0$ (solid line) and $\epsilon = 0.1$ (dash line), that is, the $10\%$ of data is replaced with outlier $y$.} \label{fig:figure_bin_pos}
\end{figure}

\Cref{fig:figure_bin_pos} demonstrates the sensitivity of the standard Bayesian posterior and DFD-Bayes to the outliers, whlie KSD-Bayes shows insensitivity due to the careful choice of kernel.
See also \Cref{subsec: KSD_robustness} for more discussion on robustness of KSD-Bayes.
In this example, the sensitivity of the DFD-Bayes to the outlier was higher than the standard Bayesian posterior, as anticipated.
\cite{Barp2019} proposed a robust analogue of the Fisher divergence in the continuous case. 
Although this is not a focus of this work, a similar approach may be applied to the discrete case when severe model misspecification is anticipated.
This would be an interesting avenue for further work, but our present interest is in computation for discrete intractable likelihood.

\subsection{Limitation of DFD-Bayes for Inference of Mixture Parameters} \label{subsec:limitation_poi_mix}

Finally, we provide an empirical illustration of the  limitation of score-based methods in \Cref{sec:limitation}.
It has been pointed out that score-based methods generally exhibit insensitivity to mixing proportions when mixture components have isolated high-probability regions \citep{wenliang2020blindness,zhang2022towards}.
In the continuous case, this can be observed using a mixture model of two Gaussian distributions $\P_\theta(x) = (1 - \theta) \times \mathcal{N}(-\mu, 1) + \theta \times \mathcal{N}(\mu, 1)$ whose parameter is the mixture ratio.
\cite{zhang2022towards} illustrated how the Fisher divergence is approximately constant over $\Theta$ if $\mu$ is large enough to isolate the components $\mathcal{N}(-\mu, 1)$ and $\mathcal{N}(\mu, 1)$.
We illustrate the same limitation for the discrete Fisher divergence using a mixture model of two Poisson distributions $p_\theta(x) = (1 - \theta) \times q_{\lambda_1}(x) + \theta \times q_{\lambda_2}(x)$, where $q_{\lambda_1}$ and $q_{\lambda_2}$ are the Poisson distributions with rate parameters $\lambda_1 > 0$ and $\lambda_2 > 0$.
\Cref{fig:limitation_poi_mix} shows the geometry of the discrete Fisher divergence between the mixture model $p_\theta$ and data generated from the mixture model $p_{\theta_*}$ with the true mixture proportion $\theta_*$, for two cases when the supports of the two Poisson distributions are highly isolated and when they are not isolated.
The correct mixture proportion $\theta_*$ was identified only in the latter case, while in the former case the discrete Fisher divergence was approximately constant.
See \cite{zhang2022towards} for a potential approach to remedy this general limitation of score-based methods.

\begin{figure}[t]
    \hspace{-50pt}
    \includegraphics[width=1.2\textwidth]{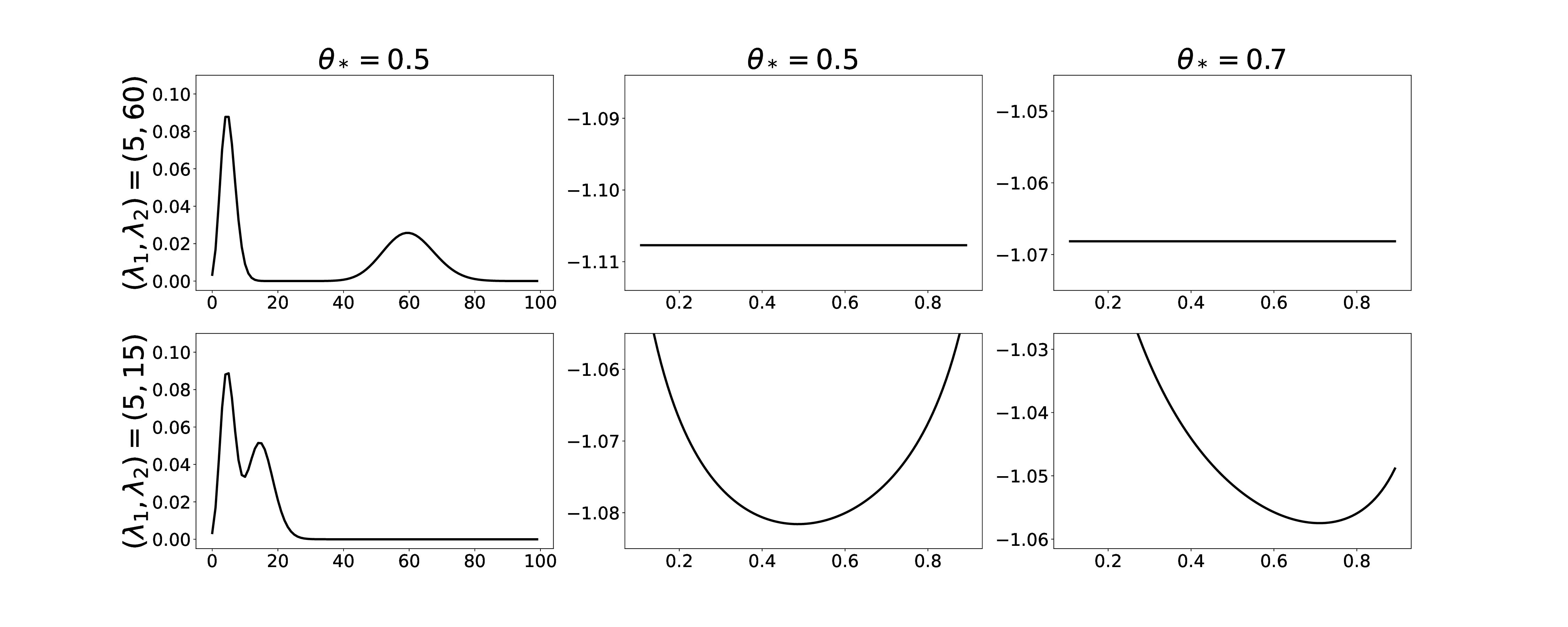}
    \caption{The form of the Poisson mixture model $p_{\theta_*}$ when $\theta_* = 0.5$ (left), the discrete Fisher divergence computed for data generated from the model $p_{\theta_*}$ with $\theta_* = 0.5$ (middle), and the discrete Fisher divergence computed for data generated from the model $p_{\theta_*}$ with $\theta_* = 0.7$ (right), for two cases where $\lambda_1 = 5, \lambda_2 = 60$ (top) and $\lambda_1 = 5, \lambda_2 = 15$ (bottom).}
    \label{fig:limitation_poi_mix}
\end{figure}

\section{Proofs of Theoretical Results}\label{app: proofs}

This section contains the proof of all theoretical results in the paper, including \Cref{prop: dif_sd_1}, \Cref{thm:bvm} and \Cref{thm:beta_choice}.

\subsection{Proof of \Cref{prop: dif_sd_1}} \label{apx: proof_dif_sd_1}

First we introduce three technical lemmas that will be useful:

\begin{lemma} \label{prop:invertible}
	For any $\bm{x} \in \X$ and $i = 1, \dots, d$, it holds that $( \bm{x}^{i-} )^{i+} = \bm{x}$ and $( \bm{x}^{i+} )^{i-} = \bm{x}$.
\end{lemma}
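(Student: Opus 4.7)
The plan is to reduce the multivariate statement to a univariate one: since $\bm{x}^{i+}$ and $\bm{x}^{i-}$ modify only the $i$-th coordinate of $\bm{x}$, the identity $(\bm{x}^{i-})^{i+} = \bm{x}$ is equivalent to $(s^-)^+ = s$ on the ordered set $S_i$ with $s = x_i$, and the other identity is symmetric. The task therefore becomes: for an arbitrary countable ordered set $S$ with more than one element and any $s \in S$, verify that the successor and predecessor operations are mutual inverses.

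To do so, I would follow the trichotomy laid out just after the Standing Assumption on $\X$ (with case (iv) identified with case (iii) by reversing the order). In case (i), where $S$ has neither smallest nor largest element, any $s \in S$ lies strictly between consecutive neighbours $r < s < t$, so by \Cref{asmp:oss} we have $s^- = r$, $r^+ = s$, $s^+ = t$, $t^- = s$, yielding both identities at once. In case (ii), interior elements are handled as in (i); at the boundary, the wraparound clauses give $(s_{\min}^-)^+ = s_{\max}^+ = s_{\min}$ and $(s_{\max}^+)^- = s_{\min}^- = s_{\max}$. In case (iii), interior elements are again immediate, and for the single boundary point the $\star$-extension yields $(s_{\min}^-)^+ = \star^+ = s_{\min}$, while $s_{\min}^+$ is an interior element whose predecessor is $s_{\min}$ by consecutiveness.

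Since the lemma is stated only for $\bm{x} \in \X$ rather than $\bm{x} \in \X^\star$, no identity with $\star$ on the input side needs to be checked, which keeps the case analysis short. There is no real mathematical obstacle here: \Cref{asmp:oss} is set up precisely so that ``previous'' and ``next'' are inverses wherever composed, and the proof is just the exhaustive verification of this. The only thing to be careful about is bookkeeping around the boundary and the auxiliary state $\star$, in particular checking that under the cyclic convention in case (ii) the two wraparound clauses point in compatible directions so that both identities hold at $s_{\min}$ and $s_{\max}$ simultaneously.
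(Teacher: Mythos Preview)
Your proposal is correct and follows essentially the same approach as the paper: reduce to the univariate statement on $S_i$, then verify the identity by case analysis on the existence of least and greatest elements. The paper's version is slightly terser (handling cases (i) and (ii) together by noting that $s \mapsto s^-$ is a bijection on $S$ with inverse $s \mapsto s^+$), while you spell out the interior and boundary points separately, but the structure and logic are the same.
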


\begin{proof}
	Since $\X = S_1 \times \dots \times S_d$ from the Standing Assumption, 
	\begin{align}
		\bm{x}^{i-} = (x_1, \cdots, x_i^-, \cdots, x_d), \qquad \bm{x}^{i+} = (x_1, \cdots, x_i^+, \cdots, x_d) .
	\end{align}
	It is thus sufficient to show that $( x_i^- )^+ = x_i$ and $( x_i^+ )^- = x_i$ for any $i = 1, \dots, d$.
	Consider, therefore, a 
	set $S \cong I \subseteq \mathbb{Z}$ with more than one element.
	Our aim is to establish the identity $( s^- )^+ = s$ and $( s^+ )^- = s$ for all $s \in S$.
	Existence of the least and greatest element $s_{\min}$ and $s_{\max}$ of $S$ determines four qualitatively distinct cases to be checked: (i) neither of them exist; (ii) both of them exists; (iii) only $s_{\min}$ exists; (iv) only $s_{\max}$ exists.
	Recall that we identify the case (iv) with (iii) without loss of generality by reversing the ordering of $S$.
	The identity for (i) \& (ii) is trivial since the maps $s \mapsto s^-$ is bijective from $S$ to itself with inverse $s \mapsto s^+$.
	For case (iii), we have $( s^- )^+ = s$ for $s \ne s_{\min}$ and $( s^+ )^- = s$ for all $s \in S$,
	Recalling the definition $s_{\min}^- = \star$ and $\star^+ = s_{\min}$ completes the argument.
\end{proof}

\begin{lemma} \label{prop:sumbypart}
	For any $f, g: \X \to \R$ and any $i = 1, \dots, d$, suppose $\sum_{\bm{x} \in \X} | f(\bm{x}) g(\bm{x}^{i-}) | < \infty$, that is, the series is absolutely convergent.
	Then we have 
	\begin{align}
		\sum_{\bm{x} \in \X} f(\bm{x}) g(\bm{x}^{i-}) = \sum_{\bm{x} \in \X} f(\bm{x}^{i+}) g(\bm{x}) . \label{eq: various series}
	\end{align}
\end{lemma}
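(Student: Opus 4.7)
The plan is to reduce the identity to a one-dimensional statement by factoring the $i$-th coordinate out of the sum, then carry out a change of index using the near-bijection $s \mapsto s^-$ established in \Cref{prop:invertible}. Absolute convergence of $\sum_{\bm{x}} |f(\bm{x}) g(\bm{x}^{i-})|$ licenses both Fubini and arbitrary rearrangement, so all manipulations below are legitimate.

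First I would use Fubini to write $\sum_{\bm{x} \in \X} f(\bm{x}) g(\bm{x}^{i-}) = \sum_{\bm{x}_{-i}} \sum_{s \in S_i} F(s) G(s^-)$, where $\bm{x}_{-i}$ collects the coordinates other than the $i$-th, $F(s) := f(\bm{x}_{-i}, s)$ and $G(s) := g(\bm{x}_{-i}, s)$ (both extended by zero at $\star$), and the shift $\bm{x} \mapsto \bm{x}^{i-}$ acts only on the $i$-th coordinate. It then suffices to show, for each fixed tail $\bm{x}_{-i}$, the one-dimensional identity $\sum_{s \in S_i} F(s) G(s^-) = \sum_{s \in S_i} F(s^+) G(s)$; re-assembling via Fubini yields the claim.

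Next I would split into the three structural cases for $S_i$ from the discussion preceding \Cref{asmp:oss} (identifying case (iv) with case (iii) by reversing the order). In cases (i) and (ii), \Cref{prop:invertible} implies that $s \mapsto s^-$ is a bijection from $S_i$ to itself with inverse $s \mapsto s^+$, so the substitution $t = s^-$ (equivalently $s = t^+$) immediately turns the left side into $\sum_{t \in S_i} F(t^+) G(t)$. In case (iii), the $s = s_{\min}$ summand on the left vanishes because $G(s_{\min}^-) = G(\star) = 0$ by the extension convention, and \Cref{prop:invertible} gives that $s \mapsto s^-$ is a bijection $S_i \setminus \{s_{\min}\} \to S_i$ with inverse $t \mapsto t^+$; the same substitution then yields the right-hand side.

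There is no serious obstacle here: the content is entirely bookkeeping, with the only subtlety being the correct treatment of the boundary element $s_{\min}$ in case (iii), which is handled cleanly by the $\star$-extension convention. The absolute convergence hypothesis is needed both to apply Fubini in the first step and to guarantee that the right-hand sum is itself absolutely convergent (which also follows directly by applying the same change of variable to $|fg|$ in place of $fg$).
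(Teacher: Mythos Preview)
Your proposal is correct and follows essentially the same approach as the paper: reduce to the one-dimensional sum in the $i$-th coordinate via Fubini, then handle the three structural cases for $S_i$ by the change of index $t = s^-$, with the $\star$-extension disposing of the boundary term in case (iii). The paper writes out the cases slightly more explicitly with indexed elements, but the argument is the same.
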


\begin{proof}
	Since $\X = S_1 \times \dots \times S_d$ from the Standing Assumption, the series can be expressed as
	\begin{align*}
		\sum_{\bm{x} \in \X} f(\bm{x}) g(\bm{x}^{i-}) & = \sum_{x_1 \in S_1} \cdots \sum_{x_i \in S_i} \cdots \sum_{x_d \in S_d} f(x_1, \cdots, x_i, \cdots, x_d) g(x_1, \cdots, x_i^-, \cdots, x_d) , \\ 
		\sum_{\bm{x} \in \X} f(\bm{x}^{i+}) g(\bm{x})	& = \sum_{x_1 \in S_1} \cdots \sum_{x_i \in S_i} \cdots \sum_{x_d \in S_d} f(x_1, \cdots, x_i^{+}, \cdots, x_d) g(x_1, \cdots, x_i, \cdots, x_d) .
	\end{align*}
	Holding the coordinates $x_1,\dots,x_{i-1},x_{i+1},\dots,x_d$ fixed, and exploiting absolute convergence to justify the interchange of summations, the claimed result follows if
	\begin{align}
		\sum_{x_i \in S_i} \tilde{f}(x_i) \tilde{g}(x_i^{-}) = \sum_{x_i \in S_i} \tilde{f}(x_i^{+}) \tilde{g}(x_i) \label{eq:identitysi}
	\end{align}
	where $\tilde{f}(x_i) := f(x_1, \cdots, x_i, \cdots, x_d)$ and $\tilde{g}(x_i) := g(x_1, \cdots, x_i, \cdots, x_d)$ are viewed as functions on $S_i$.
	
	Consider, therefore, an arbitrary set $S \cong I \subseteq \mathbb{Z}$, for which we aim to establish the identity $\sum_{s \in S} h(s) k(s^-) = \sum_{s \in S} h(s^+) k(s)$ for any functions $h,k: S \to \R$ s.t. $\sum_{s \in S} | h(s) k(s^-) | < \infty$.
	From the definition of an order isomorphism, the elements of $S$ can be indexed as $S = \{s_i : i \in I\}$, where $s_i < s_j$ if and only if $i < j$.
	The identity therefore can be written as $\sum_{i \in I} h(s_i) k(s_i^-) = \sum_{i \in I} h(s_i^+) k(s_i)$, and will be verified for the three qualitatively distinct cases of index set $I$ described in the proof of \Cref{prop:invertible}:
	\begin{enumerate}
		\item[(i)] $I = \mathbb{Z}$. The result is immediate, since $(s_i, s_i^-) = (s_i,s_{i-1})$ and $(s_i^+, s_i) = (s_{i+1},s_i)$ range over the same set for $i \in I$. The series $\sum_{i \in I} h(s_i^+) k(s_i)$ is absolutely convergent since the sets $\{ h(s_i) k(s_i^-) \}_{i \in I}$ and $\{ h(s_i^+) k(s_i) \}_{i \in I}$ in the two series are equal.
		\item[(ii)] $I = \{1,\dots,n\}$ for some $n \in \mathbb{N}$.  In this case $s_{\min} = s_1$ and $s_{\max} = s_n$, and it follows from the definition of decrements and increments that
		\begin{align*}
			\sum_{i \in I} h(s_i) k(s_i^-) & = h(s_1) k(s_1^-) + h(s_2) k(s_1) + \cdots + h(s_n) k(s_{n-1}) \\
			& = h(s_n^+) k(s_n) + h(s_2) k(s_1) + \cdots + h(s_n) k(s_{n-1})  = \sum_{i \in I} h(s_i^+) k(s_i) ,
		\end{align*}
		where the sets $\{ h(s_i) k(s_i^-) \}_{i \in I}$ and $\{ h(s_i^+) k(s_i) \}_{i \in I}$ are again equal.
		\item[(iii)] $I = \{1,2,\dots\}$.  In this case $s_{\min} = s_1$, and it follows from the definition $s_1^- = \star$ and $k(\star) = 0$ that
		\begin{align*}
			\sum_{i \in I} h(s_i) k(s_i^-) & = \underbrace{ h(s_1) k(\star)}_{=0} + h(s_2) k(s_1) + h(s_3) k(s_2) + \cdots \\
			& = h(s_2) k(s_1) + h(s_3) k(s_2) + \cdots = \sum_{i \in I} h(s_i^+) k(s_i) .
		\end{align*}
		The series $\sum_{i \in I} h(s_i^+) k(s_i)$ is absolutely convergent since the set $\{ h(s_i^+) k(s_i) \}_{i \in I}$ is a subset of the absolutely summable set $\{ h(s_i) k(s_i^-) \}_{i \in I}$.
	\end{enumerate}
	This completes the proof.
\end{proof}

Let $F(\X,S)$ denote the set of all functions $f$ of the form $f : \X \to S$.

\begin{lemma} \label{prop: injection}
    For probability mass function $p:\X \rightarrow (0,\infty)$, the map $\mu_p := (\nabla^- p) / p$ is an injection $\mu : F(\X,(0,\infty)) \to F(\X, \R^d)$.
\end{lemma}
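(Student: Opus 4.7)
The plan is to show that if $\mu(p) = \mu(q)$ for two positive probability distributions $p, q$ on $\X$, then $p = q$. The overall strategy is to analyse the pointwise ratio $r(\bm{x}) := p(\bm{x}) / q(\bm{x})$, establish that it is invariant under the natural increment/decrement moves on $\X$, and hence constant, whereupon normalisation of $p$ and $q$ forces $p = q$.

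First I would unpack the hypothesis componentwise. Writing out the $i$-th coordinate of $(\nabla^- p)/p = (\nabla^- q)/q$ and simplifying gives
\begin{equation*}
\frac{p(\bm{x}^{i-})}{p(\bm{x})} \; = \; \frac{q(\bm{x}^{i-})}{q(\bm{x})},
\end{equation*}
valid whenever $\bm{x}^{i-} \in \X$ (i.e.\ not the auxiliary state $\star$). Rearranging, this is exactly the statement $r(\bm{x}^{i-}) = r(\bm{x})$. Thus $r$ is invariant under any decrement move that stays inside $\X$. I would then invoke \Cref{prop:invertible}: applying the displayed identity at $\bm{y} = \bm{x}^{i+}$ (when $\bm{x}^{i+} \in \X$) and using $(\bm{x}^{i+})^{i-} = \bm{x}$ gives $r(\bm{x}^{i+}) = r(\bm{x})$ as well. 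Together, $r$ is invariant under every valid increment and decrement move along every coordinate.

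Next I would exploit the product structure $\X = S_1 \times \dots \times S_d$ from the Standing Assumption: for any two points $\bm{x}, \bm{y} \in \X$, one can interpolate coordinate by coordinate, and within each $S_i$ any two elements are connected by a finite chain of successor/predecessor steps inside $S_i$ (handled in the usual three cases: no endpoints, both endpoints with the cyclic convention $s_{\min}^- = s_{\max}$, or only $s_{\min}$). Each step of this chain preserves $r$, so $r$ is constant on $\X$. Finally, summing $p = r \cdot q$ over $\X$ and using $\sum p = \sum q = 1$ forces the constant to be $1$, giving $p = q$.

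The main obstacle I anticipate is the boundary case of $\star$: for $\bm{x}$ with $x_i = s_{\min}$ in case (iii), the raw equation is vacuous because both sides evaluate to $1$. This is precisely the role of \Cref{prop:invertible} — it lets me turn an ``outgoing decrement'' at a boundary point into an ``incoming decrement'' from the interior, recovering the invariance of $r$ at those points as well. Apart from this, the argument is bookkeeping about the structure of each $S_i$, and no analytic estimates are required.
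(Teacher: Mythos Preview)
Your proposal is correct and reaches the same conclusion as the paper, but by a somewhat different route. The paper's argument is constructive: it identifies each $S_i$ with a consecutive subset of $\mathbb{Z}$ containing $0$, fixes the base point $\bm{0}\in\X$, and shows by induction on $\|\bm{x}\|_1$ that every value $p(\bm{x})$ can be written explicitly in terms of $\mu_p$ and $p(\bm{0})$, with the normalisation $\sum_{\bm{x}} p(\bm{x}) = 1$ then determining $p(\bm{0})$. Your approach instead compares two candidates $p,q$ through the ratio $r=p/q$, shows $r$ is invariant under every valid increment/decrement (using \Cref{prop:invertible} exactly as you describe to handle the $\star$ boundary), deduces $r$ is constant by connectedness of $\X$ under these moves, and uses normalisation to force the constant to be $1$. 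The two arguments are dual expressions of the same fact---that $\mu_p$ determines $p$ up to a multiplicative scalar on a space connected by increment/decrement steps---but yours is a little cleaner in that it avoids a distinguished base point and an explicit $\|\cdot\|_1$-induction, while the paper's has the advantage of actually exhibiting the inverse.
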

\begin{proof}
    It suffices to show that each value $p(\bm{x})$, for $\bm{x} \in \X$, can be explicitly recovered from $\mu_p$.
    Note that, since $p$ takes values in $(0,\infty)$, the embedding $\mu_p$ is well-defined.
    From the Standing Assumption, we have that $\X = S_1 \times \dots \times S_d$, where each $S_i \cong I_i \subseteq \mathbb{Z}$ is a set with more than one element.
    Since the $S_i$ serve only as index sets, we can without loss of generality assume that $S_i$ is a consecutive subset of $\mathbb{Z}$ and that $0 \in S_i$, for each $i = 1,\dots,d$.
    The idea of the proof is to demonstrate that each of the quantities $p(\bm{x})$ can be explicitly expressed in terms of $\mu_p$, $p(\bm{0})$ and $\{ p(\bm{y}) : \|\bm{y}\|_1 < \|\bm{x}\|_1 \}$, where $\|\bm{x}\|_1 := |x_1| + \dots + |x_d|$.
    It would then follow from a simple inductive argument that $p(\bm{x})$ can be expressed in terms of $\mu_p$ and $p(\bm{0})$.
    Finally, the constraint that $\sum_{\bm{x} \in \X} p(\bm{x}) = 1$ uniquely determines $p(\bm{0})$, demonstrating that $p(\bm{x})$ can be explicitly recovered.
    
    Given $\bm{x} \in \X$, assume $\bm{x} \neq \bm{0}$, for otherwise the claim will trivially hold.
    Then let $i \in \{1,\dots,d\}$ be such that $x_i \neq 0$.
    If $x_i > 0$, then from the definition of $\mu_p(\bm{x})_i = 1 - p(\bm{x}^{i-}) / p(\bm{x})$ we have the relation 
    \begin{align*}
        p(\bm{x}) = \frac{ p(\bm{x}^{i-}) }{ 1 - \mu_p(\bm{x})_i }
    \end{align*} 
    where $\|\bm{x}^{i-}\|_1 = \|\bm{x}\|_1 - 1$.
    Conversely, if $x_i < 0$, then using \Cref{prop:invertible} we have $\mu_p(\bm{x}^{i+})_i = 1 - p(\bm{x}) / p(\bm{x}^{i+})$ and we have the relation 
    \begin{align*}
        p(\bm{x}) = [ 1 - \mu_p(\bm{x}^{i+})_i ] p(\bm{x}^{i+})
    \end{align*} 
    where $\|\bm{x}^{i+}\|_1 = \|\bm{x}\|_1 - 1$.
    The previously described inductive argument completes the proof.
\end{proof}

Now we prove the main result:

\begin{proof}[Proof of \Cref{prop: dif_sd_1}]
    Expanding the square gives that
    \begin{multline*}
	\operatorname{DFD}(p \| q) = \E_{X \sim q}\bigg[ \sum_{i=1}^{d} \bigg( \frac{p(X) - p(X^{j-})}{p(X)} \bigg)^2 \\
        - 2 \underbrace{ \frac{p(X) - p(X^{j-})}{p(X)} \frac{q(X) - q(X^{j-})}{q(X)} }_{=:(*)} + \bigg( \frac{q(X) - q(X^{j-})}{q(X)} \bigg)^2 \bigg] .
    \end{multline*}
    Denote by $\text{supp}(q)$ the support of $q$ i.e.~$\{ \bm{x} \in \X \mid q(\bm{x}) \ne 0 \}$.
    For the term $\E_{X \sim q}[ (*) ]$, it follows from the definition $\E_{X \sim q}[ f(X) ] = \sum_{\bm{x} \in \text{supp}(q)} f(\bm{x}) q(\bm{x})$ that
    \begin{align*}
        \E_{X \sim q}[ (*) ] & = \sum_{j=1}^{d} \E_{X \sim q}\bigg[ \frac{p(X) - p(X^{j-})}{p(X)} - \frac{p(X) - p(X^{j-})}{p(X)} \frac{q(X^{j-})}{q(X)} \bigg] \\
        & = \sum_{j=1}^{d} \bigg\{ \sum_{\bm{x} \in \text{supp}(q)} \frac{p(\bm{x}) - p(\bm{x}^{j-})}{p(\bm{x})} q(\bm{x}) - \underbrace{ \sum_{\bm{x} \in \text{supp}(q)} \frac{p(\bm{x}) - p(\bm{x}^{j-})}{p(\bm{x})} q(\bm{x}^{j-}) }_{(**)} \bigg\} ,
    \end{align*}
    We apply \Cref{prop:sumbypart} to the term $(**)$ with $f(\bm{x}) = ( p(\bm{x}) - p(\bm{x}^{j-}) ) / p(\bm{x})$ and $g(\bm{x}) = q(\bm{x})$, where $f(\bm{x}^{j+})$ is well-defined for all $\bm{x} \in \text{supp}(q)$ due to the assumption that $p(\bm{x}^{j+}) > 0$ for $\bm{x} \in \text{supp}(q)$ and \Cref{prop:sumbypart} is thus applicable.
    This reveals that 
    \begin{align*}
        (**) = \sum_{\bm{x} \in \text{supp}(q)} \frac{p(\bm{x}^{j+}) - p(\bm{x})}{p(\bm{x}^{j+})} q(\bm{x}) 
    \end{align*}
    for each $j = 1, \dots, d$, where \Cref{prop:invertible} is used to deduce that $(\bm{x}^{j-})^{j+} = \bm{x}$.
    Hence, we have
    \begin{align*}
        \E_{X \sim q}[ (*) ] & = \E_{X \sim q}\bigg[ \sum_{i=1}^{d} \frac{p(X) - p(X^{j-})}{p(X)} - \frac{p(X^{j+}) - p(X)}{p(X^{j+})} \bigg] = \E_{X \sim q}\bigg[ \sum_{i=1}^{d} - \frac{p(X^{j-})}{p(X)} + \frac{p(X)}{p(X^{j+})} \bigg] .
    \end{align*}
    Plugging this equality in the discrete Fisher divergence at the top and completing the expansion establish that
    \begin{align*}
	\operatorname{DFD}(p \| q) & = \E_{X \sim q}\bigg[ \sum_{i=1}^{d} \bigg( 1 - \frac{p(X^{j-})}{p(X)} \bigg)^2 + 2 \frac{p(X^{j-})}{p(X)} - 2 \frac{p(X)}{p(X^{j+})} + \bigg( 1 \frac{q(X^{j-})}{q(X)} \bigg)^2 \bigg] \\
        & = \E_{X \sim q}\bigg[ \sum_{i=1}^{d} \bigg( \frac{p(X^{j-})}{p(X)} \bigg)^2 - 2 \frac{p(X)}{p(X^{j+})} \bigg] + \underbrace{ \E_{X \sim q}\bigg[ \sum_{i=1}^{d} 1 + \bigg( 1 - \frac{q(X^{j-})}{q(X)} \bigg)^2 \bigg] }_{=: C(q)} .
    \end{align*}
    
    Finally we verify that $\operatorname{DFD}(p \| q) = 0$ if and only if $p = q$.
    From \Cref{prop: injection} we have the injective embedding $p \mapsto \mu_p := (\nabla^- p) / p$ of a positive density $p : \X \rightarrow (0,\infty)$ into $F(\X, \R^d)$.
    Since $q > 0$, the map $p \mapsto \mu_p$ is also an injection into $L^2(q,\R^d)$, equipped with the canonical norm $\|\nu\|_{L^2(q,\R^d)} := \E_{X \sim q}[\|\nu(X)\|^2],~\forall \nu \in L^2(q,\R^d)$. 
    From \eqref{eq: DFD def} we recognise that $\operatorname{DFD}(p \| q) = \| \mu_p - \mu_q \|_{L^2(q,\R^d)}^2 $ is the squared distance between $\mu_p$ and $\mu_q$ according to the canonical norm of $L^2(q,\R^d)$.
    Since $\| \mu_p - \mu_q \|_{L^2(q,\R^d)} = 0$ if and only if $\mu_p = \mu_q$ in $L^2(q,\R^d)$, it follows from injectivity of $p \mapsto \mu_p$ that $\operatorname{DFD}(p \| q) = 0$ if and only if $p = q$, as required.
\end{proof}

\subsection{Proof of \Cref{thm:bvm}} \label{apx: proof_bvm}

This appendix contains the proof of \Cref{thm:bvm}.
\cite{Miller2019} provided sufficient conditions for consistency and asymptotic normality of generalised Bayesian posteriors of the form $\pi_n^D(\mathrm{d}\theta) \propto \exp( - n D_n(\theta)) \pi(\mathrm{d}\theta)$, where $D_n : \Theta \rightarrow \mathbb{R}$ is a loss function that may depend on the data $\{ \bm{x}_i \}_{i=1}^{n}$. 
These results can be leveraged to analyse DFD-Bayes, by setting
\begin{align}
    D_n(\theta) \stackrel{\theta}{=}  \frac{\beta}{n} \sum_{i=1}^{n} \sum_{i=1}^{d} \bigg( \frac{ p_{\theta}(\bm{x}_i^{j-})}{p_{\theta}(\bm{x}_i)} \bigg)^2 - 2 \bigg( \frac{ p_{\theta}(\bm{x}_i) }{ p_{\theta}(\bm{x}_i^{j+}) } \bigg) . \label{eq: our fn}
\end{align}
These conditions were refined into more applicable forms in \cite{Matsubara2021}.
While \cite{Matsubara2021} focused on their particular case of losses based on kernelised Stein discrepancies, their argument can be directly applied for essentially any arbitrary loss $D_n$.
We repeat this argument by modifying it so that it can be applied for any loss $D_n$.
Let $B_\epsilon(\theta_*) := \{ \theta \in \Theta : \| \theta - \theta_* \| < \epsilon \}$.
\begin{theorem} \label{asmp:general_bvm}
	Let $\Theta \subseteq \R^v$ be Borel.
    Let $D : \Theta \rightarrow \R$ be a fixed measurable function and $\{ D_n \}_{n=1}^{\infty}$ be a sequence s.t.~$D_n : \Theta \rightarrow \R$ is a measurable function dependent on random data $\{ \bm{x}_i \}_{i=1}^{n} \subset \X$.
    Let $H_n(\theta) := \nabla_\theta^2 D_n(\theta)$.
	Suppose that, for some bounded convex open set $U \subseteq \Theta$, the following hold:
	\begin{enumerate}
		\item[C1] $D_n$ a.s. converges pointwise to $D$;
		\item[C2] $D_n$ is $r$ times continuously differentiable in $U$ and $\limsup_{n \to \infty} \sup_{\theta \in U} \| \nabla_\theta^r D_n(\theta) \| < \infty$ a.s.~for $r = 1, 2, 3$;
		\item[C3] for all $n$ sufficiently large, $\theta_n \in U$ for any $\theta_n \in \argmin D_n$ a.s., and a point $\theta_* \in U$ uniquely attains $D(\theta_*) = \inf_{\theta \in \Theta} D(\theta)$. 
		\item[C4] $H_n(\theta_*) \overset{a.s.}{\rightarrow} H_*$ for some nonsingular $H_*$;
		\item[C5] $\pi$ is continuous and positive at $\theta_*$.
	\end{enumerate}
	Then, for any $\epsilon > 0$, the generalised posterior $\pi_n^D(\mathrm{d}\theta) \propto \exp( - n D_n(\theta)) \pi(\mathrm{d}\theta)$ satisfies
	\begin{align*}
		\int_{B_\epsilon(\theta_*) } \pi_n^D(\theta) \; \mathrm{d} \theta \as 1 .
	\end{align*}
	Let $( \theta_n )_{n=1}^{\infty} \subset \Theta$ be a sequence s.t.~$\theta_n$ minimises $D_n$ for all $n$ sufficiently large. 
	Denote by $\widetilde{\pi}_n^D$ a density on $\R^v$ of the random variable $\sqrt{n} (\theta - \theta_n)$, where $\theta \sim \pi_n^D$.
	Then
	\begin{align*}
		\int_{\R^d} \left| \widetilde{\pi}_n^D(\theta) - \frac{1}{Z_*} \exp\left( - \frac{1}{2} \theta \cdot H_* \theta \right) \right| \mathrm{d} \theta  \as 0 
	\end{align*}
	where $Z_*$ is the normalising constant of $\exp(-\frac{1}{2} \theta \cdot H_* \theta )$.
\end{theorem}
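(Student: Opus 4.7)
The plan is to prove the result in two stages, following the template of \citet{Miller2019} as refined in \citet{Matsubara2021}, but stated entirely in terms of an abstract loss $D_n$ satisfying C1--C5. Stage one establishes posterior concentration on $B_\epsilon(\theta_*)$; stage two performs a local Taylor expansion around the empirical minimiser $\theta_n$ and upgrades the resulting pointwise Gaussian limit to convergence in total variation.

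For consistency I would first combine C1 and C2 to upgrade the pointwise a.s.\ convergence $D_n \to D$ into locally uniform a.s.\ convergence on the bounded convex open set $U$: the uniform bound on $\nabla_\theta D_n$ supplied by C2 yields equicontinuity on $U$, and on a compact subset pointwise plus equicontinuous convergence gives uniform convergence. From C3, $\theta_*$ is the unique global minimiser of $D$, so for any small enough $\epsilon$ with $\overline{B_\epsilon(\theta_*)} \subset U$ there exists a deterministic $\delta > 0$ with $\inf_{\theta \in U \setminus B_\epsilon(\theta_*)} D(\theta) - D(\theta_*) \geq 2\delta$; uniform convergence on $U$ and the fact that $\theta_n \in U$ eventually then give $\inf_{\theta \in U \setminus B_\epsilon(\theta_*)} D_n(\theta) - D_n(\theta_n) \geq \delta$ eventually a.s. Outside $U$ we use the global identifiability in C3 together with pointwise convergence and a covering/compactness argument (or, equivalently, the reparameterisation remark in \Cref{asmp:minimiser}) to obtain an analogous exponential separation. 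Pairing this with a Laplace-style lower bound $\int_U e^{-nD_n(\theta)}\pi(\mathrm{d}\theta) \geq c \, n^{-p/2} e^{-nD_n(\theta_n)}$, obtained from the local quadratic lower bound on $D_n$ around $\theta_n$ and continuity/positivity of $\pi$ at $\theta_*$ (C5), shows that the mass outside $B_\epsilon(\theta_*)$ decays exponentially in $n$, delivering the $\as 1$ statement.

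For the BvM part, since $\theta_n$ is an interior minimiser of $D_n$ on $U$ for large $n$, $\nabla_\theta D_n(\theta_n) = 0$ and a Taylor expansion gives $D_n(\theta) - D_n(\theta_n) = \tfrac{1}{2}(\theta-\theta_n)^\top H_n(\theta_n)(\theta-\theta_n) + R_n(\theta)$ with $|R_n(\theta)| \leq C \|\theta - \theta_n\|^3$ uniformly in $U$ thanks to the third-derivative bound in C2. After the change of variables $u = \sqrt{n}(\theta - \theta_n)$, the exponent becomes $\tfrac{1}{2} u^\top H_n(\theta_n) u + O(n^{-1/2}\|u\|^3)$, while the prior factor $\pi(\theta_n + u/\sqrt{n}) \to \pi(\theta_*)$ by C5. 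Consistency gives $\theta_n \to \theta_*$ a.s., and combining this with $H_n(\theta_*) \to H_*$ from C4 and the uniform Lipschitz bound on $H_n$ on $U$ from C2 yields $H_n(\theta_n) \to H_*$ a.s. Hence the rescaled density $\widetilde{\pi}_n^D(u)$ converges pointwise in $u$ to $Z_*^{-1}\exp(-\tfrac{1}{2} u^\top H_* u)$.

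The main obstacle, as always in BvM arguments, will be upgrading this pointwise limit to convergence in $L^1(\R^p)$. The local cubic remainder bound only controls the rescaled density on a $\sqrt{n}$-shrinking ball around $0$, so the tails must be handled separately: I would split the integral at $\|u\| \leq M$ and $\|u\| > M$, use nonsingularity of $H_*$ to produce a uniform quadratic lower bound $\tfrac{1}{2} u^\top H_n(\theta_n) u \geq c \|u\|^2$ for large $n$ which provides a Gaussian envelope enabling dominated convergence on $\|u\| \leq M$, and use the stage-one concentration result to control $\|u\| > M$ via the exponential decay of $e^{-n[D_n(\theta)-D_n(\theta_n)]}$ outside $B_\epsilon(\theta_*)$ (choosing $\epsilon$ and $M$ compatibly). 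Passing $M \to \infty$ after $n \to \infty$, and invoking Scheffé's lemma to absorb the discrepancy in the normalising constants, gives the claimed total-variation convergence and completes the proof.
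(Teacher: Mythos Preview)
Your proposal is correct and follows essentially the same route as the paper: both arguments first upgrade C1 and C2 to a.s.\ uniform convergence of $D_n$ on $U$ via equicontinuity (the paper records this as a separate lemma invoking \citet{Davidson1994}), then derive strong consistency $\theta_n \to \theta_*$, a Taylor expansion with cubic remainder controlled by C2, convergence $H_n(\theta_n) \to H_*$ via the triangle inequality and the third-derivative bound, and a strict separation $\liminf_n \inf_{\theta \notin B_\epsilon(\theta_n)} D_n(\theta) - D_n(\theta_n) > 0$ split according to whether $\theta \in U$ or not. The only substantive difference is that, having assembled these ingredients, the paper packages them as the six preconditions of \citet[Theorem~4]{Miller2019} and invokes that result as a black box, whereas you sketch the Laplace/total-variation argument directly (Gaussian envelope, inner/outer split, Scheff\'e). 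Your direct argument is standard and valid; the paper's route is shorter on the page but hides exactly the work you spell out. One small point: for the region $\Theta \setminus U$ the paper does not use a covering/compactness argument as you suggest, but instead relies directly on C3 (the global minimiser of $D_n$ lies in $U$ for large $n$) together with an explicit extension of $D_n$ to $\R^p \setminus \Theta$ by a trivially large constant; your compactness idea would require extra structure on $\Theta$ that is not assumed, so the reparameterisation remark or the paper's device is the safer justification here.
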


\noindent
The proof of \Cref{asmp:general_bvm} is deferred to \Cref{sec: proof_con_bvm}.
The main proof of \Cref{thm:bvm} aims to show that the preconditions C1-C5 of \Cref{asmp:general_bvm} are satisfied for the particular function $D_n$ in \eqref{eq: our fn}, defining the DFD-Bayes generalised posterior.

\begin{proof}[Proof of \Cref{thm:bvm}]
	Without loss of generality, we will give the proof for $\beta = 1$ for notational convenience.\footnote{To extend the proof to arbitrary $\beta > 0$, simply replace $D_n(\theta) = \operatorname{DFD}(p_\theta \| p_n)$ in all arguments by $D_n(\theta) = \beta \operatorname{DFD}(p_\theta \| p_n)$. All the arguments hold immediately since $\beta$ is a constant.}
	Let $r_{j-}(\bm{x}, \theta) := p_{\theta}( \bm{x}^{j-} ) / p_{\theta}( \bm{x} )$ and $r_{j+}(\bm{x}, \theta) := p_{\theta}( \bm{x} ) / p_{\theta}( \bm{x}^{j+} )$ for each $j = 1, \dots, d$.
	We can write $D_n$ as
	\begin{align*}
		D_n(\theta) & \stackrel{\theta}{=} \frac{1}{n} \sum_{i=1}^{n} \underbrace{ \sum_{j=1}^{d} \big( r_{j-}(\bm{x}_i, \theta) \big)^2 - 2 r_{j+}(\bm{x}_i, \theta) }_{=: R(\bm{x}_i,\theta)} .
	\end{align*}
	In what follows we set $D(\theta) := \E_{X \sim p}[ R(X, \theta) ]$ and verify that preconditions C1-C5 of \Cref{thm:bvm} are satisfied.
	Note that C3 holds directly by \Cref{asmp:minimiser} and C5 is also assumed directly in \Cref{thm:bvm}.
	
	\vspace{5pt}
	\noindent
	\textbf{C1:} By the strong law of large numbers \citep[Theorem~2.5.10]{Durrett2010b},
	\begin{align}
		D_n(\theta) & = \frac{1}{n} \sum_{i=1}^{n} R(\bm{x}_i, \theta) \quad \as \quad \E_{X \sim p}[ R(X, \theta) ] = D(\theta), \label{eq:pwcon}
	\end{align}
	provided that $\E_{X \sim p}[ | R(X, \theta) | ] < \infty$ for each $\theta \in \Theta$.
	Thus we must check that $\E_{X \sim p}[ | R(X, \theta) | ] < \infty$.
	By the triangle inequality, 
	\begin{align*}
		\E_{X \sim p}[ | R(X, \theta) | ] & = \E_{X \sim p}\left[ \left| R(X, \theta) \right| \right] + C(p) - C(p) \\
		& = \E_{X \sim p}\left[ \left| R(X, \theta) + 1 + \left\| \frac{ \nabla^- p( X ) }{ p( X ) } \right\|^2 \right| \right] + \E_{X \sim p}\left[ 1 + \left\| \frac{ \nabla^- p( X ) }{ p( X ) } \right\|^2 \right] \\
		& = \E_{X \sim p}\left[ \left\| \frac{ \nabla^- p_{\theta}(X)}{p_{\theta}(X)} - \frac{ \nabla^- p( X ) }{ p( X ) } \right\|^2 \right] + 1 + \E_{X \sim p}\left[ \left\| \frac{ \nabla^- p( X ) }{ p( X ) } \right\|^2 \right]
	\end{align*}
	where the last equality holds from \Cref{prop: dif_sd_1} and both the quantities are finite by Standing Assumption 1.
	Hence \eqref{eq:pwcon} holds for every $\theta \in \Theta$.
	
	\vspace{5pt}
	\noindent
	\textbf{C2:} From \Cref{asmp:derivative}, we have that $r_{j+}(\bm{x}, \theta)$ and $r_{j-}(\bm{x}, \theta)$ are three times continuously differentiable with respect to $\theta \in U$ for all $\bm{x} \in \X$, and thus $D_n(\theta)$ is three times continuously differentiable with respect to $\theta \in U$.
	For any $s \in \{ 1, 2, 3 \}$, 
	\begin{align}
		\nabla_\theta^s D_n(\theta) & = \frac{1}{n} \sum_{i=1}^{n} \nabla_\theta^s R(\bm{x}_i, \theta) = \frac{1}{n} \sum_{i=1}^{n} \sum_{j=1}^{d} \nabla_\theta^s ( r_{j-}(\bm{x}_i, \theta)^2 ) - 2 \nabla_\theta^s r_{j+}(\bm{x}_i, \theta) . \label{eq:hsd_derivative}
	\end{align}
	By the triangle inequality, we have an upper bound
	\begin{align*}
		\sup_{\theta \in U} \| \nabla_\theta^s D_n(\theta) \| \le \frac{1}{n} \sum_{i=1}^{n} \underbrace{ \sum_{j=1}^{d} \sup_{\theta \in U} \left\| \nabla_\theta^s ( r_{j-}(\bm{x}_i, \theta)^2 ) \right\| + 2 \sup_{\theta \in U} \left\| \nabla_\theta^s r_{j+}(\bm{x}_i, \theta) \right\| }_{=: G(\bm{x}_i)} .
	\end{align*}
	The quantity $\frac{1}{n} \sum_{i=1}^{n} G(\bm{x}_i)$ is a random variable dependent on $\{ \bm{x}_i \}_{i=1}^{n}$.
	By the strong law of large numbers \citep[Theorem~2.5.10]{Durrett2010b},
	\begin{align*}
	    \frac{1}{n} \sum_{i=1}^{n} G(\bm{x}_i) \as \E_{X \sim p}[ G(X) ] < \infty
	\end{align*}
	provided that $\E_{X \sim p}[ | G(X) | ] < \infty$. 
	Indeed, this condition holds since from positivity of $G$
	\begin{align*}
		\E_{X \sim p}[ |G(X) | ] = \sum_{j=1}^{d} \E_{X \sim p}\left[ \sup_{\theta \in U} \left\| \nabla_\theta^s ( r_{j-}(X, \theta)^2 ) \right\| \right] + 2 \E_{X \sim p}\left[ \sup_{\theta \in U} \left\| \nabla_\theta^s r_{j+}(X, \theta) \right\| \right] ,
	\end{align*}
	where the right hand side is finite by \Cref{asmp:derivative}.
	Then
	\begin{align*}
		\limsup_{n \to \infty} \sup_{\theta \in U} \| \nabla_\theta^s D_n(\theta) \| \le \limsup_{n \to \infty} \frac{1}{n} \sum_{i=1}^{n} G(\bm{x}_i) = \lim_{n \to \infty} \frac{1}{n} \sum_{i=1}^{n} G(\bm{x}_i) \overset{\text{\normalfont{a.s.}}}{=} \E_{X \sim p}[ G(X) ] < \infty
	\end{align*}
	for any $s \in \{ 1, 2, 3 \}$, which establishes C2.

	\vspace{5pt}
	\noindent
	\textbf{C4:} Let $h(\bm{x}, \theta) := \nabla_\theta^2 R(\bm{x}, \theta)$.
	From \eqref{eq:hsd_derivative}, $H_n(\theta) = \frac{1}{n} \sum_{i=1}^{n} h(\bm{x}_i, \theta)$.
	By the strong law of large numbers \citep[Theorem~2.5.10]{Durrett2010b}, we have $H_n(\theta) \as \E_{X \sim p}[ h(X, \theta) ]$ provided that $\E_{X \sim p}[ \| h(X, \theta) \| ] < \infty$.
	Indeed, this condition holds for all $\theta \in U$, since we have the upper bound
	\begin{align*}
		\E_{X \sim p}[ \| h(X, \theta_*) \| ] \le \E_{X \sim p}\left[ \sup_{\theta \in U} \| h(X, \theta) \| \right] \le \E_{X \sim p}\left[ | G(X) | \right] < \infty
	\end{align*}
	where the right hand side is bounded by the preceding argument.
	It remains to verify that $H_* := \lim_{n \to \infty} H_n(\theta_*)$ is equal to $\nabla_\theta^2 \operatorname{DFD}(p_\theta \| p) |_{\theta=\theta_*}$, from which C4 follows since $H_*$ was assumed to be nonsingular in the statement of \Cref{thm:bvm}.
	By the Lebesgue's dominated convergence theorem, for each $\theta \in U$,
	\begin{align*}
		\lim_{n \rightarrow \infty} H_n(\theta) = \E_{X \sim p}[ \nabla_\theta^2 R(\bm{x}, \theta) ] = \nabla_\theta^2 \E_{X \sim p}[ R(\bm{x}, \theta) ] = \nabla_\theta^2 D(\theta) .
	\end{align*}
	provided that $\E_{X \sim p}[ \sup_{\theta \in U} \| \nabla_\theta^2 R(\bm{x}, \theta) \| ] < \infty$.
	This condition holds for all $\theta \in U$ since $\E_{X \sim p}[ \sup_{\theta \in U} \| \nabla_\theta^2 R(\bm{x}, \theta) \| ] \le \E_{X \sim p}\left[ | G(X) | \right] < \infty$.
	Since $\theta_* \in U$ in particular, $H_* = \nabla_\theta^2 D(\theta) \mid_{\theta = \theta_*} = \nabla_\theta^2 \operatorname{DFD}(p_\theta \| p) |_{\theta=\theta_*}$, as claimed.
	
	\vspace{5pt} 
	\noindent
	Thus preconditions C1-C5 are satisfied and the result follows from \Cref{asmp:general_bvm}.
\end{proof}

\subsection{Proof of \Cref{thm:beta_choice}} \label{apx: proof_betaclib}

\begin{proof}
	We first calculate the Fisher divergence between the generalised posterior $\pi_n^D$ and an empirical distribution $\delta_\theta^B$ of the bootstrap minimisers $\{ \theta_n^{(b)} \}_{b=1}^{B}$, and then minimise it as a function of the weighting constant $\beta$.
	Recall that the score-matching divergence \citep{Hyvarinen2005} is given by
	\begin{align*}
            \operatorname{D}(\pi_n^D \| \delta_\theta^B) & = \frac{1}{B} \sum_{b=1}^{B} \underbrace{ \vphantom{\Big(} \left\| \nabla_\theta \log \pi_n^D(\theta_n^{(b)}) \right\|^2 }_{=(*_1)} + \underbrace{ 2 \operatorname{Tr}\Big( \nabla_\theta^2 \log \pi_n^D(\theta_n^{(b)}) \Big) }_{=(*_2)} .
	\end{align*}
	The score function of $\pi_n^D$ is given by
	\begin{align*}
		\nabla_\theta \log \pi_n^D(\theta) = - \beta \nabla_\theta D_n(\theta) + \nabla_\theta \log \pi(\theta) ,
	\end{align*}
	which is independent of the normalising constant of $\pi_n^D$.
	Similarly, the second derivative is $\nabla_\theta^2 \log \pi_n^D(\theta) = - \beta \nabla_\theta^2 D_n(\theta) + \nabla_\theta^2 \log \pi(\theta)$.
	Therefore the terms $(*_1)$ and $(*_2)$ in the Fisher divergence can be written as
	\begin{align*}
		(*_1) & = \beta^2 \| \nabla_\theta D_n(\theta_n^{(b)}) \|^2 - 2 \beta \nabla_\theta D_n(\theta_n^{(b)}) \cdot \nabla_\theta \log \pi(\theta_n^{(b)}) + \| \nabla_\theta \log \pi(\theta_n^{(b)}) \|^2 \\
		(*_2) & = - \beta \operatorname{Tr}\left( \nabla_\theta^2 D_n(\theta_n^{(b)}) \right) + \operatorname{Tr}\left( \nabla_\theta^2 \log \pi(\theta_n^{(b)}) \right)
	\end{align*}
	Now consider minimising the Fisher divergence $\operatorname{D}(\pi_n^D \| \delta_\theta^B)$ with respect to the weighting constant $\beta$.
	Plugging the terms $(*_1)$ and $(*_2)$ in the Fisher divergence, we have
	\begin{align*}
		\operatorname{D}(\pi_n^D \| \delta_\theta^B) & = \frac{1}{B} \sum_{b=1}^{B} \beta^2 \| \nabla_\theta D_n(\theta_n^{(b)}) \|^2 - 2 \beta \nabla_\theta D_n(\theta_n^{(b)}) \cdot \nabla_\theta \log \pi(\theta_n^{(b)}) - 2 \beta \operatorname{Tr}\Big( \nabla_\theta^2 D_n(\theta_n^{(b)}) \Big) + C
	\end{align*}
	where we denote any term independent of $\beta$ by $C$ in this proof.
	Exchanging the order of the summation and the constant $\beta$, the Fisher divergence turns out to be a quadratic function of $\beta$ as follows:
	\begin{align*}
		\operatorname{D}(\pi_n^D \| \delta_\theta^B) & = \beta^2 \underbrace{ \frac{1}{B} \sum_{b=1}^{B} \| \nabla_\theta D_n(\theta_n^{(b)}) \|^2 }_{=(a)} - 2 \beta \underbrace{ \frac{1}{B} \sum_{b=1}^{B}  \nabla_\theta D_n(\theta_n^{(b)}) \cdot \nabla_\theta \log \pi(\theta_n^{(b)}) + \operatorname{Tr}\Big( \nabla_\theta^2 D_n(\theta_n^{(b)}) \Big) }_{=(b)} + C \\
		& = a \beta^2 - 2 b \beta + C = a \left( \beta - \frac{b}{a} \right)^2 - \frac{b^2}{4 a^2} + C 
	\end{align*}
	where the last equality follows from completing the square.
	Therefore the Fisher divergence $\operatorname{D}(\pi_n^D \| \delta_\theta^B)$ is minimised at $\beta_* = b / a$, that is,
	\begin{align*}
		\beta_* = \frac{ \sum_{b=1}^{B} \nabla_\theta D_n(\theta_n^{(b)}) \cdot \nabla_\theta \log \pi(\theta_n^{(b)}) + \operatorname{Tr}\Big( \nabla_\theta^2 D_n(\theta_n^{(b)}) \Big) }{ \sum_{b=1}^{B} \| \nabla_\theta D_n(\theta_n^{(b)}) \|^2 } ,
	\end{align*}
	as claimed, where the denominator and numerator are positive immediately from the first and second assumption respectively, which assures that $\beta_* > 0$.
\end{proof}

\section{Proof of \Cref{asmp:general_bvm}: Simplified Conditions for \cite{Miller2019}} \label{sec: proof_con_bvm}

Before showing that the preconditions C1-C5 of \Cref{asmp:general_bvm} are sufficient for \cite[Theorem~4]{Miller2019}, we introduce the following lemma on a.s.~uniform convergence used in the proof.

\begin{lemma} (a.s. uniform convergence) \label{lem:uc_conv}
	Suppose that the preconditions C1 and C2 in \Cref{asmp:general_bvm} holds for $r=1$.
	Then $D_n$ a.s.~converges uniformly to $D$ on the bounded convex open set $U$ in \Cref{asmp:general_bvm}.
\end{lemma}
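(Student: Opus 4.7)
The plan is to combine the almost sure equi-Lipschitz property of $\{D_n\}$ inherited from C2 with the pointwise convergence of C1 through a standard $\epsilon/3$ argument on a finite cover of $U$. Since $U \subseteq \R^p$ is bounded, its closure $\overline{U}$ is compact, and convexity of $U$ ensures that line segments between any two points of $U$ stay inside $U$, which lets us turn gradient bounds into Lipschitz bounds via the mean value theorem.

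First I would isolate the almost-sure event on which the proof takes place. By C2, there is a measure-one event $\Omega_0$ on which $L := \limsup_{n \to \infty} \sup_{\theta \in U} \|\nabla_\theta D_n(\theta)\|$ is finite, so on $\Omega_0$ we can pick $N_0$ and a finite constant $L'$ (depending on $\omega \in \Omega_0$) such that $\sup_{\theta \in U}\|\nabla_\theta D_n(\theta)\| \leq L'$ for all $n \geq N_0$. Combined with convexity, this gives the equi-Lipschitz property $|D_n(\theta) - D_n(\theta')| \leq L' \|\theta - \theta'\|$ for all $\theta, \theta' \in U$ and $n \geq N_0$. Passing to the limit and using C1 pointwise, the limit $D$ inherits the same Lipschitz constant on $U$.

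Next I would fix $\epsilon > 0$ and cover $\overline{U}$ by finitely many balls of radius $\delta = \epsilon/(3L')$ centred at points $\theta_1, \dots, \theta_K \in U$; this is possible by compactness of $\overline{U}$. By C1, for each $k$ there is a measure-one event $\Omega_k$ on which $D_n(\theta_k) \to D(\theta_k)$, so on the measure-one event $\Omega^* := \Omega_0 \cap \Omega_1 \cap \dots \cap \Omega_K$ we may choose $N \geq N_0$ such that $|D_n(\theta_k) - D(\theta_k)| < \epsilon/3$ for every $k = 1,\dots,K$ and every $n \geq N$. For arbitrary $\theta \in U$, pick $\theta_k$ with $\|\theta - \theta_k\| < \delta$ and apply the triangle inequality
\begin{align*}
|D_n(\theta) - D(\theta)| \leq |D_n(\theta) - D_n(\theta_k)| + |D_n(\theta_k) - D(\theta_k)| + |D(\theta_k) - D(\theta)| < \tfrac{\epsilon}{3} + \tfrac{\epsilon}{3} + \tfrac{\epsilon}{3} = \epsilon,
\end{align*}
which yields $\sup_{\theta \in U} |D_n(\theta) - D(\theta)| \leq \epsilon$ on $\Omega^*$ for all $n \geq N$.

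The only subtle point is that the Lipschitz constant $L'$ is random, so the radius $\delta$ and hence the cover $\{\theta_1,\dots,\theta_K\}$ depend on the sample point $\omega \in \Omega_0$. This is harmless: for each fixed $\omega \in \Omega_0$ we build the cover using the corresponding $L'(\omega)$, and then intersect with the countably many pointwise-convergence events $\Omega_k$ (which do not depend on $\omega$ in their definition, only their selection indexed by the cover). Since the cover is finite, the intersection is a finite intersection of events of full measure, and the argument goes through pathwise. This establishes that on an event of probability one, $\sup_{\theta \in U}|D_n(\theta) - D(\theta)| \to 0$, which is the claimed almost sure uniform convergence.
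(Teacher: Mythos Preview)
Your overall strategy (equi-Lipschitz from C2, compactness of $\overline{U}$, $\epsilon/3$ on a finite cover) is sound and is essentially what underlies the paper's proof, which packages the same ingredients by citing two results of Davidson on strong stochastic equicontinuity. However, your handling of the ``subtle point'' in the last paragraph contains a genuine circularity that breaks the argument as written.

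The issue is this: having fixed $\omega \in \Omega_0$, you build cover centres $\theta_1(\omega),\dots,\theta_K(\omega)$ using the random radius $\delta(\omega)=\epsilon/(3L'(\omega))$, and then appeal to the pointwise-convergence events $\Omega_k = \{D_n(\theta_k(\omega))\to D(\theta_k(\omega))\}$. Each $\Omega_k$ has full measure, and their finite intersection $\Omega^*(\omega)$ has full measure, but you need the \emph{particular} $\omega$ you started with to lie in $\Omega^*(\omega)$. Nothing guarantees this: $\omega$ could sit in the null complement of some $\Omega_{\theta_k(\omega)}$. The phrase ``the argument goes through pathwise'' hides exactly this problem; you are asking $\omega$ to belong to a full-measure set that was itself constructed from $\omega$. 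The same circularity contaminates your claim that $D$ inherits the Lipschitz constant $L'(\omega)$ on all of $U$, since passing to the limit at $\theta,\theta'$ requires $\omega\in\Omega_\theta\cap\Omega_{\theta'}$.

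The standard repair is to fix, once and for all, a countable dense subset $Q\subset U$ (rationals suffice) and work on the single measure-one event $\Omega^* := \Omega_0 \cap \bigcap_{q\in Q}\Omega_q$. For every $\omega\in\Omega^*$ you then draw the cover centres from $Q$ (possible by density), and pointwise convergence at those centres is guaranteed because they were fixed in advance. The $\epsilon/3$ argument shows $\{D_n\}$ is uniformly Cauchy on $U$, hence converges uniformly to a continuous limit $\tilde D$ agreeing with $D$ on $Q$; a separate one-line argument (intersecting with $\Omega_\theta$ for each $\theta$) then identifies $\tilde D=D$ on all of $U$. The paper sidesteps all of this bookkeeping by invoking Davidson's Theorems~21.8 and~21.10, which convert the Lipschitz bound $|D_n(\theta)-D_n(\theta')|\le \mathcal L_n\|\theta-\theta'\|$ with $\limsup_n \mathcal L_n<\infty$ a.s.\ directly into strong stochastic equicontinuity and hence a.s.\ uniform convergence.
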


\begin{proof}
	\citet[Theorem 21.8]{Davidson1994} showed that $D_n \overset{a.s.}{\longrightarrow} D$ uniformly on $U$ if and only if (a) $D_n \overset{a.s.}{\longrightarrow} D$ pointwise on $U$ and (b) $\{ D_n \}_{n=1}^{\infty}$ is strongly stochastically equicontinuous on $U$.
	The condition (a) is immediately implied by the precondition C1 of \Cref{asmp:general_bvm} and hence the condition (b) is shown in the remainder.
	By \citet[Theorem~21.10]{Davidson1994}, $\{ D_n \}_{n=1}^{\infty}$ is strongly stochastically equicontinuous on $U$ if there exists a stochastic sequence $\{ \mathcal{L}_n \}_{n=1}^{\infty}$ independent of $\theta$ s.t.
	\begin{align*}
		| D_n(\theta) - D_n(\theta') | & \le \mathcal{L}_n \| \theta - \theta' \|_2, \quad \forall \theta, \theta' \in U \qquad \text{and} \qquad \limsup_{n \to \infty} \mathcal{L}_n < \infty \ \text{a.s.} 
	\end{align*}
	Since $D_n$ is continuously differentiable on the set $U$ by the precondition C2 of \Cref{asmp:general_bvm} with $r = 1$, the mean value theorem yields that
	\begin{align*}
		| D_n(\theta) - D_n(\theta') | & \le \sup_{\theta \in U} \| \nabla_{\theta} D_n(\theta) \|_2 \| \theta - \theta' \|_2, \quad \forall \theta, \theta' \in U.
	\end{align*}
	Again by the precondition C2 of \Cref{asmp:general_bvm} with $r = 1$, we have $\limsup_{n \to \infty} \sup_{\theta \in U} \| \nabla_{\theta} D_n(\theta) \|_2 < \infty$ a.s.
	Therefore, setting $\mathcal{L}_n = \sup_{\theta \in U} \| \nabla_{\theta} D_n(\theta) \|_2$ concludes the proof.
\end{proof}

We now show that \cite[Theorem~4]{Miller2019} holds a.s.~under the preconditions C1-C5 of \Cref{asmp:general_bvm}, which in turn implies \Cref{asmp:general_bvm} directly.
A main argument in the proof is essentially same as that of \cite{Matsubara2021} but that is modified here to allow for an arbitrary loss $D_n$.

\begin{proof}
	In order to apply \cite[Theorem~4]{Miller2019}, we first extend $\pi$ and $D_n$ from $\Theta$ to $\R^v$ by setting $\pi(\theta) = 0$ and $D_n(\theta) = \sup_{\theta \in \Theta} | D_n(\theta) | + 1$ for all $\theta \in \R^v \setminus \Theta$, so that we have $\pi: \R^v \to \R$, $D_n: \R^v \to \R$ and $\pi_n^D: \R^v \to \R$.
	Note that in \citet[Theorem~4]{Miller2019}, $\{ D_n \}_{n=1}^{\infty}$ is regarded as a sequence of deterministic functions, while here $\{ D_n \}_{n=1}^{\infty}$ is a sequence of stochastic functions dependent of random data $\{ X_i \}_{i=1}^{n}$.
	It will be shown that \citet[Theorem~4]{Miller2019} holds a.s. for the stochastic sequence $\{ D_n \}_{n=1}^{\infty}$.
	We hence verify the following prerequisites (1)--(6) of \cite[Theorem~4]{Miller2019} a.s.~hold.
	Recall that $H_n(\theta) = \nabla_\theta^2 D_n(\theta)$ and $H_* = \lim_{n \to \infty} H_n(\theta_*)$ from \Cref{asmp:general_bvm}:
	
	\begin{enumerate}
		\item the prior density $\pi$ is continuous at $\theta_*$ and $\pi(\theta_*) > 0$.
		\item $\theta_n \overset{a.s.}{\to} \theta_*$.
		\item the Taylor expansion $D_n(\theta) = D_n(\theta_n) + (1 / 2) (\theta - \theta_n) \cdot H_n(\theta_n) (\theta - \theta_n) + r_n( \theta - \theta_n )$ holds on $U$ a.s. where $r_n$ is the reminder term.
		\item the remainder $r_n$ of the Taylor expansion satisfies that $| r_n( \theta ) | \le C \| \theta \|_2^3,\ \forall \theta \in B_{\epsilon}(0)$ a.s. for all $n$ sufficiently large and some $\epsilon > 0$.
		\item $H_n(\theta_n) \overset{a.s.}{\to} H_*$, $H_n(\theta_n)$ is symmetric for all $n$ sufficiently large and $H_*$ is positive definite.
		\item $\liminf_{n\to\infty} \Big( \inf_{\theta \in \R^v \setminus \mathcal{B}_{\epsilon}(\theta_n)} D_n(\theta) - D_n(\theta_n) \Big) > 0$ a.s. for any $\epsilon > 0$.
	\end{enumerate}
	
	\vspace{5pt}
	\noindent \textbf{Part (1):} The precondition C5 of \Cref{asmp:general_bvm}.
	
	\vspace{5pt}
	\noindent \textbf{Part (2):} The strong consistency $\theta_n \overset{a.s.}{\to} \theta_*$ is shown by an argument similar to \citet[Theorem~5.7]{Vaart1998} or essentially same as \citet[Lemma~3]{Matsubara2021}.
	First, it follows from \Cref{lem:uc_conv} that $D_n \overset{a.s.}{\to} D$ uniformly on $U$ under the conditions of \Cref{asmp:general_bvm}.
	Thus, for all $n$ sufficiently large, we can take $\delta > 0$ s.t. $| D_n(\theta) - D(\theta) | < \delta/2$ a.s. over $\theta \in U$, which in turn leads to (a) $D(\theta) < D_n(\theta) + \delta/2$ and (b) $D_n(\theta) < D(\theta) + \delta/2$ a.s. over $\theta \in U$.
	Then applying both (a) and (b), the following bound on $D(\theta_n)$ holds for all $n$ sufficiently large:
	\begin{align}
		D(\theta_n) \overset{(a)}{<} D_n(\theta_n) + \delta/2 \overset{(*)}{\le} D_n(\theta_*) + \delta/2 \overset{(b)}{<} D(\theta_*) + \delta \quad \text{a.s.} \label{eq:cb_proof_p2}
	\end{align}
	where the second inequality $(*)$ follows from the fact that $\theta_n$ is the minimiser of $D_n$.
	Since $\inf_{\theta \in \R^v} D(\theta) = \inf_{\theta \in \Theta} D(\theta)$ is uniquely attained at $\theta_* \in U$ by \Cref{asmp:general_bvm} (3), for any $\epsilon > 0$ we have $D(\theta) - D(\theta_*) > 0$ for all $\theta \in \R^v \setminus B_\epsilon(\theta_*)$.
	Given an arbitrary $\epsilon > 0$, let $\delta = \inf_{\theta \in \Theta \setminus B_\epsilon(\theta_*)} D(\theta) - D(\theta_*) > 0$.
	It then follows from \eqref{eq:cb_proof_p2} that, for all $n$ sufficiently large,
	\begin{align*}
		D(\theta_n) < \inf_{\theta \in \R^v \setminus B_\epsilon(\theta_*)} D(\theta) \quad \text{a.s.}
	\end{align*}
	This implies that $\theta_n \in B_\epsilon(\theta_*)\ \text{a.s.}$ for any $\epsilon > 0$ arbitrary small for all $n$ sufficiently large.
	Therefore $\theta_n \overset{a.s.}{\to} \theta_*$ by definition of convergence.
	
	\vspace{5pt}
	\noindent \textbf{Part (3):} From the precondition C2 of \Cref{asmp:general_bvm}, $D_n$ is 3 times continuously differentiable over $U$.
	Noting that $\nabla_\theta D_n(\theta) = 0$ at a minimiser $\theta_n$ of $D_n$, the Taylor expansion of $D_n$ around the minimiser $\theta_n$ gives that
	\begin{align*}
		D_n(\theta) & = D_n(\theta_n) + \frac{1}{2} (\theta - \theta_n) \cdot H_n(\theta_n) (\theta - \theta_n) + r_n(\theta - \theta_n)
	\end{align*}
	where $r_n$ is the remainder of the Taylor expansion.
	
	\vspace{5pt}
	\noindent \textbf{Part (4):} Since $r_n$ is the remainder of the Taylor expansion, we have an upper bound
	\begin{align*}
		| r_n( \theta - \theta_n ) | \le \frac{1}{6} \sup_{\theta \in U} \| \nabla_{\theta}^3 D_n(\theta) \|_2 \| \theta - \theta_n \|_2^3, \quad \forall \theta \in U.
	\end{align*}
	The precondition C2 of \Cref{asmp:general_bvm} guarantees that $\limsup_{n \to \infty} \sup_{\theta \in U} \| \nabla_{\theta}^3 D_n(\theta) \|_2 < \infty$ a.s.
	It is thus possible to take some positive constant $C$ s.t. $(1 / 6) \sup_{\theta \in U} \| \nabla_{\theta}^3 D_n(\theta) \|_2 \le C$ a.s. for all $n$ sufficiently large.
	For all $n$ sufficiently large, there exists some open $\epsilon$-neighbour $B_\epsilon(\theta_n)$ contained in the open set $U$ since $\theta_n \in U$. 
	Combining these two facts concludes that
	\begin{align*}
		| r_n( \theta - \theta_n ) | \le C \| \theta - \theta_n \|_2^3, \quad \forall \theta \in B_\epsilon(\theta_n) \quad \Longrightarrow \quad | r_n( \theta ) | \le C \| \theta \|_2^3, \quad \forall \theta \in B_\epsilon(0) 
	\end{align*}
	holds for some $\epsilon > 0$.
	
	\vspace{5pt}
	\noindent \textbf{Part (5):} We first show that $\| H_n(\theta_n) - H_* \|_2 \overset{a.s.}{\to} 0$.
	By the triangle inequality,
	\begin{align*}
		\left\| H_n(\theta_n) - H_* \right\|_2 \le \left\| H_n(\theta_n) - H_n(\theta_*) \right\|_2 + \left\| H_n(\theta_*) - H_* \right\|_2 .
	\end{align*}
	For the first term, it follows from the mean value theorem that
	\begin{align*}
		\left\| H_n(\theta_n) - H_n(\theta_*) \right\|_2 & \le \sup_{\theta \in U} \| \nabla_{\theta} H_n(\theta) \|_2 \| \theta_n - \theta_* \|_2 = \sup_{\theta \in U} \| \nabla_{\theta}^3 D_n(\theta) \|_2 \| \theta_n - \theta_* \|_2 .
	\end{align*}
	The precondition C2 of \Cref{asmp:general_bvm} guarantees that $\limsup_{n \to \infty} \sup_{\theta \in U} \| \nabla_{\theta}^3 D_n(\theta) \|_2 < \infty$ a.s.
	It is thus possible to take some positive constant $C'$ s.t. $\| H_n(\theta_n) - H_n(\theta_*) \|_2 \le C' \| \theta_n - \theta_* \|_2$ for all $n$ sufficiently large.
	Then we have $\left\| H_n(\theta_n) - H_n(\theta_*) \right\|_2 \overset{a.s.}{\to} 0$ by the preceding part (2) $\theta_n \overset{a.s.}{\to} \theta_*$.
	For the second term, it is directly implied by the precondition C4 of \Cref{asmp:general_bvm} that $\left\| H_n(\theta_*) - H_* \right\|_2 \overset{a.s.}{\to} 0$.
	Combining these two facts concludes that $\| H_n(\theta_n) - H_* \|_2 \overset{a.s.}{\to} 0$.
	We next show that $H_n(\theta_n)$ is symmetric.
	The $(i,j)$ entry of $H_n(\theta) = \nabla_\theta^2 D_n(\theta)$ is given by the partial derivative $( \partial^2 / \partial \theta_i \partial \theta_j ) D_n(\theta)$ with respect to $i$-th and $j$-th entry of $\theta$.
	Since $D_n$ is twice continuously differentiable by the precondition C2 of \Cref{asmp:general_bvm}, the Schwartz's theorem implies that the commutation $( \partial^2 / \partial \theta_i \partial \theta_j ) D_n(\theta) = ( \partial^2 / \partial \theta_j \partial \theta_i ) D_n(\theta)$ holds and therefore $H_n(\theta)$ is symmetric for any $\theta \in \Theta$.
	Finally we show positive definiteness of $H_*$.
	For all $n$ sufficiently large, $H_n(\theta_n)$ is positive semi-definite by the fact that $\theta_n$ is the minimiser of $D_n$ and accordingly the limit $H_*$ is positive semi-definite.
	Then $H_*$ is positive definite since $H_*$ is nonsingular by the precondition C4 of \Cref{asmp:general_bvm}.
	
	\vspace{5pt}
	\noindent \textbf{Part (6):} It holds for any sequence $a_n, b_n \in \R$ that $\liminf_{n \to \infty} (a_n - b_n) \ge \liminf_{n \to \infty} a_n + \liminf_{n \to \infty} ( - b_n )$. 
	Furthermore from the property that $\liminf_{n \to \infty} ( - b_n ) = - \limsup_{n \to \infty} b_n$, we have $\liminf_{n \to \infty} (a_n - b_n) \ge \liminf_{n \to \infty} a_n - \limsup_{n \to \infty} b_n$.
	Applying this, we have
	\begin{align*}
		\liminf_{n\to\infty} \left( \inf_{\theta \in \R^v \setminus B_\epsilon(\theta_n)} D_n(\theta) - D_n(\theta_n) \right) = \underbrace{ \liminf_{n\to\infty} \inf_{\theta \in \R^v \setminus B_\epsilon(\theta_n)} D_n(\theta) }_{=:(*_1)} - \underbrace{ \limsup_{n\to\infty} D_n(\theta_n) }_{=:(*_2)} .
	\end{align*}
	For the first term $(*_1)$, it is obvious from the way of extending $D_n$ from $\Theta$ to $\R^v$ that
	\begin{align*}
		(*_1) = \liminf_{n\to\infty} \inf_{\theta \in \R^v \setminus B_\epsilon(\theta_n)} D_n(\theta) \ge \liminf_{n\to\infty} \inf_{\theta \in \Theta \setminus B_\epsilon(\theta_n)} D_n(\theta) \quad \text{ a.s. }
	\end{align*}
	For any set $A \subset \R^v$ and function $g: \R^v \to \R$, define $\inf_{\theta \in A \setminus B_\epsilon(\theta_n)} g(\theta) := \sup_{\theta \in A} g(\theta)$ if $A \setminus B_\epsilon(\theta_n)$ is empty.
	Decomposing $\Theta$ into two sets $U$ and $\Theta \setminus U$ leads to
	\begin{align*}
		(*_1) \ge \liminf_{n\to\infty} \inf_{\theta \in \Theta \setminus B_\epsilon(\theta_n)} D_n(\theta) \ge \min\Big( \underbrace{ \liminf_{n\to\infty} \inf_{\theta \in U \setminus B_\epsilon(\theta_n)} D_n(\theta) }_{=: (*_{11})} ,\ \underbrace{ \liminf_{n\to\infty} \inf_{\theta \in \Theta \setminus ( U \cup B_\epsilon(\theta_n) )} D_n(\theta) }_{=: (*_{12})} \Big)\ \text{ a.s. }
	\end{align*}
	For the term $(*_{11})$, since $D_n \overset{a.s.}{\to} D$ uniformly on $U$ by \Cref{lem:uc_conv} and $\theta_n \overset{a,s.}{\to} \theta_*$ by the preceding part (2), 
	\begin{align*}
		(*_{11}) = \liminf_{n\to\infty} \inf_{\theta \in U \setminus B_\epsilon(\theta_n)} D_n(\theta) = \lim_{n\to\infty} \inf_{\theta \in U \setminus B_\epsilon(\theta_n)} D_n(\theta) = \inf_{\theta \in U \setminus B_\epsilon(\theta_*)} D(\theta) \quad \text{ a.s. }
	\end{align*}
	For the term $(*_{12})$, since the global minimiser $\theta_n$ of $D_n$ is contained in $U$ a.s. for all $n$ sufficiently large by the precondition C3 of \Cref{asmp:general_bvm}, 
	\begin{align*}
		(*_{12}) = \liminf_{n\to\infty} \inf_{\theta \in \Theta \setminus ( U \cup B_\epsilon(\theta_n) )} D_n(\theta) > \liminf_{n\to\infty} \inf_{\theta \in U} D_n(\theta) = \inf_{\theta \in U} D(\theta) = D(\theta_*) \quad \text{ a.s. }
	\end{align*}
	where the second equality follows from the a.s. uniform convergence of $D_n$ on $U$ by \Cref{lem:uc_conv}.
	For the second term $(*_2)$, again since $D_n \overset{a.s.}{\to} D$ uniformly on $U$ and $\theta_n \overset{a,s.}{\to} \theta_*$, we have
	\begin{align*}
		(*_2) = \limsup_{n\to\infty} D_n(\theta_n) = \lim_{n\to\infty} D_n(\theta_n) = D(\theta_*) \quad \text{ a.s. }
	\end{align*}
	The original term $(*_1) - (*_2)$ is lower bounded by $(*_1) - (*_2) \ge \min( (*_{11}) - (*_2) , (*_{12}) - (*_2) )$ a.s., and both the term  $(*_{11}) - (*_2)$ and $(*_{12}) - (*_2)$ are then further lower bounded by
	\begin{align*}
		(*_{11}) - (*_2) = \inf_{\theta \in U \setminus B_\epsilon(\theta_*)} D(\theta) - D(\theta_*) > 0 \quad \text{and} \quad (*_{12}) - (*_2) > D(\theta_*) -  D(\theta_*) = 0 \quad \text{a.s.},
	\end{align*}
	where the first inequality follows from the precondition C3 of \Cref{asmp:general_bvm} indicating that $\inf_{\theta \in \Theta} D(\theta)$ is uniquely attained at $\theta_* \in U$.
	Therefore we have $(*_1) - (*_2) \ge \min( (*_{11}) - (*_2) , (*_{12}) - (*_2) ) > 0$ a.s., which concludes the proof.
\end{proof}

\section{Relation to Stein Discrepancies}
\label{app: ksd connection}

Fisher divergences can be related to a more general class of divergences called Stein discrepancies.
Since their introduction, Stein discrepancies have demonstrated utility over a range of statistical applications, including hypothesis testing, parameter estimation, variational inference, and post-processing of Markov chain Monte Carlo; see \cite{Anastasiou2021} for a review.

This appendix clarifies the sense in which discrete Fisher divergence can be seen as a special case of a discrete Stein discrepancy with an $L^2$-based Stein set. The continuous case was previously covered by Theorem 2 in \cite{Barp2019}.
As a consequence, we deduce that the discrete Fisher divergence is stronger than the popular class of Stein discrepancies based on reproducing kernels.

\subsection{Background on Stein Discrepancies}

Let $\X_*$ be a locally compact Hausdorff space.
For a set $\H$ of functions $f: \X_* \to \R^d$, an operator $S_p: \H \to L^1( p, \R^m)$ depending on a probability distribution $p$ on $\X_*$ is called a \emph{Stein operator} if $\E_{X \sim p}[ S_p[h](X) ] = 0$ for all $h \in \H$.
In these circumstances, we refer to $\H$ as a \emph{Stein set}.
The next proposition defines a particular Stein operator that arises naturally when considering discrete domains $\X_* = \X$, where we recall that $\X$ is a countable space in Standing Assumption 1.
The reader is referred to \cite{Shi2022} for discussion of alternative Stein operators in the discrete context.
Define the forward divergence operator $\nabla^+ \cdot$ for a $\R^d$-valued function $h: \X_* \to \R^d$ by $\nabla^+ \cdot h(\bm{x}) = \sum_{j=1}^{d} h(\bm{x}^{j+}) - h(\bm{x})$.

\begin{proposition} \label{prop: dif_sp_1} 
	Let $p$ be a positive probability distribution on $\X$, such that $(\nabla^- p) / p \in L^2( p, \R^d)$.
	Define an operator $S_p$, acting on functions $h \in L^2( p, \R^d)$, by
	\begin{align}
		S_p[h](\bm{x}) := \frac{ \D^- p(\bm{x}) }{ p(\bm{x}) } \cdot h(\bm{x}) + \D^+ \cdot h(\bm{x}) . \label{eq: dif_sp_2} 
	\end{align}
	Then it holds that $\E_{X \sim p}[ S_p[h](X) ] = 0$.
\end{proposition}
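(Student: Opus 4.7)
The plan is to verify $\E_{X \sim p}[S_p[h](X)] = 0$ by a direct expansion of the expectation followed by one application of the discrete summation-by-parts identity \Cref{prop:sumbypart}. Substituting the definition of $S_p$ and unfolding $\nabla^-$ and $\nabla^+ \cdot$ coordinate-wise gives
\begin{align*}
\E_{X \sim p}[S_p[h](X)] &= \sum_{j=1}^d \sum_{\bm{x} \in \X} \bigl(p(\bm{x}) - p(\bm{x}^{j-})\bigr) h_j(\bm{x}) \; + \; \sum_{j=1}^d \sum_{\bm{x} \in \X} p(\bm{x}) \bigl(h_j(\bm{x}^{j+}) - h_j(\bm{x})\bigr) ,
\end{align*}
and after the diagonal contributions $p(\bm{x}) h_j(\bm{x})$ cancel, there remains $\sum_{j} \sum_{\bm{x}} \bigl[p(\bm{x}) h_j(\bm{x}^{j+}) - p(\bm{x}^{j-}) h_j(\bm{x})\bigr]$. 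The conclusion then follows by applying \Cref{prop:sumbypart} with $(f,g) = (h_j, p)$, which asserts $\sum_{\bm{x}} h_j(\bm{x}) p(\bm{x}^{j-}) = \sum_{\bm{x}} h_j(\bm{x}^{j+}) p(\bm{x})$, so each inner sum vanishes.

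The main obstacle is verifying the absolute-convergence hypothesis of \Cref{prop:sumbypart}, which is also what licenses the interchange of summations and the cancellation of the diagonal terms above. The key algebraic identity is $p(\bm{x}^{j-})/p(\bm{x}) = 1 - [(\nabla^- p)/p]_j(\bm{x})$: combined with the assumption $(\nabla^- p)/p \in L^2(p, \R^d)$, it implies $\|p(\cdot^{j-})/p(\cdot)\|_{L^2(p)} < \infty$ by the triangle inequality in $L^2(p)$. The Cauchy--Schwarz inequality then yields
\begin{align*}
\sum_{\bm{x} \in \X} p(\bm{x}^{j-}) |h_j(\bm{x})| \; = \; \sum_{\bm{x} \in \X} p(\bm{x}) \cdot \frac{p(\bm{x}^{j-})}{p(\bm{x})} \cdot |h_j(\bm{x})| \; \le \; \bigl\|p(\cdot^{j-})/p(\cdot)\bigr\|_{L^2(p)} \|h_j\|_{L^2(p)} \; < \; \infty ,
\end{align*}
using $h \in L^2(p, \R^d)$. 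A further application of \Cref{prop:sumbypart} with $f = |h_j|$ and $g = p$ transfers this bound to $\sum_{\bm{x}} p(\bm{x}) |h_j(\bm{x}^{j+})|$, which secures the absolute summability required to rearrange the double series.

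Overall the argument has three clean steps: (i) use Cauchy--Schwarz together with the identity $p(\bm{x}^{j-})/p(\bm{x}) = 1 - [(\nabla^- p)/p]_j(\bm{x})$ to establish the absolute convergence of $\sum_{\bm{x}} p(\bm{x}^{j-})|h_j(\bm{x})|$ and hence of $\sum_{\bm{x}} p(\bm{x})|h_j(\bm{x}^{j+})|$; (ii) legitimately expand $\E_{X \sim p}[S_p[h](X)]$ and cancel the diagonal $p(\bm{x}) h_j(\bm{x})$ contributions; (iii) apply \Cref{prop:sumbypart} coordinate by coordinate to see that the two remaining series in each coordinate coincide. This is the discrete analogue of the standard integration-by-parts calculation showing that a Langevin--Stein operator has zero mean under its stationary distribution, so no deeper difficulty is anticipated beyond the book-keeping in step (i).
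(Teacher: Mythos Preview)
Your proof is correct and follows essentially the same route as the paper: establish absolute convergence of $\sum_{\bm{x}} p(\bm{x}^{j-})|h_j(\bm{x})|$ via Cauchy--Schwarz using $(\nabla^- p)/p, h \in L^2(p,\R^d)$, then apply \Cref{prop:sumbypart} coordinate-wise to see the two remaining series coincide. The only cosmetic difference is that the paper first simplifies $S_p[h](\bm{x}) = \sum_i h_i(\bm{x}^{i+}) - \tfrac{p(\bm{x}^{i-})}{p(\bm{x})} h_i(\bm{x})$ before taking the expectation, whereas you expand fully and cancel the diagonal $p(\bm{x})h_j(\bm{x})$ terms afterward; your explicit transfer of absolute convergence to $\sum_{\bm{x}} p(\bm{x})|h_j(\bm{x}^{j+})|$ via a second application of \Cref{prop:sumbypart} is a detail the paper leaves implicit.
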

\begin{proof}
	By positivity of $p$ and Cauchy--Schwarz, observe that
	\begin{align}
		\sum_{\bm{x} \in \X} | p(\bm{x}^{i-}) h_i(\bm{x}) | = \sum_{\bm{x} \in \X} p(\bm{x}) \frac{p(\bm{x}^{i-})}{p(\bm{x})} | h_i(\bm{x}) | & = \E_{X \sim p}\left[ \frac{p(X^{i-})}{p(X)} | h_i(X) | \right] \label{eq: abs conv intermed} \\
	& \le \E_{X \sim p}\left[ \frac{p(X^{i-})^2}{p(X)^2} \right] \E_{X \sim p}\left[ h_i(X)^2 \right] < \infty  \nonumber
	\end{align}
	where the first and second term are implied to be finite for any $i = 1, \dots, d$ since $h \in L^2( p, \R^d)$ and $(\nabla^- p) / p \in L^2( p, \R^d)$ which implies $[\nabla^- p(\bm{x}) / p(\bm{x})]_i = 1 - p(\bm{x}^{i-}) / p(\bm{x})$ is square integrable with respect to $p$.
	
	Now, using the definition of $\D^-$ and $\D^+ \cdot$, the Stein operator $S_p$ can be simplified as
	\begin{align}
		S_p[h](\bm{x}) = \sum_{i=1}^{d} h_i(\bm{x}^{i+}) - \frac{ p(\bm{x}^{i-}) }{ p(\bm{x}) } h_i(\bm{x}) . \label{eq:SO_s}
	\end{align}
	The expectation of interest can then be expressed as
	\begin{align*}
		\E_{X \sim p}[ S_p[h](X) ] & = \sum_{\bm{x} \in \X} p(\bm{x}) S_p[h](\bm{x}) = \sum_{i=1}^{d} \sum_{\bm{x} \in \X} p(\bm{x}) h_i(\bm{x}^{i+}) - \sum_{\bm{x} \in \X} p(\bm{x}^{i-}) h_i(\bm{x}) ,
	\end{align*}
	where we have used the absolute convergence of the series, established in \eqref{eq: abs conv intermed}, to justify the re-ordering of terms.
	The result is then immediate from \Cref{prop:sumbypart}.
\end{proof}

\noindent The Stein operator \eqref{eq: dif_sp_2} can be considered a discrete analogue of the Langevin Stein operator for continuous domains; see \citet{Yang2018}.

Given a Stein operator $S_p$ and Stein set $\H$, the \emph{Stein discrepancy} between probability distributions $p$ and $q$ on $\X$ is defined as the maximum deviation between expectations of the test functions $S_p[h]$ for $h \in \H$:
\begin{align}
	\operatorname{SD}(p \| q) := \sup_{h \in \H} \left| \E_{X \sim q}\left[ S_p[h](X) \right] - \E_{X \sim p}\left[ S_p[h](X) \right] \right| = \sup_{h \in \H} \left| \E_{X \sim q}\left[ S_p[h](X) \right] \right|  \label{eq:SD}
\end{align}
The final equality follows from \Cref{prop: dif_sp_1}, and our discussion in this appendix implicitly assumes all relevant quantities are well-defined.
The Stein discrepancy is computable\footnote{That is, the expectations do not involve the normalising constant; whether the supremum over the Stein set is computable depends on how the Stein set is selected.} without knowing the normalising constant of $p$ since it depends on $p$ only through the ratio $(\nabla^{-} p) / p$, in a similar manner to discrete Fisher divergence in the main text.

\subsection{Discrete Fisher Divergence as a Stein Discrepancy}

We now establish that the discrete Fisher divergence, introduced in the main text, is in fact a Stein discrepancy, corresponding to the Stein operator in \Cref{prop: dif_sp_1} and a Stein set equal to the unit ball of $L^2(q,\R^d)$.
This observation will allow us to conclude, in \Cref{subsec: KSD vs DFD}, that discrete Fisher divergence is stronger than popular kernel Stein discrepancies.

\begin{proposition} \label{prop: dif_sp_2} 
	Let $p$ and $q$ be positive distributions on $\X$, such that $(\nabla^- p) / p, (\nabla^- q) / q \in L^2(q,\R^d)$.
	Consider a Stein discrepancy whose Stein operator is \eqref{eq: dif_sp_2} and whose Stein set is $\H = \{ h: \X \to \R^d \mid \sum_{i=1}^{d} \E_{X \sim q}[ h_i(X)^2 ] \le 1 \}$.
	Then
	\begin{align}
		\operatorname{SD}(p \| q) = \sqrt{ \operatorname{DFD}(p \| q) } .
	\end{align}
\end{proposition}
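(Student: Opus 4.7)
The plan is to reduce the computation of the Stein discrepancy to a standard $L^2(q,\R^d)$ duality argument. First I would invoke \Cref{prop: dif_sp_1} applied to the distribution $q$ (the integrability preconditions $(\nabla^- q)/q \in L^2(q, \R^d)$ and $h \in L^2(q,\R^d)$ are both assumed), so that $\E_{X \sim q}[S_q[h](X)] = 0$ for any $h \in \H$. Subtracting this vanishing expectation allows me to rewrite
\begin{align*}
\E_{X \sim q}[S_p[h](X)] \; = \; \E_{X \sim q}[S_p[h](X) - S_q[h](X)] .
\end{align*}
The key observation is that the divergence term $\nabla^+ \cdot h$ in the definition \eqref{eq: dif_sp_2} of the two Stein operators cancels, leaving the purely multiplicative expression
\begin{align*}
S_p[h](\bm{x}) - S_q[h](\bm{x}) \; = \; \left( \frac{\nabla^- p(\bm{x})}{p(\bm{x})} - \frac{\nabla^- q(\bm{x})}{q(\bm{x})} \right) \cdot h(\bm{x}) .
\end{align*}

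Next I would apply the Cauchy--Schwarz inequality in the Hilbert space $L^2(q,\R^d)$, which gives
\begin{align*}
\left| \E_{X \sim q}[S_p[h](X)] \right| \; \le \; \sqrt{ \E_{X \sim q}\left[ \left\| \frac{\nabla^- p(X)}{p(X)} - \frac{\nabla^- q(X)}{q(X)} \right\|^2 \right] } \cdot \sqrt{ \E_{X \sim q}\left[ \| h(X) \|^2 \right] } .
\end{align*}
The first factor is $\sqrt{\operatorname{DFD}(p \| q)}$ by \Cref{def: dif_sd}, and the second factor is at most $1$ for $h \in \H$, so taking the supremum yields $\operatorname{SD}(p \| q) \le \sqrt{\operatorname{DFD}(p \| q)}$.

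To show this upper bound is attained, I would take $v := (\nabla^- p)/p - (\nabla^- q)/q$, which lies in $L^2(q,\R^d)$ by the triangle inequality, and set $h^* := v / \|v\|_{L^2(q,\R^d)}$ whenever $v \not\equiv 0$ in $L^2(q,\R^d)$. Then $h^* \in \H$ and plugging $h^*$ into the display above saturates Cauchy--Schwarz, giving $\operatorname{SD}(p\|q) \ge \sqrt{\operatorname{DFD}(p\|q)}$. The degenerate case $v \equiv 0$ coincides with $\operatorname{DFD}(p\|q) = 0$ by \Cref{def: dif_sd}, and then the reverse bound $\operatorname{SD}(p\|q) \ge 0$ (taking $h \equiv 0$) trivially matches. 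There is no real obstacle: the only subtle point is verifying that all relevant expectations are well-defined, which follows from the two $L^2(q,\R^d)$ hypotheses together with Cauchy--Schwarz.
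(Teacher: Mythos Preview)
Your proposal is correct and essentially identical to the paper's argument: both reduce $\E_{X\sim q}[S_p[h](X)]$ to the $L^2(q,\R^d)$ inner product $\langle (\nabla^- p)/p - (\nabla^- q)/q,\, h\rangle_{L^2(q,\R^d)}$ and then invoke Cauchy--Schwarz, with equality at $h^* = v/\|v\|_{L^2(q,\R^d)}$. The only cosmetic difference is that you make the subtraction of $\E_{X\sim q}[S_q[h](X)]=0$ (via \Cref{prop: dif_sp_1} applied to $q$) explicit, whereas the paper absorbs this step into a single display; your handling of the degenerate case $v\equiv 0$ is also slightly more explicit than the paper's.
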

\begin{proof}
	From \eqref{eq: dif_sp_2} and \eqref{eq:SD}, we have that
	\begin{align*}
		\operatorname{SD}(p \| q) & = \sup_{h \in \H} \left| \E_{X \sim q}\left[ \frac{\nabla^- p(X)}{p(X)} \cdot h(X) - \frac{\nabla^- q(X)}{q(X)} \cdot h(X) \right] \right| .
	\end{align*}
	Note that $L^2(q,\R^d)$ is a Hilbert space when equipped with the inner product $\langle f, g \rangle_{L^2(q,\R^d)} := \E_{X \sim q}[ f(X) \cdot g(X) ]$.
	Thus we can view $\operatorname{SD}(p \| q)$ as the maximum of the inner product
	\begin{align}
		\operatorname{SD}(p \| q) & = \sup_{h \in \H} \left| \left\langle \frac{\nabla^- p}{p} - \frac{\nabla^- q}{q}, h \right\rangle_{L^2(q,\R^d)} \right| , \label{eq: stein in prod}
	\end{align}
	which is well-defined since $u := (\nabla^-p) / p - (\nabla^-q) / q \in L^2(q,\R^d)$.
	Let $\| \cdot \|_{L^2(q,\R^d)}$ denote the norm of $L^2(q,\R^d)$, so that $\H$ is the set of $f \in L^2(q,\R^d)$ for which $\|f\|_{L^2(q,\R^d)} \leq 1$.
	By the Cauchy--Schwarz inequality, the inner product in \eqref{eq: stein in prod} attains its supremum at $h = u / \| u \|_{L^2(q,\R^d)} \in \H$.
	Therefore 
	\begin{align*}
		\operatorname{SD}(p \| q) & = \sup_{h \in \H} | \langle u, h \rangle_{L^2(q,\R^d)} | = \| u \|_{L^2(q,\R^d)} = \sqrt{ \E_{X \sim q}\left[ \left\| \frac{\nabla^- p(X)}{p(X)} - \frac{\nabla^- q(X)}{q(X)} \right\|^2 \right] } ,
	\end{align*}
	which concludes the proof.
\end{proof}

\subsection{The Fisher Divergence Dominates the Kernel Stein Discrepancy} \label{subsec: KSD vs DFD}

A popular choice of Stein set $\H$, that can lead to a closed form Stein discrepancy, is the unit ball of a reproducing kernel Hilbert space.
The resulting \emph{kernel Stein discrepancy} was recently considered in the discrete context in \citet{Yang2018}.
In this appendix we establish that our discrete Fisher divergence, introduced in the main text, is a stronger notion of divergence than kernel Stein discrepancy.
This may render the discrete Fisher divergence more statistically efficient in applications where a statistical model is well-specified, in addition to the computational advantage (\Cref{rem: cost}) and the non-reliance on a user-specified kernel discussed in the main text.

A symmetric, positive definite function $k: \X_* \times \X_* \to \R$ is called a kernel.
For every kernel $k$, there exists a unique associated Hilbert space of real-valued functions on $\X_*$, called a reproducing kernel Hilbert space, denoted $\H_k$; see e.g.~\cite{berlinet2011reproducing} for background.
Let $\H_k^d := \H_k \times \cdots \times \H_k$, that is, a space of functions $h: \X_* \to \R^d$ whose each $i$-th output-coordinate $h_i: \X_* \to \R$ belongs to $\H_k$.
\cite{Yang2018} studied the Stein discrepancy for a discrete space $\X_* = \X$, of finite cardinality only, using the Stein operator \eqref{eq: dif_sp_2} and a Stein set $\H = \{h \in \H_k^d : \sum_{i=1}^d \|h_i\|_{\H_k}^2 \leq 1 \}$. 
Here we first establish that the Stein set $\{h \in \H_k^d : \sum_{i=1}^d \|h_i\|_{\H_k}^2 \leq 1 \}$ constructed from $\H_k^d$ is contained in another Stein set $\{h \in L^2(q, \R^d): \|h\|_{L^2(q, \R^d)}^2 \leq 1 \}$ constructed from $L^2(q, \R^d)$ for any general domain $\X_*$, under a standard condition on the reproducing kernel.
This in turn shows that the discrete Fisher divergence dominates the kernel Stein discrepancy.

\begin{proposition} \label{prop: dif_sp_pre3} 
	Let $q$ be a probability distribution on $\X_*$.
	Let $k : \X_* \times \X_* \rightarrow \R$ be a kernel such that $k(\bm{x}, \bm{x}) \le 1$ for all $\bm{x} \in \X_*$.
	Then the unit ball of $\H_k^d$ is contained in the unit ball of $L^2(q,\R^d)$.
\end{proposition}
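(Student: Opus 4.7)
The plan is to exploit the reproducing property of $\H_k$ together with the pointwise bound $k(\bm{x},\bm{x}) \le 1$ to obtain a uniform pointwise bound on any $h_i \in \H_k$ by its RKHS norm. This then integrates trivially against $q$ to give the desired containment.

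First I would take an arbitrary $h = (h_1, \dots, h_d)$ in the unit ball of $\H_k^d$, so that by definition $\sum_{i=1}^d \|h_i\|_{\H_k}^2 \le 1$. By the reproducing property, $h_i(\bm{x}) = \langle h_i, k(\bm{x}, \cdot) \rangle_{\H_k}$ for every $\bm{x} \in \X_*$, and the Cauchy--Schwarz inequality in $\H_k$ combined with $\|k(\bm{x},\cdot)\|_{\H_k}^2 = k(\bm{x},\bm{x}) \le 1$ yields
\begin{align*}
h_i(\bm{x})^2 \; \le \; \|h_i\|_{\H_k}^2 \, k(\bm{x},\bm{x}) \; \le \; \|h_i\|_{\H_k}^2 .
\end{align*}

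Next I would integrate this pointwise bound against $q$. Since the right-hand side is a constant in $\bm{x}$, we obtain $\E_{X \sim q}[h_i(X)^2] \le \|h_i\|_{\H_k}^2$ for each $i = 1, \dots, d$. In particular this shows $h_i \in L^2(q, \R)$, and summing over $i$ gives
\begin{align*}
\|h\|_{L^2(q,\R^d)}^2 \; = \; \sum_{i=1}^d \E_{X \sim q}[h_i(X)^2] \; \le \; \sum_{i=1}^d \|h_i\|_{\H_k}^2 \; \le \; 1,
\end{align*}
so $h$ lies in the unit ball of $L^2(q,\R^d)$, establishing the claim.

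There is no real obstacle here: the argument is entirely standard once one notes that the hypothesis $k(\bm{x},\bm{x}) \le 1$ is exactly what is needed to translate Cauchy--Schwarz in the RKHS into a uniform pointwise bound independent of $\bm{x}$, after which integrability against any probability measure $q$ is automatic. The only thing worth flagging is that this step is also what implicitly guarantees $\H_k^d \subseteq L^2(q,\R^d)$ in the first place, so the containment statement is meaningful.
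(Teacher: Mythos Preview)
Your proof is correct and follows essentially the same approach as the paper: reproducing property plus Cauchy--Schwarz in $\H_k$, then integrate against $q$ and sum over coordinates. The only cosmetic difference is that you apply the pointwise bound $k(\bm{x},\bm{x})\le 1$ before taking the expectation, whereas the paper keeps $k(X,X)$ inside the expectation and bounds $\E_{X\sim q}[k(X,X)]\le 1$ at the end.
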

\begin{proof}
	First let $f: \X_* \to \R^d$ be any element of $\H_k^d$, where its $i$-th output-coordinate $f_i: \X_* \to \R$ belongs to $\H_k$ each.
	From the reproducing property of $\H_k$, followed by the Cauchy--Schwartz inequality, the norm of $f$ in $L^2(q, \R^d)$ is upper bounded as follows:
	\begin{align*}
		\|f\|_{L^2(q,\R^d)}^2 = \sum_{i=1}^d \E_{X \sim q}[f_i(X)^2] & = \sum_{i=1}^d \E_{X \sim q}[ \langle f_i(\cdot), k(X, \cdot) \rangle_{\H_k}^2 ] \\
		& \le \sum_{i=1}^d \E_{X \sim q}\left[ \| f_i \|_{\H_k}^2 \| k(X, \cdot) \|_{\H_k}^2 \right]  = \sum_{i=1}^d \E_{X \sim q}\left[ \| f_i \|_{\H_k}^2 k(X, X) \right] \\
		& = \left( \sum_{i=1}^d \| f_i \|_{\H_k}^2 \right) \E_{X \sim q}\left[ k(X, X) \right]
		= \|f\|_{\H_k^d}^2 \; \E_{X \sim q}  \left[ k(X, X) \right].
	\end{align*}
	The continuous embedding of $\H_k^d$ in $L^2(q, \R^d)$ therefore holds, and moreover the embedding constant is at most one, since $\E_{X \sim q} [k(X,X)] \leq 1$ due to the assumption that $k(\bm{x}, \bm{x}) \le 1$ for all $\bm{x} \in \X_*$.
	In particular, it follows that the unit ball of $\H_k^d$ is contained in the unit ball of $L^2(q,\R^d)$.
\end{proof}

Built upon \Cref{prop: dif_sp_pre3}, we can immediately show the discrete Fisher divergence dominates the kernel Stein discrepancy for the case where $\X_* = \X$.

\begin{proposition} \label{prop: dif_sp_3} 
	Let $p$ and $q$ be positive distributions on $\X$, such that $( \nabla^- p ) / p, ( \nabla^- q ) / q \in L^2(q,\R^d)$.
	Let $k : \X \times \X \rightarrow \R$ be a kernel such that $k(\bm{x}, \bm{x}) \le 1$ for all $\bm{x} \in \X$.
	Let $\S_p$ be a Stein operator in \eqref{eq: dif_sp_2}.
	Then the kernel Stein discrepancy, denoted $\operatorname{SD}_k$, satisfies $\operatorname{SD}_k(p \| q) \le \sqrt{ \operatorname{DFD}(p \| q) }$.
\end{proposition}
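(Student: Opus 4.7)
The plan is to chain together the two preceding propositions: the identification of the discrete Fisher divergence with a specific $L^2$-based Stein discrepancy (\Cref{prop: dif_sp_2}), and the containment of the unit ball of $\H_k^d$ inside the unit ball of $L^2(q,\R^d)$ (\Cref{prop: dif_sp_pre3}). The core idea is simply that a supremum taken over a smaller set cannot exceed the supremum taken over a larger set containing it.

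First, I would write the kernel Stein discrepancy $\operatorname{SD}_k(p \| q)$ explicitly as the supremum, over $h$ in the unit ball $\H_k^{d,1} := \{h \in \H_k^d : \sum_{i=1}^d \|h_i\|_{\H_k}^2 \le 1\}$ of $\H_k^d$, of $|\E_{X \sim q}[S_p[h](X)]|$, where $S_p$ is the Stein operator in \eqref{eq: dif_sp_2}. This requires verifying that $S_p[h]$ is well-defined and has zero mean under $p$ for every $h \in \H_k^{d,1}$; the reproducing property together with the kernel bound $k(\bm{x},\bm{x}) \le 1$ gives $h \in L^2(q,\R^d)$ (from the argument inside \Cref{prop: dif_sp_pre3}), so the hypotheses of \Cref{prop: dif_sp_1} are met.

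Next, I would invoke \Cref{prop: dif_sp_pre3}: under the assumption $k(\bm{x},\bm{x}) \le 1$, the unit ball $\H_k^{d,1}$ is a subset of the unit ball $L^{2,1}(q,\R^d) := \{h \in L^2(q,\R^d) : \|h\|_{L^2(q,\R^d)}^2 \le 1\}$. Taking suprema of the same nonnegative functional $h \mapsto |\E_{X \sim q}[S_p[h](X)]|$ over the nested sets yields the inequality
\begin{align*}
\operatorname{SD}_k(p \| q) \;=\; \sup_{h \in \H_k^{d,1}} \left| \E_{X \sim q}[S_p[h](X)] \right| \;\le\; \sup_{h \in L^{2,1}(q,\R^d)} \left| \E_{X \sim q}[S_p[h](X)] \right|.
\end{align*}
Finally, applying \Cref{prop: dif_sp_2} identifies the right-hand side with $\sqrt{\operatorname{DFD}(p \| q)}$, completing the proof.

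There is really no significant obstacle here; the only subtlety is ensuring that the Stein operator truly acts identically on both Stein sets and that the shared functional $h \mapsto \E_{X \sim q}[S_p[h](X)]$ is well-defined on the larger $L^2$ ball (which is exactly the content of \Cref{prop: dif_sp_1}'s hypothesis $(\nabla^- p)/p \in L^2(q,\R^d)$, assumed here). Once that bookkeeping is in place, the inclusion of balls closes the argument in a single line.
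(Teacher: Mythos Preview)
Your proposal is correct and mirrors the paper's own proof almost exactly: both write $\operatorname{SD}_k$ as a supremum over the unit ball of $\H_k^d$, invoke \Cref{prop: dif_sp_pre3} for the ball inclusion, pass to the larger supremum over the $L^2(q,\R^d)$ unit ball, and then identify that with $\sqrt{\operatorname{DFD}(p\|q)}$ via \Cref{prop: dif_sp_2}. Your additional remarks about well-definedness of $S_p[h]$ are a reasonable bit of bookkeeping that the paper compresses into a parenthetical reference to \Cref{prop: dif_sp_1}.
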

\begin{proof}
	From \eqref{eq:SD} (which in turn relies on \Cref{prop: dif_sp_1}), it is straightforward to see that
	\begin{align*}
		\operatorname{SD}_k(p \| q) = \sup_{\|h\|_{\H_k^d} \leq 1} \left| \E_{X \sim q}\left[ S_p[h](X) \right] \right| \le \sup_{\|h\|_{L^2(q,\R^d)} \leq 1} \left| \E_{X \sim q}\left[ S_p[h](X) \right] \right| = \sqrt{ \operatorname{DFD}(p \| q) } ,
	\end{align*}
	where the inequality follows from \Cref{prop: dif_sp_pre3} immediately and the final equality is \Cref{prop: dif_sp_2}.
\end{proof}

This argument is not restricted to the discrete case but is immediately applicable for the continuous case.
One of the most common Stein operator for a continuous domain $\X_* = \R^d$ is
\begin{align}
	\S_p[h](\bm{x}) = \nabla \log p(\bm{x}) \cdot h(\bm{x}) + \nabla \cdot h(\bm{x}) \label{eq:lagevin_stein}
\end{align}
The Fisher divergence $\operatorname{FD}(p \| q) = \E_{X \sim q}[ \| \nabla \log p(X) - \nabla \log q(X) \|^2 ]$ for densities $p, q$ on $\X_*$ dominates the kernel Stein discrepancy constructed from the above Stein operator and the kernel on $\X_*$.

\begin{proposition} \label{prop: dif_sp_4} 
	Let $\X_* = \R^d$.
	Let $p$ and $q$ be positive continuously differentiable densities on $\X_*$, such that $\nabla \log p, \nabla \log q \in L^2(q,\R^d)$.
	Let $k : \X_* \times \X_* \rightarrow \R$ be a kernel such that $k(\bm{x}, \bm{x}) \le 1$ for all $\bm{x} \in \X_*$.
	Let $\S_p$ be a Stein operator in \eqref{eq:lagevin_stein}
	Then the kernel Stein discrepancy, denoted $\operatorname{SD}_k$, satisfies $\operatorname{SD}_k(p \| q) \le \sqrt{ \operatorname{FD}(p \| q) }$.
\end{proposition}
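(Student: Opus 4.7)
The plan is to mirror the structure of the proof of \Cref{prop: dif_sp_3}, replacing summation by parts with integration by parts. First I would unfold the definition
\begin{align*}
    \operatorname{SD}_k(p \| q) = \sup_{\|h\|_{\H_k^d} \leq 1} \left| \E_{X \sim q}[S_p[h](X)] \right|,
\end{align*}
and rewrite the integrand by adding and subtracting the Langevin Stein operator applied with $q$ in place of $p$:
\begin{align*}
    \E_{X \sim q}[S_p[h](X)] = \E_{X \sim q}[(\nabla \log p(X) - \nabla \log q(X)) \cdot h(X)] + \E_{X \sim q}[S_q[h](X)].
\end{align*}
The second expectation vanishes: under the assumption $\nabla \log q \in L^2(q,\R^d)$ together with $h \in \H_k^d$ with $k(\bm{x},\bm{x}) \le 1$ (which forces $h$ to lie in $L^2(q,\R^d)$ by \Cref{prop: dif_sp_pre3}), standard integration by parts applied to $q(\bm{x}) h(\bm{x})$ gives $\E_{X \sim q}[\nabla \cdot h(X)] = -\E_{X \sim q}[\nabla \log q(X) \cdot h(X)]$, exactly the continuous analogue of \Cref{prop: dif_sp_1}.

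Once that identity is in hand, the remaining two steps parallel \Cref{prop: dif_sp_3} verbatim. The second step is Cauchy--Schwarz in the Hilbert space $L^2(q,\R^d)$,
\begin{align*}
    |\E_{X \sim q}[(\nabla \log p(X) - \nabla \log q(X)) \cdot h(X)]| \le \|\nabla \log p - \nabla \log q\|_{L^2(q,\R^d)} \, \|h\|_{L^2(q,\R^d)} = \sqrt{\operatorname{FD}(p \| q)} \, \|h\|_{L^2(q,\R^d)},
\end{align*}
which is valid since $\nabla \log p, \nabla \log q \in L^2(q,\R^d)$ by hypothesis. The third step is to bound $\|h\|_{L^2(q,\R^d)}$ using the continuous embedding of \Cref{prop: dif_sp_pre3}, which is stated for an arbitrary locally compact Hausdorff $\X_*$ and therefore applies without modification to $\X_* = \R^d$: whenever $\|h\|_{\H_k^d} \le 1$ and $k(\bm{x},\bm{x}) \le 1$, we have $\|h\|_{L^2(q,\R^d)} \le 1$. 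Taking the supremum over the $\H_k^d$-unit ball yields the claim.

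The only non-routine step is the integration-by-parts identity $\E_{X \sim q}[S_q[h](X)] = 0$, since its continuous form requires boundary terms of $q(\bm{x}) h(\bm{x})$ to vanish at infinity. The cleanest way to handle this is to exploit the kernel bound $k(\bm{x},\bm{x}) \le 1$: it guarantees $\sup_{\bm{x}} |h(\bm{x})| \le \|h\|_{\H_k^d}$, so $q h$ decays at least as fast as $q$ does, and Green's theorem applies under the mild assumption that $q$ is continuously differentiable (already implicit in $\nabla \log q$ being defined pointwise) with $q(\bm{x}) \to 0$ as $\|\bm{x}\| \to \infty$. If a more rigorous treatment is desired, one approximates $h$ by compactly supported smooth functions in both the $\H_k^d$ and $L^2(q,\R^d)$ norms and passes to the limit, using $\nabla \log q \in L^2(q,\R^d)$ to control the limiting boundary contribution. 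With that technical point dispatched, the three-line chain of inequalities above closes the proof.
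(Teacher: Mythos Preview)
Your proposal is correct and follows essentially the same route as the paper's proof. The paper's argument is terser: it invokes \citet[Theorem~2]{Barp2019} to identify $\sqrt{\operatorname{FD}(p\|q)}$ with the Stein discrepancy over the $L^2(q,\R^d)$ unit ball, and then applies \Cref{prop: dif_sp_pre3} to conclude $\operatorname{SD}_k(p\|q) \le \sqrt{\operatorname{FD}(p\|q)}$ directly from the inclusion of Stein sets. Your version simply unpacks the cited result, writing out the Stein identity $\E_{X\sim q}[S_q[h](X)]=0$ via integration by parts and the subsequent Cauchy--Schwarz step explicitly; your discussion of the boundary terms is more careful than the paper's, which delegates that regularity issue to the reference.
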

\begin{proof}
	We repeat the same argument as \Cref{prop: dif_sp_3}.
	From \cite[Theorem~2]{Barp2019}, $\operatorname{FD}(p \| q)$ can be written as the Stein discrepancy constructed by the Stein set $\{h \in L^2(q, \R^d): \|h\|_{L^2(q, \R^d)}^2 \leq 1 \}$.
	Then from \eqref{eq:SD} (which in turn relies on \Cref{prop: dif_sp_1}), it is straightforward to see that
	\begin{align*}
		\operatorname{SD}_k(p \| q) = \sup_{\|h\|_{\H_k^d} \leq 1} \left| \E_{X \sim q}\left[ S_p[h](X) \right] \right| \le \sup_{\|h\|_{L^2(q,\R^d)} \leq 1} \left| \E_{X \sim q}\left[ S_p[h](X) \right] \right| = \sqrt{ \operatorname{FD}(p \| q) } ,
	\end{align*}
	where the inequality follows from \Cref{prop: dif_sp_pre3} immediately and the final equality is \Cref{prop: dif_sp_2}.
\end{proof}

An interesting recent observation in \citet{Shi2022} was that alternative Stein operators \citep[such as Gibbs and Barker operators; see Table 1 of][]{Shi2022} gave rise to kernel Stein discrepancies that performed better in their particular context (low-variance gradient estimation). 
It would be interesting to explore the analogous alternatives to discrete Fisher divergence that would result from such operators, but this is left to future work.

\subsection{Robustness of the Kernel Stein Discrepancy} \label{subsec: KSD_robustness}

\Cref{subsec: KSD vs DFD} indicates statistical efficiency of the discrete Fisher divergence over the kernel Stein discrepancy.
If one's model is well-specified, minimising the discrete Fisher divergence leads us to a correct model faster than the kernel Stein discrepancy.
However this does not mean that the use of the discrete Fisher divergence is always better than the kernel Stein discrepancy.
In particular, the kernel Stein discrepancy can be equipped with strong robustness by choosing an appropriate kernel.
To demonstrate this, we compare three posteriors of Pseudo-Bayes, DFD-Bayes, and KSD-Bayes for the same Ising model as \Cref{subsec: Ising assessment} with $d = 100~(m=10)$ in a setting where dataset contains extreme outliers with a proportion $\epsilon$.

We approximately draw 1000 samples $\{ x_i \}_{i=1}^{1000}$ from the Ising model $p_\theta$ with $\theta = 5$ by the same Metropolis–Hastings algorithm as \Cref{subsec: Ising assessment}.
To study the robustness of the posteriors, we replaced a proportion $\epsilon = 0.1$ of the data with the vector $( 1, 1, \cdots, 1 )$ corresponding to the extreme value in $\X$ that is rarely drawn from the model.
\cite{Matsubara2021} showed that KSD-Bayes can satisfy strong qualitative robustness called ``global bias-robustness" by choosing a kernel appropriately.
For this example, we use the same choice of kernel as \cite{Matsubara2021} below:
\begin{align*}
	k(\bm{x}, \bm{x}') = m(\bm{x}) \exp\left( - \frac{1}{d} \sum_{i=1}^d \mathbbm{1}( x_{i} - x_{i}' ) \right) m(\bm{x})
\end{align*}
where $m(x) = \sigma( 90 - | \sum_i x_{i} | )$ based on a sigmoid function $\sigma(t) = ( 1 + \exp( - t ) )^{-1}$.
This is indeed a proper choice of kernel, and the function $m(\bm{x})$ in the definition of kernel is designed to restrict the influence of extreme data whose norm is closer to or larger than $90$.

\begin{figure}[t!]
	\centering
	\hfill
	\includegraphics[height=0.155\textheight]{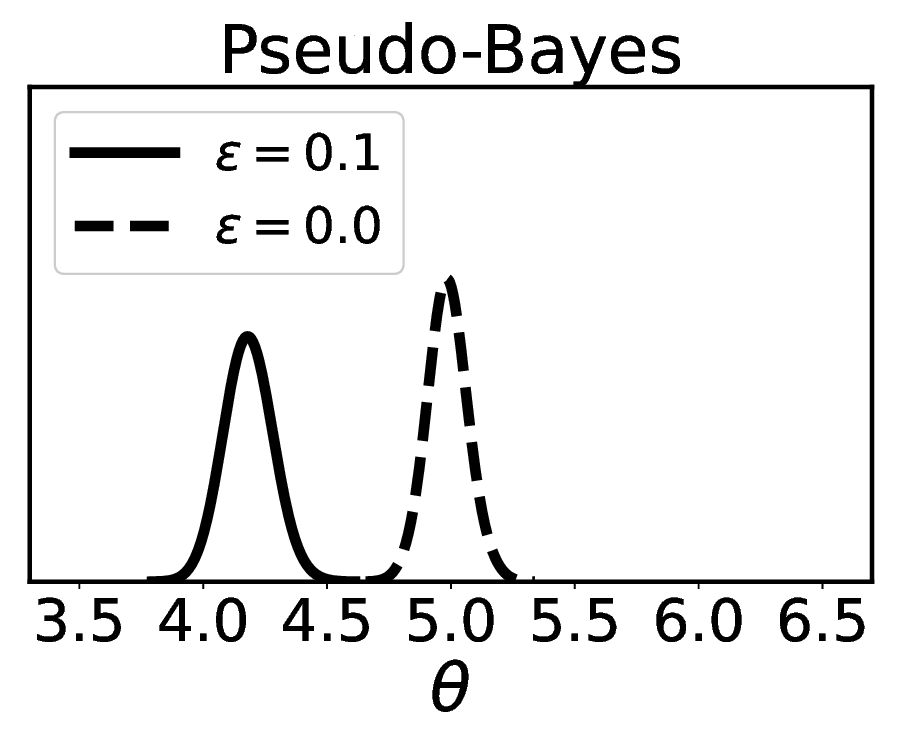}
	\hfill
	\includegraphics[height=0.155\textheight]{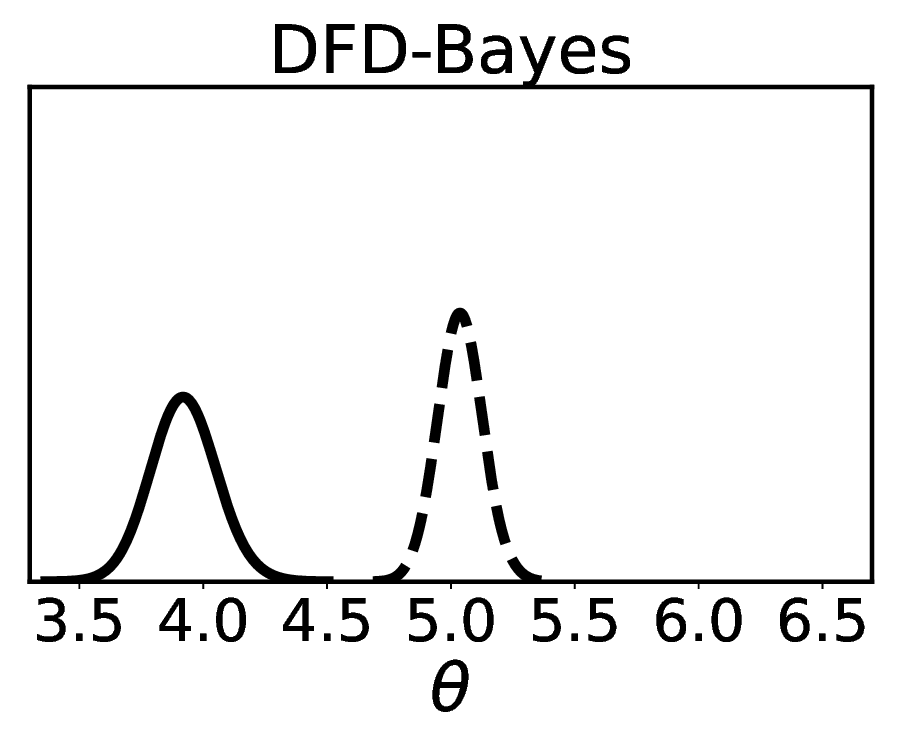}
	\hfill
	\includegraphics[height=0.155\textheight]{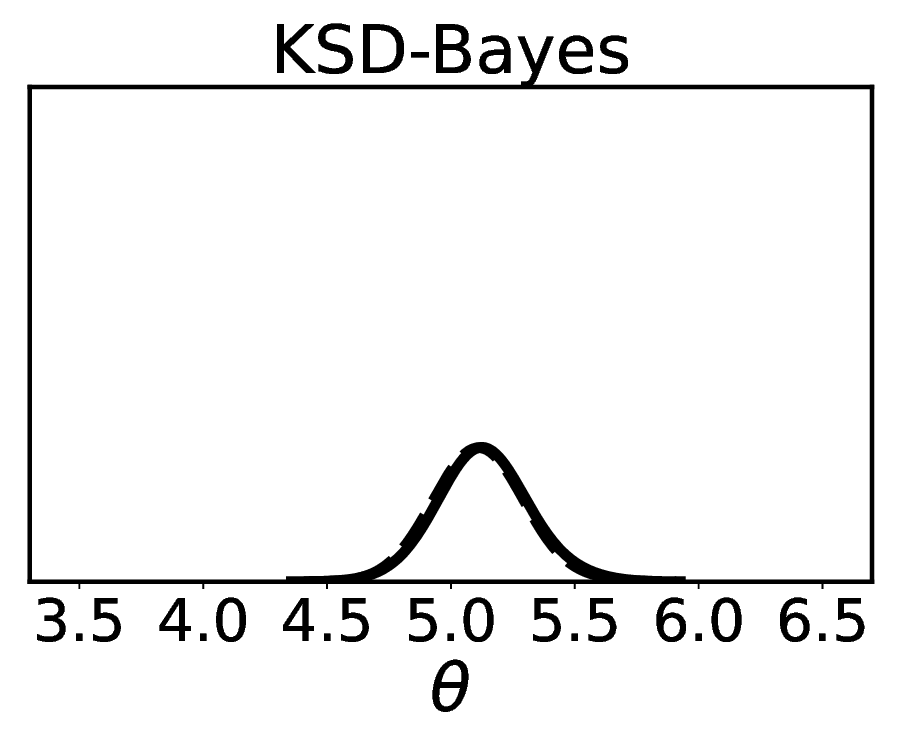}
	\hfill
	\hfill
	\caption{Posteriors of Pseudo-Bayes (left), DFD-Bayes (centre), and KSD-Bayes (right) for the Ising model in the presence of outlier with $\epsilon = 0.1$ and no outlier with $\epsilon = 0.0$.}
	\label{fig: robustness}
\end{figure}

In \Cref{fig: robustness} demonstrated that KSD-Bayes offered a correct inference outcome even when the dataset contains outliers, being less affected by the outliers. 
On the other hand, the Pseudo-Bayes and DFD-Bayes posteriors placed the majority of the probability mass on smaller $\theta$ than the correct value $\theta = 5$.
The extreme value $( 1, 1, \cdots, 1 )$ of the outliers is more likely to be drawn from the model of $\theta \ll 1$; the posteriors of Pseudo-Bayes and DFD-Bayes were thus pulled in the direction of smaller $\theta$.

\section{Calculations for Worked Examples}  \label{ap: calcs for examples}

\subsection{\Cref{asmp:derivative}  for \Cref{ex:exp_asmp}} \label{apx:example}
	The aim of this section is to establish when \Cref{asmp:derivative} is satisfied for the exponential family model in \Cref{ex:exp_asmp}.
	For better presentation, let $T_{j-}(\bm{x}) := T(\bm{x}^{j-}) - T(\bm{x})$ and $b_{j-}(\bm{x}) := b(\bm{x}^{j-}) - b(\bm{x})$ to see that $r_{j-}(\bm{x}, \theta) = \exp( \eta(\theta) \cdot T_{j-}(\bm{x}) + b_{j-}(\bm{x}) )$.
	In addition, let $T_{j+}(\bm{x}) := T(\bm{x}) - T(\bm{x}^{j+})$ and $b_{j+}(\bm{x}) := b(\bm{x}) - b(\bm{x}^{j+})$ to see that $r_{j-}(\bm{x}^{j+}, \theta) = \exp( \eta(\theta) \cdot T_{j+}(\bm{x}) + b_{j+}(\bm{x}) )$.
	It is straightforward to see that, for any $\bm{x} \in \X$,
	\begin{align*}
		\nabla_\theta r_{j-}(\bm{x}^{j+}, \theta) & = \nabla_\theta \eta(\theta) \cdot T_{j+}(\bm{x}) \, \exp( \eta(\theta) \cdot T_{j+}(\bm{x}) + b_{j+}(\bm{x}) ) \\
		& = \nabla_\theta \eta(\theta) \cdot T_{j+}(\bm{x}) \, \exp( \eta(\theta) \cdot T_{j+}(\bm{x}) ) \, \exp( b_{j+}(\bm{x}) ) \\
		\nabla_\theta ( r_{j-}(\bm{x}, \theta)^2 ) & = 2 r_{j-}(\bm{x}, \theta) \, \nabla_\theta r_{j-}(\bm{x}, \theta) \\
		& = 2 \nabla_\theta \eta(\theta) \cdot T_{j-}(\bm{x}) \, \exp( 2 \eta(\theta) \cdot T_{j-}(\bm{x}) ) \, \exp( 2 b_{j-}(\bm{x}) ) 
	\end{align*}
	By assumption, $T_{j-}(\bm{x})$ is bounded over all $\bm{x} \in \X$, which in turn shows that $T_{j+}(\bm{x}) = T_{j-}(\bm{x}^{j+})$ is bounded over all $\bm{x} \in \X$ since $\bm{x}^{j+} \in \X$.
	Further, by assumption, $\sup_{\theta \in U} \| \nabla_\theta \eta(\theta) \| < \infty$ and $\sup_{\theta \in U} \| \eta(\theta) \| < \infty$.
	Let $M$ be a constant that upper bounds all the terms $\sup_{\bm{x} \in \X} \| T_{j-}(\bm{x}) \|$, $\sup_{\bm{x} \in \X} \| T_{j+}(\bm{x}) \|$, $\sup_{\theta \in U} \| \nabla_\theta \eta(\theta) \|$ and $\sup_{\theta \in U} \| \eta(\theta) \|$.
	Then we have
	\begin{align*}
		\sup_{\theta \in U} \| \nabla_\theta r_{j-}(\bm{x}^{j+}, \theta) \| & \le M^2 \exp\left(M^2 \right) \exp( b_{j+}(\bm{x}) ) , \\
		\sup_{\theta \in U} \| \nabla_\theta ( r_{j-}(\bm{x}, \theta )^2 ) \| & \le 2 M^2 \exp\left( 2 M^2 \right) \exp( 2 b_{j-}(\bm{x}) ) .
	\end{align*}
	Taking the expectations,
	\begin{align}
		\E_{X \sim p}\left[ \sup_{\theta \in U} \| \nabla_\theta r_{j-}(X^{j+}, \theta) \| \right] & \le M^2 \exp( M^2 ) \E_{X \sim p}\left[ \exp( b_{j+}(X) ) \right] , \label{eq: to be finite 1} \\
		\E_{X \sim p}\left[ \sup_{\theta \in U} \| \nabla_\theta ( r_{j-}(X, \theta)^2 ) \| \right] & \le 2 M^2 \exp( 2 M^2 ) \E_{X \sim p}\left[ \exp( 2 b_{j-}(X) ) \right] .  \label{eq: to be finite 2}
	\end{align}
	By assumption $\E_{X \sim p}\left[ \exp( 2 b_{j-}(X) ) \right] = \E_{X \sim p}\left[ \exp( b_{j-}(X) )^2 \right] < \infty$ , and we now argue that this  also implies $\E_{X \sim p}\left[ \exp( b_{j+}(X) ) \right] < \infty$.
	Indeed, from \Cref{prop:sumbypart}, 
	\begin{align*}
		\E_{X \sim p}\left[ \exp( b_{j+}(X) ) \right] & = \sum_{\bm{x} \in \X} p(\bm{x}) \exp( b(\bm{x}) - b(\bm{x}^{j+}) ) = \sum_{\bm{x} \in \X} p(\bm{x}^{j-}) \exp( b(\bm{x}^{j-}) - b(\bm{x}) ) \\
		& = \sum_{\bm{x} \in \X} p(\bm{x}) \frac{p(\bm{x}^{j-})}{p(\bm{x})} \exp( b(\bm{x}^{j-}) - b(\bm{x}) ) = \E_{X \sim p}\left[ \frac{p(X^{j-})}{p(X)} \exp( b_{j-}(X) ) \right] .
	\end{align*}
	Now, using the Cauchy--Schwartz inequality, 
	\begin{align}
		\E_{X \sim p}\left[ \exp( b_{j+}(X) ) \right] \le \E_{X \sim p}\left[ \frac{p(X^{j-})^2}{p(X)^2} \right] \E_{X \sim p}\left[ \exp( 2 b_{j-}(X) ) \right] . \label{eq: CS appendix}
	\end{align}
	Existence of the first term in \eqref{eq: CS appendix} is implied by the Standing Assumption $(\nabla^- p) / p \in L^2( p, \R^d)$, while existence of the second term in \eqref{eq: CS appendix} was assumed.
	Therefore we have shown that \eqref{eq: to be finite 1} and \eqref{eq: to be finite 2} exist.
	Repeating an essentially identical argument, it is straightforward to see also that $\E_{X \sim p}\left[ \sup_{\theta \in U} \| \nabla_\theta^s r_{j-}(X^{j+}, \theta) \| \right] < \infty$ and $\E_{X \sim p}\left[ \sup_{\theta \in U} \| \nabla_\theta^s ( r_{j-}(X, \theta )^2 ) \| \right] < \infty$ for $s = 2, 3$ as claimed.

\subsection{Derivatives of \eqref{eq:beta_optimal} for \Cref{ex:exp_asmp}} \label{apx:example_2}
	Automatic differentiation is an attractive and promising choice to compute \eqref{eq:beta_optimal} whenever it is available.
	Nonetheless, it is still straightforward for a majority of parametric models to compute the loss derivatives used in \eqref{eq:beta_optimal}.
	This section aims to demonstrate a form of loss derivatives for a model in \Cref{ex:exp_asmp}.
	The optimal $\beta$ of \eqref{eq:beta_optimal} depends on the first and second derivative of a loss $D$ specified by users.
	Consider the discrete Fisher divergence $D_n$ that this paper established.
	The discrete Fisher divergence $D_n(\theta) = \operatorname{DFD}(p_\theta \| p_n)$ between a model $p_\theta$ in \Cref{ex:exp_asmp} and an empirical distribution $p_n$ of data $\{ \bm{x}_i \}_{i=1}^{n}$ is given as
	\begin{align*}
		D_n(\theta) = \frac{1}{n} \sum_{i=1}^{n} \sum_{j=1}^{d} \left( r_{j-}(\bm{x}, \theta) \right)^2 - 2 r_{j-}(\bm{x}^{j+}, \theta)
	\end{align*}
	For simplicity, let $\eta(\theta) = \theta$ here.
	Then $r_{j-}(\bm{x}, \theta) = \exp(\theta \cdot T_{j-}(\bm{x}) + b_{j-}(\bm{x}) )$ and $r_{j-}(\bm{x}^{j+}, \theta) = \exp(\theta \cdot T_{j+}(\bm{x}) + b_{j+}(\bm{x}) )$ using the notations in \Cref{apx:example}.
	Therefore the derivatives are
	\begin{align*}
		\nabla_\theta r_{j-}(\bm{x}, \theta) & = T_{j-}(\bm{x}) \, \exp( \theta \cdot T_{j-}(\bm{x}) + b_{j-}(\bm{x}) ) , \\
		\nabla_\theta^2 r_{j-}(\bm{x}, \theta) & = T_{j-}(\bm{x}) \otimes T_{j-}(\bm{x}) \, \exp( \theta \cdot T_{j-}(\bm{x}) + b_{j-}(\bm{x}) ) , \\
		\nabla_\theta r_{j-}(\bm{x}^{j+}, \theta) & = T_{j+}(\bm{x}) \, \exp( \theta \cdot T_{j+}(\bm{x}) + b_{j+}(\bm{x}) ) , \\
		\nabla_\theta^2 r_{j-}(\bm{x}^{j+}, \theta) & = T_{j+}(\bm{x}) \otimes T_{j+}(\bm{x}) \, \exp( \theta \cdot T_{j+}(\bm{x}) + b_{j+}(\bm{x}) )
	\end{align*}
	where $\otimes$ denotes outer product.
	Built upon these components, we have the required first derivatives of $D_n(\theta)$
	\begin{align*}
		\nabla_\theta D_n(\theta) & = \frac{1}{n} \sum_{i=1}^{n} \sum_{j=1}^{d} \nabla_\theta \left( r_{j-}(\bm{x}, \theta)^2 \right) - 2 \nabla_\theta r_{j-}(\bm{x}^{j+}, \theta) \\
		& = \frac{1}{n} \sum_{i=1}^{n} \sum_{j=1}^{d} 2 r_{j-}(\bm{x}, \theta) \nabla_\theta r_{j-}(\bm{x}, \theta) - 2 \nabla_\theta r_{j-}(\bm{x}^{j+}, \theta) \\
		& = \frac{2}{n} \sum_{i=1}^{n} \sum_{j=1}^{d} T_{j-}(\bm{x}) \, \exp( \theta \cdot T_{j-}(\bm{x}) + b_{j-}(\bm{x}) )^2 - T_{j+}(\bm{x}) \, \exp( \theta \cdot T_{j+}(\bm{x}) + b_{j+}(\bm{x}) )
	\end{align*}
	as well as the second derivative of $D_n(\theta)$
	\begin{align*}
		\nabla_\theta^2 D_n(\theta) & = \frac{1}{n} \sum_{i=1}^{n} \sum_{j=1}^{d} \nabla_\theta \left( 2 r_{j-}(\bm{x}, \theta) \nabla_\theta r_{j-}(\bm{x}, \theta) \right) - \nabla_\theta \left( 2 \nabla_\theta r_{j-}(\bm{x}^{j+}, \theta) \right) \\
		& = \frac{2}{n} \sum_{i=1}^{n} \sum_{j=1}^{d} \nabla_\theta r_{j-}(\bm{x}, \theta) \otimes \nabla_\theta r_{j-}(\bm{x}, \theta) + r_{j-}(\bm{x}, \theta) \nabla_\theta^2 r_{j-}(\bm{x}, \theta) - \nabla_\theta^2 r_{j-}(\bm{x}^{j+}, \theta) \\
		& = \frac{2}{n} \sum_{i=1}^{n} \sum_{j=1}^{d} 2 T_{j-}(\bm{x}) \otimes T_{j-}(\bm{x}) \, \exp( \theta \cdot T_{j-}(\bm{x}) + b_{j-}(\bm{x}) )^2 \\
            & \hspace{175pt} - T_{j+}(\bm{x}) \otimes T_{j+}(\bm{x}) \, \exp( \theta \cdot T_{j+}(\bm{x}) + b_{j+}(\bm{x}) ) .
	\end{align*}
	Plugging these derivatives $\nabla_\theta D_n(\theta)$ and $\nabla_\theta^2 D_n(\theta)$ and a given $\nabla_\theta \log \pi(\theta)$ in \eqref{eq:beta_optimal}, the optimal $\beta$ is computed.

\subsection{\Cref{asmp:derivative} for Poisson, Ising, and Conway-Maxwell-Poisson Models} \label{apx:example_3}
	\Cref{asmp:derivative} for the Poisson and Ising models used in the experiments can be verified as a special case of \Cref{ex:exp_asmp}.
	Any Poisson model can be written in the form 
	\begin{align*}
		p_\theta(x) \propto \exp\bigg( \log(\theta_1) \; x - \sum_{k=1}^{x} \log (k) \bigg) .
	\end{align*}
	This falls into a class of exponential family in \Cref{ex:exp_asmp} by setting $\eta(\theta) = \log(\theta)$, $T(x) = x$, and $b(x) = - \sum_{k=1}^{x} \log (k)$.
	This gives that $T(x-1) - T(x) = -1$ and $b(x-1) - b(x) = \log(x)$.
	The condition in \Cref{ex:exp_asmp} is satisfied provided that $\E_{X \sim p}[ \exp( \log (X) )^2 ] = \E_{X \sim p}[ X^2 ] < \infty$, i.e.~$p$ has a second moment.
	Similarly, any Ising model can be written in the form 
	\begin{align*}
		p_\theta(x) \propto \exp( \theta \cdot T(\bm{x}) )
	\end{align*}
	where $T : \X \rightarrow \R^k$ is a vector of summary statistics that define the model.
	For Ising models, $\X$ is of finite cardinality and $T(\bm{x})$ is hence bounded for any $\bm{x} \in \X$.
	The conditions in \Cref{ex:exp_asmp} are then automatically satisfied.

	The Conway-Maxwell-Poisson model falls into a class of exponential family but it is beyond the simplified case of \Cref{ex:exp_asmp}.
	Nonetheless, \Cref{asmp:derivative} is still verifiable.
	Recall that the Conway-Maxwell-Poisson model has the form $p_\theta(x) \propto (\theta_1)^x (x!)^{-\theta_2}$ whose ratio function is given by $r_{j-}(x, \theta) = p_\theta(x-1) / p_\theta(x) = x^{\theta_2} / \theta_1 $ where $\theta_1, \theta_2 \in (0, \infty)$.
	The derivative of the ratio with respect to $\theta = (\theta_1, \theta_2)$ is then given by
	\begin{align*}
		\nabla_\theta r_{j-}(x+1, \theta) = \left( - \frac{ (x+1)^{\theta_2} }{ \theta_1^2 } , \frac{ (x+1)^{\theta_2} \log (x+1) }{ \theta_1 } \right), \quad \nabla_\theta ( r_{j-}(x, \theta) )^2 & = \left( - \frac{ x^{2 \theta_2} }{ \theta_1^3 } , \frac{ x^{2 \theta_2} \log x }{ \theta_1^2 } \right) .
	\end{align*}
	Note that the term $x^{2 \theta_2} \log x$ in $\nabla_\theta ( r_{j-}(x, \theta) )^2$ is well-defined even at $x = 0$ since it converges to $0$ as $x \to 0$ if $\theta_2 > 0$ despite the individual term $\log x$ alone is not well-defined for $x = 0$.
	Let $M_1$ and $M_2$ be the infimum value of $\theta_1$ and the supremum value of $\theta_2$ for $(\theta_1, \theta_2)$ in the bounded set $U$ to see that
	\begin{align*}
		\sup_{\theta \in U} \| \nabla_\theta r_{j-}(x+1, \theta) \| & = \left| \frac{ (x+1)^{M_2} }{ M_1^2 } \right| + \left| \frac{ (x+1)^{M_2} \log (x+1) }{ M_1 } \right| , \\
		\sup_{\theta \in U} \| \nabla_\theta ( r_{j-}(x, \theta) )^2 \| & = \left| \frac{ x^{2 M_2} }{ M_1^3 } \right| + \left| \frac{ x^{2 M_2} \log x }{ M_1^2 } \right| .
	\end{align*}
	We can derive the same quantity up to constants in the power exponent of each term for the second and third derivative.
	Then \Cref{asmp:derivative} imposes that expectations of these quantities with respect to the data generating distribution $x \sim p$ are finite.
	For example, the expectations for the first derivatives are
	\begin{align*}
		\E_{X \sim p}\left[ \sup_{\theta \in U} \| \nabla_\theta r_{j-}(X+1, \theta) \| \right] & = \frac{ 1 }{ M_1^2 } \E_{X \sim p}\left[ \left| (x+1)^{M_2} \right| \right] + \frac{1}{M_1} \E_{X \sim p}\left[ \left| (x+1)^{M_2} \log (x+1) \right| \right] , \\
		\E_{X \sim p}\left[ \sup_{\theta \in U} \| \nabla_\theta ( r_{j-}(x, \theta) )^2 \| \right] & = \frac{ 1 }{ M_1^3 } \E_{X \sim p}\left[ \left| x^{2 M_2} \right| \right] + \frac{1}{M_1^2} \E_{X \sim p}\left[ \left| x^{2 M_2} \log x \right| \right] ,
	\end{align*}
	where the boundedness is translated into the moment condition of $p$ as above.

\section{Details of Experimental Assessment} \label{app:add_exp}

This appendix contains full details for the experiments that were reported in the main text.

\subsection{Conway--Maxwell--Poisson Model}
\label{subsec: CMP extra}

\subsubsection{Settings for KSD-Bayes}
\label{subsec: CMP extra KSD Bayes}

KSD-Bayes is a generalised posterior constructed by taking a kernel Stein discrepancy as a loss function; see \citep{Matsubara2021}.
The approach requires us to specify a kernel function $k: \X \times \X \to \R$, based on which the kernel Stein discrepancy is constructed.
In these experiments, we adopted a kernel recommended by \cite{Yang2018} for the kernel Stein discrepancy in discrete domains $\X$ given by
\begin{align*}
	k(\bm{x}, \bm{x}') = \exp\left( - \frac{1}{d} \sum_{i=1}^{d} \mathbbm{1}(x_i = x_i') \right)
\end{align*}
where $\mathbbm{1}$ is an indicator function, taking values in $\{0,1\}$.
The effect of kernel choice is difficult to predict in the discrete context; for example, \cite{Yang2018} found that the closely related kernel $k(\bm{x}, \bm{x}') = \sum_{i=1}^{d} \mathbbm{1}(x_i = x_i')$, can perform poorly in moderate-to-high dimensions $d$ when employed in a Stein discrepancy.
General principles for kernel choice in the discrete setting have not yet been established.
Thus, one of the advantages of DFD-Bayes is absence of any user-specified parameters of the method.

\subsubsection{Markov Chain Monte Carlo}
\label{subsec: CMP extra MCMC}

A Metropolis--Hasting algorithm was employed to sample from the standard Bayesian posterior, as well as KSD-Bayes and DFD-Bayes.
For computational convenience, the parametrisation $\tilde{\theta}_1 = \log(\theta_1)$ and $\tilde{\theta}_2 = \log(\theta_2)$ was applied so that parameters are defined on an unbounded domain $\tilde{\theta} = (\tilde{\theta}_1, \tilde{\theta}_2) \in \R^2$.
An isotropic Gaussian random walk proposal with covariance $\sigma^2 I$ was employed, with $\sigma = 0.1$ used for all experiments.
The convergence of the Markov chain was diagnosed using univariate Gelman--Rubin statistics for each $\theta_1$ and $\theta_2$ computed from 10 independent chains.
In total, 500 samples were obtained from each chain by thinning 5,000 samples, all after an initial burn-in of length 5,000.
In all cases, the univariate Gelman--Rubin statistics were below $1.02$, respectively for $\theta_1$ and $\theta_2$.

\subsubsection{Sales Dataset of \citet{Shmueli2005}}
\label{subsec: CMP extra dataset}

This dataset consists of quarterly sales figures for a particular item of clothing, taken across the different stores of a large national retailer.
The original dataset is publicly available at \url{https://www.stat.cmu.edu/COM-Poisson/Sales-data.html}; see \citet{Shmueli2005}.
Quarterly sales at each store can be small and result in a large proportion of $0$ entries in the dataset, so that the Conway--Maxwell--Poisson model has a clear advantage against the standard Poisson model.

To obtain a maximum \textit{a posteriori} estimate for the parameters of the Conway--Maxwell--Poisson model for this sales dataset, \citet{Shmueli2005} considered a prior $\pi$ defined by
\begin{align}
	\pi(\theta) \propto \theta_1^{a - 1} \exp( - b \theta_2) \left( \sum_{j=1}^{\infty} \theta_1^j / (j!)^{\theta_2} \right)^{-c} \kappa(a, b, c) \label{eq:cmp_prior}
\end{align}
where $(a,b,c)$ is the hyper-parameter and $\kappa(a, b, c)$ is the normalising constant of $\pi$.
The motivation to use this prior is conjugacy, since the resulting posterior takes the same form as \eqref{eq:cmp_prior}.
However, the prior itself contains the intractable terms $( \sum_{j=1}^{\infty} \theta_1^j / (j!)^{\theta_2} )^{-c}$ and $\kappa(a, b, c)$.
To avoid this additional intractability, which is not a focus of the present work, we considered a simpler chi-squared prior distribution in the main text.

\subsection{Ising Model}
\label{subsec: Ising extra}

\subsubsection{Simulating Data from the Ising Model}
\label{subsubsec: Ising data}

Samples from the Ising model were obtained using the same Metropolis--Hasting algorithm used in \cite{Yang2018}.
First, all coordinates $x_i$ of $\bm{x}$ were randomly initialised to either $-1$ or $1$ with equiprobability $1/2$.
Then, at each iteration, we randomly select one coordinate $x_i$ of $\bm{x}$ and flip the value of $x_i$ either from $-1$ to $1$ or from $1$ to $-1$, where the flipped value $\tilde{x}_i$ is accepted with probability $\min( 1, \exp( -2 \tilde{x}_i \sum_{j \in \mathcal{N}_i} x_j / \theta ) )$ and otherwise rejected.
For the experiments in this paper we ran $n = 1,000$ chains in parallel, in each case taking the final state at iteration $100,000$.
This algorithm was used due to its implementational simplicity, rather than its efficiency, and we note that more sophisticated Markov chain Monte Carlo algorithms are available \citep[e.g.][]{elcci2018lifted}.

\subsubsection{Settings for KSD-Bayes}
\label{subsec: Ising KSD}

The same choice of kernel as \Cref{subsec: CMP extra KSD Bayes} is used.

\subsubsection{Markov Chain Monte Carlo} \label{subsec: Ising extra MCMC}

The same Metropolis--Hasting algorithm as \Cref{subsec: CMP extra MCMC} was used, in this case in dimension $p = 1$ with proposal standard deviation $\sigma = 0.1$.
The convergence of the Markov chain was again diagnosed using univariate Gelman--Rubin statistics computed from 10 independent chains.
In total, 100 samples were obtained after thinning from 2000 samples, with an initial burn-in of length 2000.
In all cases, the univariate Gelman--Rubin statistics were below $1.002$.

\subsection{Multivariate Count Data}
\label{subsec: multivar extra}

\subsubsection{Description of the Dataset}
\label{subsubsec: multivar dataset}

The original data were gathered by the Cancer Genome Atlas Program, run by the National Cancer Institute in the United States, who have built large-scale genomic profiles of cancer patients with the aim to discover the genetic substructures of cancer \citep{Wan2015}.
It contains molecular profiles of biological samples of more than 30 cancer types e.g.~measured via RNA sequencing technology.
The raw data were pre-processed using the TCGA2STAT software developed by \cite{Wan2015}.
\cite{inouye2017review} studied a subset of these data relevant to breast cancer, consisting of a total count of each gene profile found in biological samples.
They applied a ``log-count" transform, a common preprocessing technique for RNA sequencing data, for every datum, that is a floor function of a log transformed value of the datum.
Gene profiles were then sorted by variance of the counts in descending order, with the top 10 gene profiles constituting the final dataset.
The preprocessed data studied in \cite{inouye2017review} can be found in \url{https://github.com/davidinouye/sqr-graphical-models}.

\subsubsection{Markov Chain Monte Carlo}
\label{subsec: multivar extra MCMC}

The Metropolis-Hasting Markov Chain Monte Carlo was applied for this experiment.
The detail for the Conway--Maxwell--Poisson graphical model is described first as the Poisson graphical model is the special case.
For computational convenience, we work with the square of the interaction and dispersion parameters, i.e. $\tilde{\theta}_{i,j} := \theta_{i,j}^2$ and $\tilde{\theta}_{0,i} = \theta_{0,i}$, which modify the model as
\begin{align*}
	p_\theta(\bm{x}) \propto \exp\left( \sum_{i=1}^d \theta_i x_i - \sum_{i=1}^d \sum_{j \in \mathcal{M}_i} \tilde{\theta}_{i,j}^2 x_i x_j - \sum_{i=1}^d \tilde{\theta}_{0,i}^2 \log(x_i!) \right)
\end{align*}
The domain of each original parameter $\theta_{i,j}$ and $\theta_{0,j}$ is $[0,\infty)$.
With this modification, $\tilde{\theta}_{i,j}$ and $\tilde{\theta}_{0,i}$ can be extended to $\R$, making the model $p_\theta(\bm{x})$ differentiable with respect to $\theta \in \R^v$.
The derivatives of the corresponding $\operatorname{DFD}$-Bayes posterior is then available to implement an efficient gradient-based Markov chain Monte Carlo method.
We place a standard normal distribution as a prior on each $\theta_i$, a normal distribution with mean $0$ and scale $( d (d - 1) / 2 )^{-1}$ as a prior on each $\tilde{\theta}_{i,j}$, and a standard normal distribution as a prior on each $\tilde{\theta}_{0,i}$, that corresponds to the original priors of each $\theta_i$, $\theta_{i,j}$, and $\theta_{0,j}$.
The small scale of the half normal distribution prior on $\tilde{\theta}_{i,j}$ was chosen to suppress rapid increase of the quadratic term $x_i x_j$ as opposed to the linear term $x_i$ in the first summation.
After the Markov chain finished, the absolute value was taken for the sampled values of $\tilde{\theta}_{i,j}$ and $\tilde{\theta}_{0,i}$ to convert them as the original parameters $\theta_{i,j}$ and $\theta_{0,j}$.
The same setting is applied for the Poisson graphical model by fixing the dispersion parameter $\tilde{\theta}_{0,i} = \theta_{0,i} = 1$.

A No-U-Turn Sampler was used to approximate the DFD-Bayes posterior of both the models.
In total, $100$ points were obtained thinning from $5,000$ samples, with an initial burn-in of length $5,000$.
The posterior predictive of each model $p_\theta(\bm{x})$ was computed by generating $500,000$ samples from $p_\theta(\bm{x})$ at every $\theta$ sampled from the DFD-Bayes posterior.
Each $500,000$ predictive samples were thinned to $878$ points to make it comparable with the original data of $n = 878$.
The number of bootstrap minimisers $B$ used to calibrate $\beta$ for this experiment was $B = 100$.

\subsubsection{Gradient of the Discrete Fisher Divergence}
\label{subsec: DFD gradient}

For a model $p_\theta(\bm{x})$, denote the normalisation constant by $C(\theta)$ and the non-normalised part by $q_\theta(\bm{x})$, so that $p_\theta(\bm{x}) = q_\theta(\bm{x}) / C(\theta)$.
The discrete Fisher divergence is differentiable whenever the non-normalised part $q_\theta(\bm{x})$ is differentiable with respect to $\theta$ at any $\bm{x} \in \X$.
Indeed, the discrete Fisher divergence between a model $p_\theta$ and data $p_n$ is given by
\begin{align*}
\text{DFD}(p_\theta \| p_n) & \overset{\theta}{=} \frac{1}{n} \sum_{i=1}^{n} \sum_{j=1}^{d} \bigg( \frac{p_\theta(\bm{x}_i^{j-})}{p_\theta(\bm{x}_i)} \bigg)^2 - 2 \bigg( \frac{p_\theta(\bm{x}_i)}{p_\theta(\bm{x}_i^{j+})} \bigg) \\
& \overset{\theta}{=} \frac{1}{n} \sum_{i=1}^{n} \sum_{j=1}^{d} \bigg( \frac{q_\theta(\bm{x}_i^{j-})}{q_\theta(\bm{x}_i)} \bigg)^2 - 2 \bigg( \frac{q_\theta(\bm{x}_i)}{q_\theta(\bm{x}_i^{j+})} \bigg) 
\end{align*}
where the $\theta$-independent term is ignored and the equality holds because the normalising constant $C(\theta)$ is cancelled out.
By routine calculation, the gradient of the discrete Fisher divergence is further given by
\begin{multline*}
	\nabla_\theta \text{DFD}(p_\theta \| p_n) = \frac{1}{n} \sum_{i=1}^{n} \sum_{j=1}^{d} 2 \bigg( \frac{q_\theta(\bm{x}_i^{j-})}{q_\theta(\bm{x}_i)} \bigg) \bigg( \frac{\nabla_\theta q_\theta(\bm{x}_i^{j-}) q_\theta(\bm{x}_i) - q_\theta(\bm{x}_i^{j-}) \nabla_\theta q_\theta(\bm{x}_i) }{q_\theta(\bm{x}_i)^2} \bigg) \\
	- 2 \bigg( \frac{\nabla_\theta q_\theta(\bm{x}_i) q_\theta(\bm{x}_i^{j+}) - q_\theta(\bm{x}_i) \nabla_\theta q_\theta(\bm{x}_i^{j+})}{q_\theta(\bm{x}_i^{j+})^2} \bigg) .
\end{multline*}
Therefore, the gradient of the discrete Fisher divergence is well-defined as long as $q_\theta(\bm{x}_i) \ne 0$ and $q_\theta(\bm{x}_i^{j+}) \ne 0$, which in any case are prerequisites for computation of the discrete Fisher divergence.

\end{document}